\algnewcommand\algorithmicparfor{\textbf{parfor}}
\algnewcommand\algorithmicpardo{\textbf{do}}
\algnewcommand\algorithmicendparfor{\textbf{end\ parfor}}
\newcommand{\dist}{\mbox{\rm dist}}
\newtheorem{theorem}{Theorem}[section]
\newtheorem{lemma}[theorem]{Lemma}
\newtheorem{meta-theorem}[theorem]{Meta-Theorem}
\newtheorem{claim}[theorem]{Claim}
\newtheorem{corollary}[theorem]{Corollary}
\newtheorem{observation}[theorem]{Observation}
\newtheorem{definition}[theorem]{Definition}
\newcommand{\Oish}{\widetilde{O}}
\def\LastE{\mbox{\tt LastE}}
\def\Shortcut{\mbox{\tt Shortcut}}
\def\VertexFTSpanner{\mbox{\tt VFTSpanner}}
\begin{document}
\date{}

\title{\~{O}ptimal Vertex Fault-Tolerant Spanners in \~{O}ptimal Time:\\ Sequential, Distributed and Parallel}
\author{	
Merav Parter \thanks{This project is partially funded by the European Research Council (ERC) under the European Union’s Horizon 2020 research and innovation programme, grant agreement No. 949083, and the Israeli Science Foundation (ISF), grant 2084/18.}\\
        \small Weizmann Institute \\
        \small merav.parter@weizmann.ac.il
}

\maketitle

\begin{abstract}
We (nearly) settle the time complexity for computing vertex fault-tolerant (VFT) spanners with optimal sparsity (up to polylogarithmic factors). VFT spanners are sparse subgraphs that preserve distance information, up to a small multiplicative stretch, in the presence of vertex failures. These structures were introduced by [Chechik et al., STOC 2009] and have received a lot of attention since then. 

Recent work provided algorithms for computing VFT spanners with \emph{optimal} sparsity but in exponential runtime. 
The first polynomial time algorithms for these structures have been given by [Bodwin, Dinitz and Robelle, SODA 2021]. Their algorithms, as all other prior algorithms, are greedy and thus inherently sequential.
We provide algorithms for computing nearly optimal $f$-VFT spanners for any $n$-vertex $m$-edge graph, with near optimal running time in several computational models:

\begin{itemize}
\item A randomized sequential algorithm with a runtime of $\widetilde{O}(m)$ (i.e., independent in the number of faults $f$). The state-of-the-art time bound is $\widetilde{O}(f^{1-1/k}\cdot n^{2+1/k}+f^2 m)$ by [Bodwin, Dinitz and Robelle, SODA 2021]. 

\item  A distributed congest algorithm of $\widetilde{O}(1)$ rounds. Improving upon [Dinitz and Robelle, PODC 2020] that obtained FT spanners with near-optimal sparsity in $\widetilde{O}(f^{2})$ rounds. 

\item  A PRAM (CRCW) algorithm with $\widetilde{O}(m)$ work and $\widetilde{O}(1)$ depth. Prior bounds implied by [Dinitz and Krauthgamer, PODC 2011] obtained sub-optimal FT spanners using $\widetilde{O}(f^3m)$ work and $\widetilde{O}(f^3)$ depth. 

\end{itemize}

An immediate corollary provides the first nearly-optimal PRAM algorithm for computing nearly optimal $\lambda$-\emph{vertex} connectivity certificates using polylogarithmic depth and near-linear work. This improves the state-of-the-art parallel bounds of $\widetilde{O}(1)$ depth and $O(\lambda m)$ work, by [Karger and Motwani, STOC'93].
\end{abstract}
\newpage
\tableofcontents

\newpage

\section{Introduction}
This paper is concerned with time-efficient algorithms for computing optimal vertex fault-tolerant (FT) spanners, i.e., of optimal (or nearly optimal) sparsity. Graph spanners introduced by Peleg and Sch{\"{a}}ffer (\cite{PelegS:89,PelegU:89}) are sparse subgraphs that preserve the shortest path metric, up to a small multiplicative stretch. A landmark result of Alth\"ofer et al.~\cite{AlthoferDDJS:93} proved that for any integer $k \geq 1$, every $n$-vertex graph $G=(V, E)$ has a $(2k-1)$-spanner $H \subseteq G$ with $O(n^{1+1/k})$ edges. This tradeoff is believed to be tight by the girth conjecture of Erd\H{o}s \cite{erdHos1964extremal}.  Spanners have a wide-range of applications in routing \cite{PelegU:89-routing}, synchronizers \cite{awerbuch1990network}, distance oracles \cite{thorup2005approximate}, graph sparsifiers \cite{kapralov2012spectral} and preconditioning of linear systems \cite{elkin2008lower}.  

\vspace{-5pt}
\subsection{Vertex Fault Tolerant Spanners}\vspace{-3pt}
Many of the applications of spanners arise in the context of distributed networks which are inherently prune to failures of edges and vertices. It is then desirable to obtain robust spanners that maintain their functionality in the presence of faults. \emph{Fault-tolerant (FT) spanners} provide this guarantee by containing a spanner in $G \setminus F$ for any possible small subset $F \subset V$. These structures were introduced in the context of geometric graphs by Levcopoulos et al. \cite{levcopoulos1998efficient}, Czumaj and Zhao \cite{czumaj2004fault}, and later on, for general graphs by Chechik et al. \cite{ChechikLPR:10}. 

\begin{definition}[Vertex $f$-FT $t$-Spanners]
For a given $n$-vertex (possibly weighted) graph $G=(V,E)$, a subgraph $H \subseteq G$ is a vertex $f$-FT $t$-spanner for $G$ if 
$$\dist_{H \setminus F}(u,v)\leq t \cdot \dist_{G \setminus F}(u,v), \forall u,v, F \in V \times V \times V^{\leq f}.$$
\end{definition}
FT-spanners have attracted a lot of attention since their introduction. The initial efforts went into pinning the existentially optimal size bounds as a function of $n,t$ and $f$. The focus of many of the recent works is in the computational time aspects for computing these structures.

\vspace{-7pt}
\paragraph{The quest for FT spanners with optimal size.} The first results on FT spanners were given by Chechik et al. \cite{ChechikLPR:10} that presented an ingenious extension of the Thorup-Zwick algorithm \cite{thorup2005approximate} for obtaining $f$-FT $(2k-1)$-spanners with $\widetilde{O}(k^f n^{1+1/k})$ edges\footnote{The notation $\widetilde{O}(\cdot)$ hides polylogarithmic terms in $n$.}. In a subsequent work, Dinitz and Krauthgamer \cite{DinitzK:11} presented a general simulation result that translates any (fault-free) $(2k-1)$ spanner algorithm into an $f$-FT $(2k-1)$ spanner algorithm while paying an overhead of $\widetilde{O}(f^{2-1/k})$ in the spanner size. The quest for optimal FT-spanners has been marked by Bodwin et al. \cite{BDPW18} and Bodwin and Patel \cite{BP19}. By analyzing the output spanner of exponential-time greedy algorithms, \cite{BDPW18,BP19} obtained the desired size bound of $O(f^{1-1/k}n^{1+1/k})$ edges. These bounds were also shown in  \cite{BDPW18} to be existentially tight conditioned on the Erd\H{o}s girth conjecture. Obtaining the same (or similar) size bounds in polynomial time was mentioned as an important open question in \cite{BDPW18,BP19}.

\vspace{-10pt}
\paragraph{The quest for optimal FT spanners in optimal time.} Unlike the (fault-free) greedy algorithm of Alth\"ofer et al.~\cite{AlthoferDDJS:93}, the naive implementation of the FT-greedy algorithms of \cite{BDPW18,BP19} requires exponential time, in the number of faults $f$. Dinitz and Robelle \cite{DR20} presented an elegant implementation of these greedy algorithms to run in $O(k \cdot f^{2-1/k}n^{1+1/k}\cdot m)$ time, and with nearly optimal sparsity of $O(k f^{1-1/k}n^{1+1/k})$ edges. In a more recent work, Bodwin, Dinitz and Robelle \cite{BodwinDR21} obtained truly optimal spanners in time  $\widetilde{O}(f^{1-1/k}\cdot n^{2+1/k}+m f^2)$. Their algorithm, as all prior algorithms for optimal VFT-spanners, is greedy (with some slack), and its efficient implementation exploits the blocking set technique, first introduced in \cite{BP19}. See Table \ref{tab:priorwork} for a summary of the state-of-the art bounds for VFT-spanners. 
As (non-faulty) spanners can be computed in nearly linear time (e.g., the Baswana-Sen algorithm \cite{BaswanaS:07}), in their paper, Bodwin, Dinitz and Robelle ask:
\vspace{-4pt}
\begin{quote}\cite{BodwinDR21}
\emph{Is it possible to compute optimal-size fault-tolerant spanners in time $\widetilde{O}(m)$?}
\end{quote}
\vspace{-4pt}
In this work, we answer this question in the affirmative up to paying an extra poly-logarithmic term in the size bound of the optimal FT spanners.
%

\begin{table*}[t]
\begin{center}

\begin{tabular}{lllcl}
\textbf{Spanner Size} & \textbf{Sequential time} & \textbf{Greedy?} & \textbf{Distributed time} & \textbf{Citation} \\
        $\Oish \left(k^{O(f)} \cdot n^{1+1/k} \right)$  & $\Oish \left(k^{O(f)} \cdot n^{3+1/k} \right)$ & & NA & \cite{ChechikLPR:10} \\
        $\Oish \left(f^{2 - 1/k} \cdot n^{1+1/k} \right)$ & {\color{blue} $\Oish\left( f^{2 - 2/k} \cdot m \right)$} & & {\color{purple}$\Oish(f^2)$} & \cite{DinitzK:11,DR20}\\
        $O \left( \exp(k) f^{1 - 1/k} \cdot n^{1+1/k} \right) $ & $O\left( \exp(k) \cdot m n^{O(f)} \right)$ & \checkmark{} & NA &\cite{BDPW18} \\
        \color{red} $O \left( f^{1 - 1/k} \cdot n^{1+1/k} \right)$ & $O\left( mn^{O(f)} \right)$ & \checkmark{} & NA & \cite{BP19} \\
       \color{red} $O \left( k f^{1 - 1/k} \cdot n^{1+1/k} \right)$ & \color{blue}{$\Oish\left( f^{2 - 1/k} \cdot m n^{1+1/k} \right)$} & (\checkmark{}) & NA & \cite{DR20} \\
        \color{red} $O \left( f^{1 - 1/k} \cdot n^{1+1/k} \right)$ & \color{blue}$\Oish\left( f^{1 - 1/k} n^{2+1/k} + mf^2\right)$ & (\checkmark{}) & NA &\cite{BodwinDR21}\\
				
        \color{red} $\Oish \left( f^{1 - 1/k} \cdot n^{1+1/k} \right)$ & \color{blue}$\Oish(m)$ &  & {\color{purple} $\Oish(1)$} &\textbf{(this paper)}
    \end{tabular}
\caption{\label{tab:priorwork} Prior work on $f$-VFT $(2k-1)$-spanners of weighted input graphs on $n$ vertices and $m$ edges (based on Table 1 in \cite{BodwinDR21}).  Size bounds in red are (nearly) existentially optimal, and computational time in blue are polynomial.  The (\checkmark{}) entries indicate a greedy-like algorithm.}
\end{center}
\vspace{-15pt}
\end{table*}
\vspace{-7pt}

\paragraph{Distributed and Parallel Algorithms.} Previous work on optimal FT spanners has focused on their centralized (or sequential) construction. As all prior algorithms for these structures are greedy, it is unclear how to implement them in a distributed and parallel environments. Indeed, despite the fact that the key motivation for fault tolerant spanners comes from distributed networks, currently we are lacking time-efficient algorithms for optimal FT spanners. The known algorithms by Dinitz and Krauthgamer \cite{DinitzK:11} and Dinitz and Robelle \cite{DR20} provide FT-spanners with sub-optimal size of $\widetilde{O}(f^{2-1/k}n^{1+1/k})$ edges, and using $\widetilde{O}(f^{2-1/k})$ congest rounds \cite{Peleg:2000}. We note that even for the simpler setting of edge-FT spanners (resilient to $f$ edge faults), currently there are no local solutions, i.e., with $\widetilde{O}(1)$ congest rounds, even when settling for spanners with sub-optimal sparsity\footnote{The only known distributed algorithm for edge $f$-FT spanners (obtained by \cite{ChechikLPR:10}) in  runs in $\widetilde{O}(f)$ congest rounds and obtains spanners with $\widetilde{O}(fn^{1+1/k})$ edges.}. Altogether, the existing distributed constructions for FT spanners provide sub-optimal FT spanners while using $\Omega(f)$ number of rounds. This was also posed as an important problem\footnote{The problem was also posed in the Advances in Distributed Graph Algorithms (ADGA) workshop 2021 \cite{Dinitz20}.} in \cite{BodwinDR21}. 
\begin{quote}\cite{BodwinDR21}
\emph{The greedy algorithm is typically difficult to parallelize or to implement efficiently distributedly (particularly in the the presence of congestion). So there is the obvious question of computing optimal-size fault-tolerant spanners efficiently in these models.} 
\end{quote}
To our knowledge, the situation in the PRAM setting is similar. The only known algorithm, implicit by \cite{DinitzK:11} provides $f$-VFT $(2k-1)$ spanners with $\widetilde{O}(f^{2-1/k}n^{1+1/k})$ edges, $\widetilde{O}(f^{3-1/k})$ depth and $\widetilde{O}(f^{3-1/k}m)$ work.

\subsection{Connectivity Certificates} \vspace{-3pt}
A closely related graph structure for FT-spanners is the $\lambda$-vertex (or edge) connectivity certificates, which, roughly speaking, provide a succinct ``proof" for the $\lambda$ connectivity of the graph.  Formally, a $\lambda$-vertex (edge) connectivity certificate $H \subseteq G$ satisfies that $H$ $\lambda$-vertex (edge) connected iff $G$ is $\lambda$-vertex (edge) connected. Connectivity certificates play a key role in almost any minimum cut algorithm, a sample includes \cite{Matula93,CheriyanKT93,KargerM97,GhaffariK13,DagaHNS19,ForsterNYSY20,LiNPSY21}. 

Connectivity certificates were introduced by Nagamochi and Ibaraki \cite{NagamochiI92} that demonstrated the existence of $\lambda$-connectivity certificates with $\lambda(n-1)$ edges for every $n$-vertex undirected graph. A $\lambda$-connectivity certificate with  $O(\lambda n)$ edges is denoted\footnote{Computing certificates of minimum size is known to be NP-complete, and therefore we settle for constant approximation of $O(\lambda n)$ edges.} as \emph{sparse}.
\cite{NagamochiI92} also presented a linear time (sequential) algorithm for sparse vertex and edge certificates.  The basic meta-algorithm of \cite{NagamochiI92} is based upon the computation of $\lambda$ edge-disjoint maximal forests, computed sequentially in $\lambda$ iterations. A na\"ive implementation of this algorithm leads to an inherent linear dependency in the running time. This dependency is currently avoided only in the sequential setting.  

\vspace{-5pt}
\paragraph{PRAM and Distributed Algorithms.}  Due to their algorithmic importance, the computation of sparse connectivity certificates in the parallel and distributed settings has attracted much attention over the years. The state-of-the-art PRAM bounds for vertex connectivity certificates are given by Karger and Motwani \cite{KargerM97}  that obtain sparse $\lambda$-vertex certificates in logarithmic depth and $\widetilde{O}(\lambda m)$ work. See Table \ref{tab:priorwork-certificate} for a summary of the PRAM complexity for this problem. We note that for $\lambda$-\emph{edge} certificates (that preserve the \emph{edge} connectivity), Daga et al. \cite{DagaHNS19} provided only recently a PRAM algorithm with $\widetilde{O}(1)$ depth and total of $\widetilde{O}(m)$ work. No such algorithm is currently known for $\lambda$-\emph{vertex} certificates. 

In the context of distributed computing, \cite{Thurimella95} provided the first distributed algorithms for sparse certificates in the congest model. A na\"ive implementation of the sequential algorithm by \cite{NagamochiI92,CheriyanKT93} leads to an $\widetilde{O}(\lambda(\sqrt{n}+D))$-round algorithm, where $D$ is the diameter of the input graph $G$. For $\lambda$-\emph{edge} certificates, Daga et al. \cite{DagaHNS19} provided a $\widetilde{O}(\sqrt{n\lambda}+D)$-round algorithm for any $\lambda=O(n^{1-\epsilon})$. The author \cite{Par19} provided a $\widetilde{O}(\lambda)$-round algorithm for computing sparse $\lambda$-edge certificates. No such algorithms are known for sparse vertex certificates. In this work, we fill in the missing gap and provide truly local distributed and parallel algorithms for \emph{vertex} certificates with nearly optimal sparsity. These improved algorithms follow immediately by our FT spanner constructions, as explained next. 
 
\begin{table*}[t]
\begin{center}
\begin{tabular}{llcl}
        \textbf{Certificate Size} & \textbf{Work} & \textbf{Depth} & \textbf{Citation} \\
				
				$O(\lambda n)$  & $\Oish \left(\lambda n m\right)$ &  $\Oish \left(\lambda^2 \right)$ &  \cite{KhullerS89} \\
				
				$O(\lambda n)$  & $O(\lambda m)$ & $\Oish \left(\lambda \right)$ & \cite{CheriyanT91} \\

				$O(\lambda n)$  & $poly(n)$ & \color{blue} $\Oish \left(1 \right)$ & \cite{CheriyanT91} \\
				
        $\lambda(n-1)$  & $\Oish \left(\lambda m\right)$ & $\Oish \left(\lambda\right)$ & \cite{CheriyanKT93} \\
				
				$O(\lambda n)$  & $O(\lambda m)$ & \color{blue} $\Oish \left(1\right)$ & \cite{KargerM97} \\

				$O(\lambda^2 n)$  & $O(\lambda^2 m)$ & \color{blue} $\Oish \left(1\right)$ & \cite{DinitzK:11,DR20} \\
				
        $\Oish \left(\lambda n\right)$ & \color{red} $\Oish \left(m\right)$ & \color{blue}$\Oish(1)$  &\textbf{(this paper)}
 \end{tabular}
\caption{\label{tab:priorwork-certificate} Prior work on parallel computation of $\lambda$-vertex connectivity certificates. Depth bounds (resp., work bounds) in red (resp., blue) are (nearly) existentially optimal. The bounds of \cite{DinitzK:11,DR20} are implicit by their VFT-spanner constructions.}
\end{center}
\vspace{-10pt}
\end{table*}
\vspace{-10pt}
\paragraph{FT-Spanners are, in fact, Connectivity Certificates.} The author observed in \cite{Par19} that FT $(2k-1)$ spanners against $\lambda$ vertex (resp., edge) failures are, by definition, $\lambda$-connectivity certificates, for any $k$. The sparsest possible certificate is obtained by taking the optimal $\lambda$-FT $(2k-1)$ spanner for $k=O(\log n)$, which by \cite{BDPW18,BP19} contains $O(\lambda n)$ edges; hence a \emph{sparse} certificate. This connection immediately lead to $\widetilde{O}(\lambda)$-round distributed algorithms for nearly sparse $\lambda$-edge certificates. Using the algorithm of \cite{DR20}, one can obtain $\lambda$-\emph{vertex} certificates with $\widetilde{O}(\lambda^2 n)$ edges using $\widetilde{O}(\lambda)$ rounds. 
In this paper, we will use this connection to provide nearly-sparse vertex-certificates, i.e., with $\widetilde{O}(\lambda n)$ edges, and in nearly optimal parallel and distributed runtime (e.g., in $\widetilde{O}(1)$ congest rounds). 
\vspace{-6pt} \subsection{Our Contribution}\vspace{-4pt}
We provide a local (non-sequential) algorithm for nearly optimal FT-spanners whose locality parameter is independent in the number of faults $f$. Our algorithm is based on a new extension of the well-known Baswana-Sen algorithm \cite{BaswanaS:07} to the fault-tolerant setting. We show that this (meta) algorithm can be naturally implemented in nearly optimal time in the sequential, distributed and parallel settings. 
Our extension is based on a novel notion of \emph{Fault-Tolerant Clustering} that induces low-depth vertex-independent trees in $G$, in which each clustered vertex appear in $\Omega(f)$ many clusters. We hope that this notion will be useful for the design of additional fault-tolerant distance preserving structures. Our key result for $n$-vertex graphs with polynomial edge weights is:

\begin{mdframed}[hidealllines=true,backgroundcolor=gray!25]
\vspace{-5pt}
\begin{theorem}\label{thm:linear-time-opt-randomized}[\~{O}ptimal Spanner in \~{O}ptimal Sequential Time]
There is a randomized algorithm that given any $n$-vertex $m$-edge (possibly weighted) graph $G=(V,E,W)$, a stretch parameter $k$ and a fault bound $f \in [1,n]$ computes in $\widetilde{O}(m)$ time an $f$-VFT $(2k-1)$ spanner subgraph $H\subseteq G$ with $|E(H)|=\widetilde{O}(f^{1-1/k} n^{1+1/k})$ edges, w.h.p.
\end{theorem}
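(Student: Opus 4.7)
The plan is an FT-adaptation of Baswana--Sen, realizing the ``Fault-Tolerant Clustering'' notion advertised in the introduction. Recall that Baswana--Sen builds a $(2k-1)$-spanner in $k$ levels: at level $i$, sample cluster centers at rate $p_i = n^{-i/k}$, grow depth-$i$ BFS trees around them, and for every vertex $v$ add to the spanner the lightest edge from $v$ into each incident cluster. A single faulty vertex can kill an entire BFS tree, so one must over-provision by $\widetilde{\Theta}(f)$. Concretely, I would sample centers at rate $p_i \approx \widetilde{\Theta}((f/n)^{1-i/k})$ so that whp every level-$i$ vertex is adjacent to $\Omega(f\log n)$ clusters, and grow the trees mutually \emph{vertex-independent}: any two clusters incident to a common vertex share no internal vertex other than possibly that vertex. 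This guarantees that any fault set $F$ with $|F|\le f$ leaves at least one cluster incident to $v$ entirely $F$-free at every level.

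For each pair (vertex $v$, incident cluster $C$) I would add the $\Theta(f)$ lightest edges from $v$ into $C$, hitting pairwise distinct internal vertices of $C$. The stretch proof is then the fault-tolerant analogue of the inductive cluster-charging argument of Baswana--Sen: a shortest $u$-$v$ path in $G\setminus F$ is decomposed level-by-level, and each ``heavy'' edge is rerouted through an $F$-free cluster that exists by the clustering property, yielding stretch $2k-1$. The expected number of spanner edges added at level $i$ is $\widetilde{O}(f \cdot n \cdot p_i/p_{i+1}) = \widetilde{O}(f^{1-1/k} n^{1+1/k})$, matching the optimal bound up to polylogarithmic factors.

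For runtime, each level should be implementable in $\widetilde{O}(m)$ work: sampling is trivial, the $\widetilde{\Theta}(f)$ vertex-independent BFS trees incident to any vertex are grown simultaneously via a single truncated multi-source exploration of depth $\le k$ with priority-based parent selection, and the $\Theta(f)$ lightest edges per (vertex, cluster) pair are extracted by sorting or top-$f$ sketches. The main obstacle, as I see it, is simultaneously enforcing vertex-independence and $\Omega(f)$-coverage without an $f$-factor blow-up in work; a naive implementation would run $f$ separate BFS rounds per center. I expect the remedy to be an ``$f$-fold'' BFS that keeps $\Theta(f)$ distinct parents per hop, so that the incident trees of any vertex can be reconstructed implicitly in $\widetilde{O}(m)$ total time. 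The stretch proof under faults, though technically more delicate than in the fault-free case, reduces cleanly to the single-level fault-tolerant clustering guarantee, since at each level the presence of an $F$-free incident cluster lets the classical Baswana--Sen charging proceed almost verbatim.
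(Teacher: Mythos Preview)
Your high-level plan (an FT Baswana--Sen built on vertex-independent cluster trees with $\Omega(f)$-coverage per clustered vertex) is exactly the paper's framework. But two of the places you flag as routine are in fact where the real work is, and your proposal does not resolve them.

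\textbf{Runtime.} You correctly identify the bottleneck: enforcing vertex-independence plus $\Omega(f)$-coverage without an $f$-factor in work. Your suggested ``$f$-fold BFS keeping $\Theta(f)$ parents per hop'' does not address the actual cost. The expensive step is \emph{per vertex $v$}: to build $v$'s collection of $\Theta(f)$ vertex-disjoint paths to adjacent clusters, $v$ must inspect, for each neighbor $u$, the $\Theta(f)$ cluster-paths of $u$ and test whether any of them is disjoint from the paths $v$ has already picked. Done naively this is $\Theta(f)$ work per edge, hence $\Theta(fm)$ total. The paper's fix is different from anything you propose: each $u$ samples once a set $\mathcal{S}(u)$ of only $O(\log n)$ of its $\Theta(f)$ cluster-paths, and every neighbor $v$ of $u$ uses only $\mathcal{S}(u)$ when greedily growing its independent path set. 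This drops the per-edge cost to $\widetilde{O}(1)$.

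\textbf{Stretch.} Precisely because of this sampling, the stretch argument is \emph{not} ``almost verbatim'' Baswana--Sen. When $v$ inspects the edge $(u,v)$ and all $O(\log n)$ sampled paths in $\mathcal{S}(u)$ happen to intersect $v$'s current path set, the edge is not added---yet $u$ may well have unsampled paths that are disjoint. The paper must show $(u,v)$ is nonetheless $(f,i)$-protected. The core is a new structural lemma: if at least half of $u$'s \emph{full} set of $\Theta(kf)$ cluster-paths intersect $v$'s independent set $\mathcal{P}(v)$, then one can greedily extract $3f$ pairs $(Q'_j,P'_j)\in \mathcal{Q}(u)\times\mathcal{P}(v)$ with $V(Q'_j)\cap V(P'_j)\neq\emptyset$, the $Q'_j$ pairwise disjoint and the $P'_j$ pairwise disjoint. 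Concatenating $Q'_j[u,\cdot]\circ P'_j[\cdot,v]$ yields $3f$ $u$--$v$ walks of $\le 2i-1$ hops in which every vertex other than $u,v$ lies on at most two walks, so any $f$ faults leave one walk intact. This pairing argument, not a cluster-charging induction, is the technical heart.

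\textbf{Size.} Adding $\Theta(f)$ edges per (vertex, incident cluster) pair would cost an extra factor of $f$: an unclustered vertex is incident to $\widetilde{O}(f^{1-1/k}n^{1/k})$ clusters, so you would add $\widetilde{O}(f^{2-1/k}n^{1/k})$ edges per vertex. The paper adds essentially one edge per path in $v$'s independent set (plus the $K_f=\Theta(kf)$ tree edges for clustered vertices), and it is the structural lemma above---not edge multiplicity---that furnishes the $\Omega(f)$ near-disjoint detours.
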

\end{mdframed}
This should be compared with the state-of-the-art runtime of $\widetilde{O}(f^{1-1/k}\cdot n^{2+1/k}+m f^2)$ by \cite{BodwinDR21}. The algorithm of \cite{BodwinDR21} has the benefit of computing truly optimal spanners with $O(f^{1-1/k} n^{1+1/k})$ edges\footnote{Their optimality is also unconditional on the girth conjecture.}. 
We note that even for edge failures all prior algorithms (even for sub-optimal FT-spanners) have an inherent dependency in $f$ in their running times. Specifically, the fastest $f$-EFT spanner algorithm by Chechik et al. \cite{ChechikLPR:10} runs in $\widetilde{O}(f m)$ time and provides spanners with $\widetilde{O}(fn^{1+1/k})$ edges. Our algorithm can be easily shown to provide also $f$-EFT spanners with $\widetilde{O}(f^{1-1/k}n^{1+1/k})$ edges in $\widetilde{O}(m)$ time.
\vspace{-10pt}

\paragraph{Distributed and Parallel Algorithms for FT-Spanners and Connectivity Certificates.} The main benefit in our algorithm is in its local implementation in bandwidth-restricted distributed models, such as the congest model \cite{Peleg:2000}. The algorithm of Theorem \ref{thm:linear-time-opt-randomized} immediately leads to 
distributed construction with $\widetilde{O}(1)$ congest rounds. Prior algorithms \cite{DR20} obtained FT-spanners with sub-optimal sparsity (by a factor of $f$) within $\widetilde{O}(f^{2-1/k})$ congest rounds. 

\begin{mdframed}[hidealllines=true,backgroundcolor=gray!25]
\vspace{-5pt}
\begin{theorem}\label{thm:linear-time-opt-randomized-distributed}[\~{O}ptimal Spanner in \~{O}ptimal Distributed Time]
There is a randomized distributed congest algorithm that given any $n$-vertex $m$-edge (possibly weighted) graph $G=(V,E,W)$, stretch parameter $k$ and fault bound $f \in [1,n]$ computes in $\widetilde{O}(1)$ rounds nearly optimal $f$-VFT $(2k-1)$ spanner, w.h.p. 
\end{theorem}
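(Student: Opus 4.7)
The plan is to derive Theorem \ref{thm:linear-time-opt-randomized-distributed} by implementing the sequential meta-algorithm underlying Theorem \ref{thm:linear-time-opt-randomized} directly in \congest. That algorithm is a fault-tolerant lift of Baswana--Sen: it proceeds in $O(k) = O(\log n)$ phases; in each phase new cluster centers are sampled at a geometrically decreasing rate, and the new FT-Clustering primitive grows vertex-independent BFS-like trees of polylogarithmic depth so that every clustered vertex lies in $\Omega(f)$ clusters rooted at distinct sampled centers.

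First, I would show that each phase admits a \congest implementation in $\operatorname{polylog}(n)$ rounds. Sampling is purely local (a biased coin at each vertex, or shared randomness via \id). Building the vertex-independent trees is then a multi-source BFS of depth $\operatorname{polylog}(n)$ rooted at the sampled centers, where each vertex $v$ attaches to (up to) $\Omega(f)$ distinct nearest centers rather than just one. Since the FT-Clustering guarantees tree-depth $\operatorname{polylog}(n)$, only polylogarithmically many hops are traversed.

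The main obstacle, and the heart of the argument, is to show that the vertex-independent tree construction respects the $O(\log n)$-bandwidth of \congest without an $\Omega(f)$ blow-up. Naively each edge would need to shuttle identifiers of $\Omega(f)$ center-tree pairs. The resolution is the vertex-independence property of the FT-Clustering: the $\Omega(f)$ trees incident to a given vertex arrive along internally vertex-disjoint paths, so at the tree-growing step any single edge carries information about only $O(1)$ active tree, up to polylog congestion. Combined with the polylog depth, a standard random-delay scheduling argument (Leighton--Maggs--Rao) yields a schedule of length $\operatorname{polylog}(n)$.

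Finally, I would handle the edge-selection step distributedly: for each clustered vertex $v$, the spanner retains, for each of the $\Omega(f)$ clusters of $v$, one lightest edge to every adjacent cluster. Once cluster memberships are known locally, each such decision is made by $v$ after a pipelined broadcast of cluster IDs along the polylog-depth trees, again in $\operatorname{polylog}(n)$ rounds; the resulting edges are then marked on their endpoints. Summing the costs of sampling, tree-growing, and edge-selection across the $O(\log n)$ phases gives the claimed $\widetilde{O}(1)$ round complexity, while correctness and the $\widetilde{O}(f^{1-1/k} n^{1+1/k})$ size bound are inherited directly from Theorem \ref{thm:linear-time-opt-randomized}, since the distributed implementation produces exactly the same random subgraph $H \subseteq G$.
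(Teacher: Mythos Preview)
Your high-level plan---implement the sequential meta-algorithm phase by phase in \congest, with correctness and size inherited from Theorem~\ref{thm:linear-time-opt-randomized}---matches the paper. But the mechanism you sketch for the per-phase implementation does not, and it contains a real gap.

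The tree-growing in phase $i$ is \emph{not} a multi-source BFS from the sampled centers. It is the MIS computation of Step~1: each $(i-1)$-clustered vertex $v$ must build a maximal collection $\mathcal{P}_{i-1}(v)$ of vertex-disjoint paths to centers in $Z_{i-1}$, and to do so it needs to see cluster-path information from each neighbor $u$. Here is the bandwidth problem you identify: each neighbor $u$ belongs to $K_f=\Theta(kf)$ clusters, so naively $u$ would have to send $v$ a set $\mathcal{Q}_{i-1}(u)$ of $\Theta(kf)$ paths, costing $\Omega(f)$ rounds. Your proposed fix---vertex-independence implies each edge lies in $O(1)$ trees, so congestion is $O(1)$---does not apply to this step: the information flowing on the edge $(u,v)$ is not one token per tree containing $(u,v)$, but the full description of $u$'s $\Theta(f)$ cluster paths, none of which need use the edge $(u,v)$ at all. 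Leighton--Maggs--Rao scheduling does not help here either, because the dilation-plus-congestion of the naive message pattern is genuinely $\Omega(f)$.

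The paper's resolution is different and is the point you are missing: the sub-sampling step $\mathcal{S}_{i-1}(u)\subseteq\mathcal{Q}_{i-1}(u)$ of only $O(\log n)$ paths, already present in the sequential algorithm, is precisely what makes the \congest\ implementation cheap. Each $u$ sends only $\mathcal{S}_{i-1}(u)$ (total $O(i\log n)$ vertex identifiers) to every neighbor in $O(i\log n)$ rounds; then $v$ runs the entire MIS and shortcut computation \emph{locally}. The vertex-independence observation you cite is used, but only for the much simpler Step~2 task of broadcasting ``is my center in $Z_i$?'' down the existing trees $T_{i-1}(s)$ (Observation~\ref{obs:efficient-dist-trees}: each edge lies in at most two such trees). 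No random-delay scheduling is needed anywhere. Your edge-selection description (``one lightest edge to every adjacent cluster, for each of the $\Omega(f)$ clusters of $v$'') is also not what the algorithm does; the sets $LE_i(v)$ and $R_i$ are determined locally once $v$ knows $\mathcal{P}^*_{i-1}(v)$ and which of its paths are sampled.
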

\end{mdframed}
A corollary of Theorem \ref{thm:linear-time-opt-randomized-distributed} provides a $\widetilde{O}(1)$-round algorithm for computing $\lambda$-vertex connectivity certificates with nearly optimal bounds. No distributed algorithms have been known to this problem before. For $\lambda$-\emph{edge} connectivity certificates, an $\widetilde{O}(\lambda)$ round algorithm was given by \cite{Par19}.

We then turn to consider the PRAM implementation. Despite its locality, a key computational step in the algorithm of Thm.\ \ref{thm:linear-time-opt-randomized} is in fact sequential\footnote{Of parallel depth $O(\Delta)$, where $\Delta$ is the maximum degree.}. In the distributed setting, this sequential computation is performed locally at each vertex, and therefore does not effect the distributed runtime. In the parallel setting, we overcome this issue but only for unweighted graphs. 
\begin{mdframed}[hidealllines=true,backgroundcolor=gray!25]
\vspace{-5pt}
\begin{theorem}\label{thm:linear-time-opt-randomized-PRAM}[\~{O}ptimal Spanner in \~{O}ptimal Parallel Time]
There is a randomized PRAM (CRCW)  algorithm that given any $n$-vertex $m$-edge \emph{unweighted} graph $G=(V,E)$ computes, w.h.p.:
\begin{enumerate}
\item nearly optimal $f$-VFT $(2k-1)$ spanner using $\widetilde{O}(m)$ work and $\widetilde{O}(1)$ depth. 
\item nearly sparse $\lambda$-vertex connectivity certificates using $\widetilde{O}(m)$ work and $\widetilde{O}(1)$ depth.
\end{enumerate}
\end{theorem}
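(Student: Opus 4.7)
The plan is to port the meta-algorithm underlying Theorems~\ref{thm:linear-time-opt-randomized} and~\ref{thm:linear-time-opt-randomized-distributed} to the CRCW PRAM model, relying on the existing analysis for correctness and the spanner size bound, and spending PRAM effort only on the one sequential bottleneck that the paper explicitly flags. Since the CONGEST algorithm already terminates in $\widetilde{O}(1)$ rounds while exchanging $\widetilde{O}(1)$ bits per edge, an edge-per-processor simulation would give $\widetilde{O}(1)$ depth and $\widetilde{O}(m)$ work up to the single local step in which each vertex $v$ scans its incident edges to decide which to add to the spanner; a naive scan costs $O(\Delta)$ depth, and removing this cost is the main obstacle and the reason the weighted case is not covered.

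The central step I would parallelize is edge selection at a fixed level of the FT-clustering hierarchy. Every vertex carries an $O(\log n)$-bit cluster identifier $c(v)$; in parallel over all edges $(u,v)$ form the composite key $(u,c(v))$ and apply CRCW integer semi-sorting to group edges by this key in $\widetilde{O}(m)$ work and $\widetilde{O}(1)$ depth. The FT-clustering analysis prescribes retaining $\widetilde{O}(f)$ edges per bucket, and since the input is unweighted \emph{any} $\widetilde{O}(f)$ representatives suffice — the spanner stretch argument depends only on cardinality, not weight. A parallel prefix sum over a random permutation of each bucket extracts the representatives in $\widetilde{O}(1)$ depth. In the weighted setting one would instead need the lightest $\widetilde{O}(f)$ edges per bucket, which is exactly why the theorem is stated for unweighted graphs.

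The remaining ingredients of the meta-algorithm all have standard low-depth PRAM implementations: independent coin flips choose the cluster centers at each level, CRCW concurrent writes fix cluster membership and tree-parent assignments, and the intra-cluster BFS trees have depth $O(k)$ which composes with the $O(\log n)$ levels at only a polylogarithmic overhead in depth. Summing work across the $O(\log n)$ levels gives $\widetilde{O}(m)$ total work, $\widetilde{O}(1)$ depth, and spanner size $\widetilde{O}(f^{1-1/k}n^{1+1/k})$ w.h.p., establishing part~(1).

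Part~(2) then follows essentially for free by the observation (attributed to \cite{Par19} and recalled in the introduction) that an $f$-VFT $(2k-1)$-spanner is in particular an $(f+1)$-vertex connectivity certificate for every $k$. Instantiating part~(1) with $k=\Theta(\log n)$ and $f=\lambda-1$ produces a $\lambda$-vertex connectivity certificate with $\widetilde{O}(\lambda n)$ edges in $\widetilde{O}(m)$ work and $\widetilde{O}(1)$ depth, completing the theorem. The one technical risk to monitor is that the semi-sorting primitive be invoked on $O(\log n)$-bit keys only, so that its complexity does not degrade; this is handled by using cluster IDs in $[n]$ and level indices of $O(\log\log n)$ bits.
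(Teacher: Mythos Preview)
Your proposal has a genuine gap: it mischaracterizes the sequential bottleneck and therefore the parallelization task. In the paper's FT extension of Baswana--Sen, a clustered vertex does not carry a single cluster identifier $c(v)$; each $(i-1)$-clustered vertex belongs to $K_f = \Theta(kf)$ clusters simultaneously, and the clusters induce \emph{vertex-independent} trees. The local step that costs $O(\Delta)$ depth is not ``pick $\widetilde{O}(f)$ edges per (vertex, adjacent-cluster) bucket''; it is the computation of a \emph{maximal independent set of paths} $\mathcal{P}_{i-1}(v)$, where each candidate path goes from $v$ through a neighbor $u_j$ and then along one of $u_j$'s tree paths in $\mathcal{S}_{i-1}(u_j)$ to a center, and two candidate paths conflict if they share any internal vertex. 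This MIS property is exactly what the stretch argument (Claim~\ref{cl:aux2} and Lemma~\ref{lem:path-pairs}) uses: the proof needs that whenever a large fraction of $\mathcal{Q}_{i-1}(u)$ avoids the current path set, one such path is added. Semi-sorting edges by $(u,c(v))$ cannot detect conflicts between paths that pass through different neighbors but collide at some internal vertex two or three hops away, so the vertex-independence invariant (Property~III) would be lost and the stretch lemma would not go through.

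The paper resolves this with a genuinely different primitive: it reformulates Step~1 as lexicographic-first MIS under a random ordering of the candidate paths and invokes the parallel greedy MIS framework of Blelloch, Fineman and Shun. The conflict graph on $\widetilde{O}(\deg(v))$ paths could have $\Theta(\deg(v)^2)$ edges, so the paper avoids materializing it; instead, each round computes the set of paths with no earlier conflicting path by semi-sorting tuples $(w,\pi(P))$ over all $w \in V(P)$ and testing, per path, whether it is the earliest at every one of its $O(k)$ vertices. This gives $\widetilde{O}(\deg(v))$ work and $\widetilde{O}(1)$ depth per round, and $O(\log^2 n)$ rounds suffice. The restriction to unweighted graphs enters because the modified Step~1 processes paths in an arbitrary (random) order rather than by last-edge weight, which forces the re-proof of Lemma~\ref{lem:key} via Claim~\ref{cl:aux2-parallel}; your explanation (``any representatives suffice'') points in the right direction but attaches to the wrong step. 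Part~(2) you have correctly: it is immediate from part~(1) with $k=\Theta(\log n)$.
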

\end{mdframed}
The prior PRAM algorithms for $f$-VFT $(2k-1)$ spanners, implied by \cite{BaswanaS:07} and \cite{DinitzK:11}, use $\widetilde{O}(f^2 m)$ work and $\widetilde{O}(1)$ depth. These algorithms also work for weighted graphs.

\smallskip

\noindent\textbf{Derandomization.} Our near-optimal algorithms are randomized and provide FT spanners with high probability. We also provide a derandomization of our algorithms at the cost of increasing the runtime by a factor of $f$, in the sequential and the distributed setting. 

\begin{mdframed}[hidealllines=true,backgroundcolor=gray!25]
\vspace{-5pt}
\begin{theorem}\label{thm:suplinear-time-det}[Derandomization]
There are deterministic algorithms that given any $n$-vertex $m$-edge (possibly weighted) graph $G=(V,E,W)$, stretch parameter $k$ and fault bound $f \in [1,n]$ computes optimal FT spanners in $\widetilde{O}(fm)$ sequential time and $\widetilde{O}(f)$ congest rounds. 
\end{theorem}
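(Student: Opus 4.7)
The plan is to derandomize the only source of randomness in the algorithms of Theorems \ref{thm:linear-time-opt-randomized} and \ref{thm:linear-time-opt-randomized-distributed}, which is the Baswana-Sen-style sampling used to build the Fault-Tolerant Clustering. At each level $i$ of the hierarchy, cluster centers are sampled independently with probability $p_i \asymp (f/n)^{1/k}$, so that by a Chernoff-plus-union-bound argument, for every vertex $v$ and every fault set $F$ with $|F|\le f$, the cluster-radius ball around $v$ in $G\setminus F$ still contains $\Omega(\log n)$ surviving sampled centers. The derandomization goal is to replace the i.i.d.\ sampling with a deterministic set $S$ of essentially the same size that directly satisfies this fault-tolerant hitting property.

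In the sequential setting, my plan is to construct $S$ greedily via the method of conditional expectations, using a potential function that upper-bounds the number of ``bad'' pairs $(v,F)$ --- those for which the current partial set fails the FT-hitting property. Because this potential decomposes as a sum over pairs, and because only the cluster-radius neighborhood of a newly added center changes the status of any pair involving $v$, each greedy step can be executed in $\widetilde{O}(f\cdot \deg(v))$ amortized work, amounting to $\widetilde{O}(fm)$ per level. Summing over the $O(\log n)$ levels of the hierarchy gives the claimed $\widetilde{O}(fm)$ running time, while the output size of $S$ (and hence of the resulting spanner) matches the randomized guarantee up to polylogarithmic factors.

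In the \congest{} setting, my plan is to replace independent sampling with a pseudo-random generator of $O(\log n)$-bit seed that fools the required concentration bound, and to use bit-fixing (the distributed method of conditional expectations) to deterministically select the seed. Evaluating the necessary statistical test in the presence of $f$ faults requires aggregating an FT coverage count simultaneously for every pair $(v,F)$; the key observation is that the FT clustering itself already provides $\Omega(f)$ vertex-disjoint low-depth trees rooted at each cluster center, which I can reuse as disjoint aggregation pipelines. Each bit of the seed is fixed by one round of such aggregation, taking $\widetilde{O}(f)$ rounds; iterating over $O(\log n)$ bits and $O(\log n)$ levels yields $\widetilde{O}(f)$ rounds overall.

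The main obstacle is the distributed derandomization: while the sequential greedy is a straightforward method-of-conditional-expectations argument, the distributed version must respect the $O(\log n)$-bandwidth of \congest{}. The crux is exhibiting a pseudo-random family of polynomial size whose success is certifiable by a \emph{linear} statistical test over $(v,F)$ pairs that can be aggregated over the already-constructed FT trees in $\widetilde{O}(f)$ rounds. I expect that $\Theta(f\log n)$-wise independence suffices for the required FT concentration (since the relevant bad event only depends on the $O(f\log n)$ centers nearest to $v$), and this is exactly the source of the $f$-factor overhead in both the sequential and distributed runtime guarantees.
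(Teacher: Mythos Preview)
Your proposal misidentifies both the sources of randomness and the property the derandomization must achieve, and as a result the proposed arguments do not go through.

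First, the algorithm of Theorem~\ref{thm:linear-time-opt-randomized} has \emph{two} random ingredients, not one: besides the center sampling $Z_i=Z_{i-1}[p]$, there is the sampling of $\mathcal{S}_{i-1}(v)\subset\mathcal{Q}_{i-1}(v)$ (Step~1 of Alg.~$\VertexFTSpanner$), whose sole purpose is to bring the running time from $\widetilde O(fm)$ down to $\widetilde O(m)$. The paper's derandomization simply \emph{drops} this sampling, setting $\mathcal{S}_{i-1}(v)=\mathcal{Q}_{i-1}(v)$; this is exactly where the $f$ overhead in the deterministic bounds comes from, both sequentially and distributedly. You never address this source of randomness, and you attribute the $f$ overhead to the wrong place.

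Second, and more seriously, the property the center set $Z_i$ must satisfy is \emph{not} a fault-tolerant hitting condition over all pairs $(v,F)$. All that is needed (see the size analysis and Claim~\ref{cl:clust-prob}) is that for every vertex $v$ whose set $Z'_{i-1}(v)=\{h(P):P\in\mathcal{P}^*_{i-1}(v)\}$ has size $\Omega(K_f/p)$, the chosen set $Z_i$ hits $Z'_{i-1}(v)$ in at least $K_f$ elements. This is an ordinary hitting-set instance with at most $n$ sets, not $n\cdot\binom{n}{f}$ sets; the fault-tolerance of the clustering is already baked into the vertex-disjointness of the paths in $\mathcal{Q}_i(v)$ and needs no further ``for every $F$'' quantifier at the sampling stage. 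Consequently the paper handles the centers by a direct call to a standard deterministic hitting-set routine (sequentially via \cite{AlonCC19}, distributedly via \cite{GhaffariK18,RozhonG20}), in time independent of~$f$. Your potential function over $(v,F)$ pairs is over an exponential-size domain, and the claimed $\widetilde O(f\cdot\deg(v))$ amortized update is not justified; likewise, the distributed plan to ``aggregate an FT coverage count for every pair $(v,F)$'' cannot be done in $\widetilde O(f)$ rounds, and an $O(\log n)$-bit seed is inconsistent with the $\Theta(f\log n)$-wise independence you yourself say is required.
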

\end{mdframed}
This improves over the state-of-the-art running time of $O(f^{1-1/k}n^{2+1/k}+f^2 m)$ by \cite{BodwinDR21} (while increasing the spanner size by a poly-log factors). Designing linear-time deterministic algorithms for VFT spanners is one of the interesting open problems. In Our Technique Section, we highlight the critical randomized part in our algorithm. 
\\ 
\noindent \textbf{Discussion and Open Problems.} The Baswana-Sen algorithm is one of the most versatile algorithm known for graph spanners: it has been applied in a diverse range of non-sequential settings, examples include: the distributed congest model \cite{BaswanaS:07,GhaffariK18}, Local Centralized Algorithms (LCA) \cite{ParterRVY19}, the Congested-Clique \cite{Censor-HillelPS17,ParterY18}, the Massively-Parallel-Computation (MPC) models \cite{BiswasDGMN21}. 
It has also been applied in the context of dynamic algorithms for maintaining graph spanners \cite{BaswanaS08,BodwinK16,BernsteinFH19}, dynamic streaming \cite{KapralovW14,FiltserKN21} and more. We therefore believe that its FT extension provided in this paper should have many further, possibly even immediate, implications (e.g., dynamic FT-spanner algorithms).  There are several natural open questions, such as computing truly optimal spanners in linear time, i.e., omitting the extra poly-log factors. We note even the (fault-free) Baswana-Sen algorithm provides $(2k-1)$ spanner with $O(k n^{1+1/k})$ edges. Providing optimal algorithms for optimal EFT-spanners is also a major open problem, in light of the recent work by Bodwin, Dinitz and Robelle \cite{BDRSODA22}.

\vspace{-5pt}\subsection{Our Approach, in a Nutshell}\vspace{-3pt}

We start with a brief overview of the Baswana-Sen algorithm and the highlight the key ideas in our extension to the fault-tolerant setting. To present the key ideas, we assume that the graph $G$ is unweighted, handling weights indeed introduce additional technicalities. 
\vspace{-6pt}
\paragraph{Brief Overview of the Baswana-Sen Algorithm \cite{BaswanaS:07}.} The algorithm has $k$ phases, in which we gradually build (and sometimes dissolve) clusters in an hierarchical manner. A clustering $\mathcal{C}=\{C_1,\ldots, C_\ell\}$ is a collection of vertex-disjoint subsets of vertices. The level-$i$ clustering $\mathcal{C}_i=\{C_{i,1},\ldots, C_{i,\ell}\}$ consists of $\ell=n^{1-i/k}$ clusters, in expectation. Each cluster induces a tree of depth at most $i$ rooted at its cluster \emph{center} and spanning its cluster members. A vertex is denoted as $i$-\emph{clustered} if it belongs to some level-$i$ cluster $C_j \in \mathcal{C}_i$.

The initial clustering is given by $\mathcal{C}_0=\{\{v\}, v \in V\}$, consisting of $n$ singleton clusters. In the $i^{th}$ phase, given is the clustering $\mathcal{C}_{i-1}$ and the $i^{th}$ clustering is defined as follows. Each center of a cluster in $\mathcal{C}_{i-1}$ joins the $i^{th}$ clustering independently with probability of $p=1/n^{1/k}$. A vertex that is adjacent to at least one sampled cluster (i.e., with a sampled center) joins that cluster (by connecting to its neighbor in that cluster). All other $(i-1)$-clustered vertices, with no adjacent sampled clusters, add to the spanner, one edge to some of their neighbors for each of their adjacent clusters in $\mathcal{C}_{i-1}$. The clustering $\mathcal{C}_k$ is defined to be the empty, therefore all vertices are $k$-unclustered. 

Letting $H$ denote the output spanner, the stretch argument shows that for every $i$-unclustered vertex $v$, it holds that $\dist_H(u,v)\leq 2i-1$ for every $u \in N(v)$, where $N(v)$ are the neighbors of $v$ in $G$. The size analysis is based on showing that an $i$-unclustered vertex is incident to $O(n^{1/k})$ clusters in $\mathcal{C}_{i-1}$, in expectation. This construction is clearly not resilient, not even against a single vertex fault. For example, the stretch argument might be broken in the case where the unique path connecting a vertex to its cluster center contains a faulty vertex. 
\vspace{-6pt}
\paragraph{New Idea: Fault-Tolerant Clustering.} Our extension of the Baswana-Sen algorithm is also based on $k$ levels of clustering, where level-$i$ clusters induce depth-$i$ trees in $G$. Our clusters, however, have additional FT properties: the level-$i$ clusters of $\mathcal{C}_{i}$ are not vertex-disjoint, but rather induce vertex-independent trees in $G$ (i.e., the tree paths from the roots to some fixed vertex $v$ are vertex-disjoint).  
Importantly, each $i$-clustered vertex is required to appear in $\Omega(f)$ of these trees. This guarantees that for any sequence of at most $f$ vertex faults $F$, every $i$-clustered vertex $v$ has at least one \emph{fault-free} path (i.e., that does not intersect $F$), of length at most $i$, to one of its cluster centers in $\mathcal{C}_{i}$. Our main challenge is in defining these clusters, while guaranteeing that the edges incident to the $i$-unclustered vertices are already $(f,i)$ protected, as defined next.

\begin{definition}[$(f,i)$-protection]\label{def:protect}
An edge $e=(u,v)$ is $(f,i)$-protected in a subgraph $H' \subseteq G$ if 
$$\dist_{H' \setminus F}(u,v)\leq (2i-1)\cdot W(e), \forall F \subseteq V \setminus\{u,v\}, |F|\leq f~.$$
A subgraph $H$ is then $f$-VFT $(2k-1)$ \emph{spanner} iff every edge $e \in G$ is $(f,k)$ protected in $H$. 
\end{definition}
 
\noindent \textbf{Building the $i^{th}$ Level of the FT-Clustering. } We highlight some of the key ideas in computing the $i^{th}$ level $\mathcal{C}_i$ given the $(i-1)^{th}$ level clustering $\mathcal{C}_{i-1}$.

Consider the set $V_{i-1}$ of $(i-1)$-clustered vertices (under some delicate definition that we skip for now\footnote{there are vertices in $\mathcal{C}_{i-1}$ that are in fact $(i-1)$-\emph{unclustered}}). Focus on some vertex $v \in V_{i-1}$. A collection of clusters $\mathcal{C}_v \subseteq \mathcal{C}_{i-1}$ is denoted as an \emph{independent set of adjacent clusters} of $v$ if: (i) each cluster $C \in \mathcal{C}_v$ contains some neighbor of $v$, and (ii) the collection of paths connecting $v$ to the centers of the clusters of $\mathcal{C}_v$ (via $v$'s neighbors in these clusters) are \emph{vertex-disjoint}, except for the common endpoint $v$. In the $i^{th}$ step of the algorithm, it is first desired to compute a maximum set of independent adjacent clusters for $v$ in $\mathcal{C}_{i-1}$. As this might hard, we settle for a \emph{maximal} independent set, up to an additional \emph{slack} that is introduced due to the running time considerations. Note that as each vertex $u$ might belong to $\Omega(f)$ clusters in $\mathcal{C}_{i-1}$, a vertex $v$ might be incident to $\Omega(f deg(v))$ clusters in $\mathcal{C}_{i-1}$. Therefore, a na\"{\i}ve computation of a maximal independent set of adjacent clusters might take $\Omega(f deg(v))$ time per vertex $v$, and consequently $\Omega(f m)$ time for all vertices. 

This is the critical point where randomization comes to rescue! Instead of considering the $\Omega(f)$ clusters of each of neighbor $u$ of $v$, we a-priori sub-sample for each vertex $u$, a collection of $O(\log n)$ clusters sampled uniformly at random from the $\Omega(f)$ clusters to which it belongs. The sampling step is done only once and therefore takes $O(f n)$ time.
When computing the MIS $\mathcal{C}_v$ for $v$, we restrict attention only to the union of $O(deg(v)\log n)$ adjacent sampled clusters. The set $\mathcal{C}_v$ is then obtained by applying a simple greedy MIS procedure to find the maximal independent set of adjacent clusters (i.e., add cluster $C$ is added $\mathcal{C}_v$, only if the tree path from $v$ to the center of $C$ is vertex-disjoint w.r.t clusters already taken into $\mathcal{C}_v$). 

The $i^{th}$-level \emph{centers} are defined by sampling each $(i-1)^{th}$-level center independently with probability of $p=(f/n)^{1/k}$. Therefore, in expectation, we have $O(f^{i/k}n^{1-i/k})$ clusters in level $i$. A vertex $v$ will be defined as $i$-\emph{clustered} only if its MIS set $\mathcal{C}_v$ hits at least $\Theta(kf)$ sampled clusters\footnote{The dependency in $k$ becomes clear in the analysis part.}. The $i$-clustered vertices will be then safely added to some $\Theta(kf)$ sampled adjacent clusters in $\mathcal{C}_v$. This will preserve the vertex-independency of the trees induced by the 
$i^{th}$-level clustering. The main challenge is in handling the $i$-\emph{unclustered} vertices. Here,  we need to account for the fact that we computed the MIS 
$\mathcal{C}_v$ \emph{only} based on a random sample of adjacent clusters (this is quite challenging especially in the weighted setting where our uniform sampling of adjacent clusters totally ignore weights, and consequently might miss the nearest clusters). Our stretch analysis is based on a key structural lemma (Lemma \ref{lem:path-pairs}), which constitutes our main technical contribution. On a high level, the stretch argument is based on showing that the $v$-edges $(u,v)$ incident to a cluster not taken into the MIS $\mathcal{C}_v$ are w.h.p. either (i) taken into the spanner, (ii) admit a sufficient number of $u$-$v$ near-vertex disjoint paths in the current spanner, or else (iii) be handled in the subsequent phases (in the latter case, both $u$ and $v$ are $i$-clustered).

To parallelize this construction, we need to provide an efficient algorithm for computing the maximal independent set $\mathcal{C}_v$. Our above mentioned procedure works in a sequential greedy manner, hence requiring a parallel depth of $\Theta(\Delta)$, where $\Delta$ is the maximum degree.  Our solution is based on the parallel greedy MIS algorithm of Blelloch, Fineman and Shun \cite{BlellochFS12}. The efficient implementation of this algorithm exploits the fact that each \emph{node} in the conflict graph in our MIS instance corresponds to a short \emph{path}. The PRAM computation computes a random permutation $\pi$ over the $O(deg(v)\log n)$ potential paths and then compute the lexicographic-first MIS collection of paths with respect to $\pi$ using $\widetilde{O}(deg(v))$ work and $\widetilde{O}(1)$ depth, per vertex $v$. 

\smallskip

\noindent\textbf{Preliminaries.}
Throughout, we consider an $n$-vertex $m$-edge undirected weighted graph $G=(V,E,W)$. We assume that the edge weights are polynomial in $n$\footnote{This assumption can avoided at the cost of increasing the runtime by a factor of $O(\log W_{\max})$.} and unique (this can be obtained by appending the unique edge ID to break the tie). In the case where the graph is unweighted, all edges have weight of $1$. For edge pair $e,e'$, we use $e<e'$ to indicate that $W(e)<W(e')$. For $v \in V$, let $N(v,G)$ be the neighbors of $v$ in $G$, and $\deg(v,G)=|N(v,G)|$, when $G$ is clear from the context, we may omit it.  
For a path $P=[u_0,\ldots, u_k=v]$ ending at $v$, we refer to the \emph{first} vertex on the path by $h(P)$ (the \emph{head} of the path $P$). In the same manner, let $t(P)=u_k$ be the \emph{tail} of the path. For $u_{j}, u_{j'}$ where $j \leq j'$, denote the $u_{j}$-$u_{j'}$ subpath in $P$ by $P[u_{j}, u_{j'}]$. In case where $u_j=h(P)$, we simply write $P[\cdot, u_j]$ (instead of $P[u_0,u_j]$). Let $|P|$ denote the number of edges (hops) in $P$, and let $len(P)=\sum_{e \in P}W(e)$, i.e., the \emph{length} of $P$. The last edge of the path $P$ is denoted by $LastE(P)$. For paths $P_1,P_2$ with $t(P_1)=h(P_2)$, the concatenation of these paths is denoted by $P_1 \circ P_2$.
Let $\dist_G(u,v)$ denote length of shortest $u$-$v$ path in $G$ (i.e., weighted distance). 
For a collection of paths $\mathcal{P}$, the vertices of $\mathcal{P}$ is denoted by $V(\mathcal{P})=\bigcup_{P \in \mathcal{P}}V(P)$. For a set of elements $X$ and $p \in (0,1)$, let $X[p]$ be a subset of $X$ obtained by sampling each element of $X$ independently with probability $p$.

\smallskip 
\noindent \textbf{Roadmap.} For a warm-up, we start with proving Theorem \ref{thm:linear-time-opt-randomized} for $k=2$. Already this basic setting calls for some new ideas. In Sec. \ref{sec:spanners-general-meta}, we describe the meta-algorithm for computing nearly optimal FT-spanners for any stretch $k$. In Sec. \ref{sec:spanner-imp}, we provide the implementation details in the sequential and distributed models. Then in Sec. \ref{sec:parallel-imp}, we show the parallel implementation, which requires additional ideas. Sec. \ref{sec:certificate} considers vertex-certificates, proving the proof of Theorem \ref{thm:linear-time-opt-randomized-PRAM}(2). Finally, in Sec. \ref{sec:derand} we derandomize the constructions.



\vspace{-10pt}
\section{Warm Up: $f$-VFT $3$-Spanners} \vspace{-5pt}
As a warm-up, we provide a simplified variant of our algorithm for $3$-spanners (i.e., $k=2$). We present a sequential spanner construction with $\widetilde{O}(\sqrt{f}n^{3/2})$ edges, that runs in $\widetilde{O}(m)$ time.
The algorithm has two main steps. The first defines a collection of $\ell=O(\sqrt{f n})$ (in expectation) star clusters $\mathcal{C}=\{C_1, \ldots, C_\ell\}$. A vertex $v$ will be defined as \emph{clustered} if it belongs to $4f$ clusters, and otherwise, it will be defined as \emph{unclustered}. All edges adjacent to unclustered vertices will be added to the spanner $H$. The second phase carefully connects each clustered vertex to each of its adjacent clusters in $\mathcal{C}$. 

\smallskip

\noindent\textbf{Step One: Clustering.} Let $S=V[p]$ be a subset of vertices, denoted as \emph{centers} where $p=\sqrt{f/n}$. For each vertex $v$, let $E(v)=\{(u_1,v), \ldots, (u_q,v)\}$ be the edges adjacent to $v$ sorted in increasing edge weight, i.e., $(u_1,v)<(u_2,v)<\ldots<(u_q,v)$. If $v$ is adjacent to at least $4f$ vertices in $S$ (i.e., $|N(v)\cap S|\geq 4f$), it is denoted as \emph{clustered} and otherwise it is \emph{unclustered}. The subset of unclustered vertices is denoted by $U$. For every clustered vertex $v$, let $S(v) \subseteq S \cap N(v)$ be the $4f$ \emph{nearest} sampled neighbors of $v$ based on the weights of $W$. For unweighted graphs, the set of $4f$ centers in $S(v)$ can be chosen in an arbitrary manner from the set $N(v)\cap S$. We are now ready to define the clustering. 
For every $s \in S$, let $C(s)=\{ v ~\mid~ s \in S(v)\}$ and define the clustering $\mathcal{C}=\{C(s) ~\mid~ s \in S\}$. 
Note that a clustered vertex belongs to $4f$ clusters in $\mathcal{C}$. 

For a clustered vertex $v$, let $LE(v)=\{ (u,v) ~\mid~ W((u,v))< \max_{s \in S(v)}W((s,v))\}$ be the set of edges adjacent to $v$ that are lighter than the maximum edge weight connecting $v$ to its centers in $S(v)$. For an unclustered vertex $v$, let $LE(v)=E(v)$. At the end of the step, the algorithm defines:
$$H'= \{(s,v) ~\mid~ s \in S, v \in C(s)\} \cup \bigcup_{v \in V}LE(v)~.$$

\noindent \textbf{Step Two: Maximal Independent Set of Adjacent Clusters.} For a clustered vertex $v$, define $E'(v)=E(v)\setminus H'$. In addition, let $S'(v)\subseteq S(v)$ be a subset of $\ell=O(\log n)$ centers, obtained by having $\ell$ independent random uniform samples from $S(v)$.  For every clustered vertex $v$, the algorithm iterates over the edges in $E'(v)$ in increasing edge weights (from the lightest to the heaviest). It maintains a list $L(v)$ of the set of observed centers, and an edge set $\widetilde{E}(v)$ defined as follows. Initially, let $L(v),\widetilde{E}(v)=\emptyset$. Then, when considering the $j^{th}$ edge $(u_j,v)$ in $E'(v)$, the algorithm adds $(u_j,v)$ to $\widetilde{E}(v)$ only if 
$$S'(u_j) \setminus L(v) \neq \emptyset~.$$
In the latter case, it also adds one arbitrary cluster center in $S'(u_j) \setminus L(v)$ to the list $L(v)$.  
Finally, let $H=H' \cup \bigcup_{v \in V \setminus U}\widetilde{E}(v)$. This completes the description of the algorithm. 

\smallskip

\noindent \textbf{Size Analysis.} Using standard application of the Chernoff bound, we have that w.h.p. $|LE(v)|=O(f\log n/p)=O(\sqrt{f n}\log n)$. Hence, $|H'|=O(fn+\sqrt{f}n^{3/2}\log n)$. In the second step, each vertex adds at most one edge for each adjacent cluster. By Chernoff again, w.h.p. the number of clusters is $|\mathcal{C}|=O(\sqrt{f n}\log n)$, therefore, we add $O(\sqrt{f}n^{3/2}\log n)$ edges, in total.

\smallskip 
\noindent \textbf{Stretch Analysis.} Since all the edges adjacent to the unclustered vertices $U$ are in $H$, it is sufficient to consider an edge $(u,v)$ where $u$ and $v$ are clustered, and that in addition, $(u,v)\notin LE(v) \cup LE(u)$.
Let $L(v,u)$ be the set of centers in $L(v)$ just before considering the edge $(u,v)$ (at the point of considering the vertex $v$). Recall, that the edge $(u,v)$ is added to the spanner only if $S'(u)\setminus L(v,u)\neq \emptyset$. We will show that w.h.p.\ either $(u,v)\in  H$, or else, $H$ contains at least $(f+1)$ vertex-disjoint $u$-$v$ paths, each of length at most $3W((u,v))$. We distinguish between two cases. 
\\
\noindent \textbf{Case 1: $|L(v,u) \cap S(u)|\leq |S(u)|/2$.} We claim that, w.h.p., in this case $S'(u)\setminus L(v,u)\neq \emptyset$, and therefore $(u,v) \in H$. To see this observe that by sampling a single vertex uniformly at random from $S(u)$, with probability of $1/2$ this sampled vertex is \emph{not} in $L(v,u)$. Therefore, w.h.p., by making $O(\log n)$ samples (i.e., the set $S'(u)$), at least one of the sampled center is not $L(v,u)$, as desired.
\\ \\
\noindent \textbf{Case 2: $|L(v,u) \cap S(u)|> |S(u)|/2$.} Let $\widetilde{E}(v,u)$ be the edge set of $v$ just before considering the edge $(u,v) \in E'(v)$. For each edge $(z,v) \in \widetilde{E}(v,u)$, let $s(z)$ be the unique center in $S'(z)$ added to $L(v,u)$. Then, in this case $\widetilde{E}(v,u)$ contains at least $|S(u)|/2 \geq 2f$ edges $(u'_1,v),\ldots, (u'_q,v)$ that are all lighter than $(u,v)$ (by the ordering of $E'(v)$), and in addition, $s(u'_1),\ldots, s(u'_q) \in S(u)$. Since the algorithm adds to $H$ all edges in $\widetilde{E}(v,u)$, it implies that $H$ contains $2f$ vertex-disjoint $u$-$v$ paths, each with $3$ edges, given by
$$Q_j=(v,u'_j) \circ (u'_j,s(u'_j)) \circ (s(u'_j),u) \mbox{~for~} j \in \{1,\ldots, 2f\}~.$$

We next claim that $(u,v)$ is heavier than all edges in $Q_j$ for every $j$. By the ordering of edges in $E'(v)$, we have $(u,v)>(u'_j,v)$. Since $(v,u'_j) \notin H'$, it implies that $(v,u'_j) \notin LE(u'_j)$, and therefore $(v,u'_j)>(u'_j,s(u'_j))$ and combining with the above $(u,v)>(u'_j,s(u'_j))$. In the same manner, since $(u,v) \notin LE(u)$, we have that 
$(u,v)>(u,s(u'_j))$. Overall, w.h.p., $H$ consists of $2f$ vertex-disjoint $u$-$v$ paths, each of length at most $3W(e)$. Therefore, for any $F \subseteq V \setminus \{u,v\}$ where $|F|\leq f$, there exists some $u$-$v$ path $Q_j \subseteq H \setminus F$, concluding that $\dist_{H \setminus F}(u,v)\leq 3W(e)$. 

\smallskip

\noindent \textbf{Sequential Running time.} We can assume that $m=\Omega(\sqrt{f}\cdot n^{3/2})$, as otherwise, $H=G$. Sorting the edges $E(v)$ for every $v$, and computation of the clusters can be done in $\widetilde{O}(m)$ time. Computing the set $LE(v)$ can be done in $O(m)$ time. The computation of the sets $S'(v) \subseteq S(v)$ takes $\widetilde{O}(f n)$ time. Fix a vertex $v$. We store the list $L(v)$ using a hash table, and thus computing $\widetilde{E}(v)$ takes $O(deg(v)\log n)$ time, and $\widetilde{O}(m)$ in total. Observe that computation of the sets $S'(u)$ is crucial to provide the $\widetilde{O}(m)$ runtime, otherwise computing $L(v)$ might take $O(deg(v)f)$ time, and $\widetilde{O}(f m)$ in total.

\vspace{-7pt}\section{$f$-FT Vertex $(2k-1)$ Spanners}\label{sec:spanners-general-meta}
In this section, we prove Theorem \ref{thm:linear-time-opt-randomized} and \ref{thm:linear-time-opt-randomized-distributed}. We first describe the meta-algorithm, provide its analysis and then in Subsec. \ref{sec:spanner-imp} present the sequential and distributed implementation details.
%
%

\vspace{-7pt}
\subsection{Description of the Meta-Algorithm}
Similarly to the Baswana-Sen algorithm, the FT-spanner algorithm has $k$ phases, where in each phase $i \in \{1,\ldots, k\}$ given the current clustering $\mathcal{C}_{i-1}$, the algorithm computes a new clustering $\mathcal{C}_i$ and in addition, adds edges to the spanner to handle the newly unclustered vertices. In contrast to the standard Baswana-Sen algorithm, however, the properties of the clusters are quite different. For example, the trees induced by the clusters of $\mathcal{C}_i$ are not vertex-disjoint, but rather \emph{vertex-independent}. 
\begin{definition}[Vertex-Independent Trees]\label{def:indep}
A collection of (not necessarily spanning) trees $T_1,\ldots, T_\ell \subseteq G$ are \emph{vertex-independent} if the following holds for every vertex $v \in V(G)$: the collection of root to $v$ paths in the trees containing $v$ are vertex-disjoint (except for their common endpoint $v$).
\end{definition} 
A crucial requirement for FT clustering is to include each clustered vertex in $\Omega(f)$ clusters. Starting with trivial singleton clustering $\mathcal{C}_0=\{\{v_1\}, \ldots, \{v_n\}\}$, the algorithm maintains the following invariants for the clustering $\mathcal{C}_i$ for every $i \in \{0, \ldots, k-1\}$:
\begin{itemize}
\item[I.]   $|\mathcal{C}_i|=O(f^{i/k}\cdot n^{1-i/k})$.
\item[II.]  Each cluster $C_{i,j} \in \mathcal{C}_i$ has a designated center $s_{i,j}$, and an $i$-depth tree $T_i(s_{i,j})$ rooted at $s_{i,j}$ that spans its cluster vertices. Every cluster has a distinct center. 
\item[III.] The collection of $\{T_{i}(s_{i,j})\}$ trees are vertex-independent.
\item[IV.] The edge weights along every root to leaf $v$ path in $T_{i,j}$ are monotone increasing\footnote{Towards the leafs.}.
\item[V.] Each $i$-clustered vertex for $i\geq 1$, belongs to $K_f=20kf$ clusters in $\mathcal{C}_i$. 
\end{itemize}
Properties (I-IV) hold vacuously for $\mathcal{C}_0$. We need the following definitions. A path $P$ \emph{intersects} a path collection $\mathcal{P}$ if there exists at least one path $P' \in \mathcal{P}$ such that $V(P)\cap V(P')\neq \emptyset$. We write $e > \mathcal{P}$ if $W(e)> W(e')$ for every $e' \in \bigcup_{P \in \mathcal{P}} E(P)$. We are now ready to describe the $i^{th}$ phase of the algorithm. 

\vspace{-5pt}
\subsubsection*{Description of Phase $i$} 
The input to this phase is as follows:
\begin{itemize}
\item the $(i-1)^{th}$ clustering $\mathcal{C}_{i-1}$ (satisfying invariants (I-IV)),
\item the set $V_{i-1} \subseteq V$ of $(i-1)$-clustered vertices, 
\item a collection of tree paths $\mathcal{Q}_{i-1}(v)$ that connects $v$ to its clusters in 
$\mathcal{C}_{i-1}$, for every $v \in V_{i-1}$, 
\item a subset $R_{i-1} \subseteq E$ of remaining edges to protect.  
\end{itemize}
Initially, $\mathcal{C}_0=\{\{v_1\}, \ldots, \{v_n\}\}$, $V_0=V$, $\mathcal{Q}_{0}(v)=\{\{v\}\}$ and $R_0=E(G)$. 
\vspace{-5pt}
\paragraph{Step 1: Maximal Independent Sets of Adjacent Clusters.} The goal of this step is to compute for each vertex $v \in V_{i-1}$ a maximal collection of adjacent clusters in $\mathcal{C}_{i-1}$. This path collection will be denoted by $\mathcal{P}^*_{i-1}(v)$, and will later on determine the clustering $\mathcal{C}_{i}$. The path collection $\mathcal{P}^*_{i-1}(v)$ is defined in two steps. The first major step computes a preliminary maximal independent set of paths $\mathcal{P}_{i-1}(v)$ to the centers in $Z_{i-1}$. For the purpose of handling edge weights, the algorithm employs a cleanup procedure on $\mathcal{P}_{i-1}(v)$ that yields the final set $\mathcal{P}^*_{i-1}(v)$.

By properties (III,V) for $\mathcal{C}_{i-1}$, we have that for every $u \in V_{i-1}$, $\mathcal{Q}_{i-1}(u)$ consists of $K_f=20kf$ vertex-disjoint paths (except for common endpoint $u$).
The step starts by sampling for every $u \in V_{i-1}$ a collection $\mathcal{S}_{i-1}(u) \subset \mathcal{Q}_{i-1}(u)$ of $O(\log n)$ paths sampled independently and uniformly at random from $\mathcal{Q}_{i-1}(u)$. 
For our purposes, it is sufficient to perform the sampling of $\mathcal{S}_{i-1}(u)$ only once, and these sampled sets will be then used in defining $\mathcal{P}^*_{i-1}(v)$ for every $(i-1)$-clustered neighbor $v$ of $u$. The only purpose for defining $\mathcal{S}_{i-1}(u)$ (rather than simply working with the sets 
$\mathcal{Q}_{i-1}(u)$) is for the sake of optimizing the running time\footnote{Using the sets $\{\mathcal{Q}_{i-1}(u)\}$ instead of the sampled sets $\{\mathcal{S}_{i-1}(u)\}$ leads to a running time of $\widetilde{O}(f m)$, which is too costly for proving Theorem \ref{thm:linear-time-opt-randomized}.}.

We next focus on a vertex $v \in V_{i-1}$ and explain how to compute the MIS of paths $\mathcal{P}_{i-1}(v)$. Initially, $\mathcal{P}_{i-1}(v)=\mathcal{Q}_{i-1}(v)$. Let $E_{i-1}(v)=\{(u_1,v), (u_2,v), \ldots, (u_\ell,v)\} \subseteq R_{i-1}$ be $v$'s edges in $R_{i-1}$ sorted in increasing edge weights. The edges of $E_{i-1}(v)$ are traversed one by one, where in iteration $j$ the algorithm considers the edge $(u_j,v)$, and adds at most one path to $\mathcal{P}_{i-1}(v)$, as follows. 

\smallskip 
\noindent\textbf{The $j^{th}$ Iteration:} 
If there exists a path in $\mathcal{S}_{i-1}(u_j)$ that does not intersect the current set of paths $\mathcal{P}_{i-1}(v)$, the algorithm picks one such a path, arbitrarily, denoted as $P^*_{u_j,v} \in \mathcal{S}_{i-1}(u_j)$. The algorithm then adds the path $P=P^*_{u_j,v}\circ (u_j,v)$ to the path collection $\mathcal{P}_{i-1}(v)$. This completes the description of the $j^{th}$ iteration. Note that we indeed preserve the property that all paths in $\mathcal{P}_{i-1}(v)$ are vertex-disjoint (except for the common $v$).  See Fig. \ref{fig:MIS} for an illustration.
\\
\\
\noindent\textbf{Cleanup Step: Shortcutting $\mathcal{P}_{i-1}(v)$.}  We next perform some cosmetic modifications for the paths in $\mathcal{P}_{i-1}(v)$ which are needed, in our analysis, only for the weighted case (that is, this step can be skipped for unweighted graphs). Observe by the end of traversing all edges in $E_{i-1}(v)$, the paths in $\mathcal{P}_{i-1}(v)$ are vertex-disjoint (except for the common endpoint), and start at unique centers in $Z_{i-1}$. The algorithm applies a shortcut procedure $\Shortcut(P)$ on each $P \in \mathcal{P}_{i-1}(v)$. For every path $P=[u_0,\ldots, u_{q}=v] \in \mathcal{P}_{i-1}(v)$, consider the edges $\{ (u, v) ~\mid~ (u,v) \in R_{i-1}, u \in V(P)\}$, and let $(u_j,v)$ be the edge of minimum weight in this list. Define $\Shortcut(P)=P[u_0,u_j]\circ (u_j,v)$ (see Fig. \ref{fig:multi-paths} (Right)), and let
\begin{equation}\label{eq:shortcut-paths}
\mathcal{P}^{*}_{i-1}(v)=\mathcal{Q}_{i-1}(v) \cup \{\Shortcut(P)~\mid~ P \in \mathcal{P}_{i-1}(v)\setminus \mathcal{Q}_{i-1}(v)\}~.
\end{equation}
Note that the paths of $\mathcal{P}^{*}_{i-1}(v)$ are still vertex-disjoint (except for the common endpoint, $v$).
For ease of notation we need the following definition, for a path $P' \in \mathcal{P}^{*}_{i-1}(v)$, let $\Shortcut^{-1}(P')$ be the path $P \in \mathcal{P}_{i-1}(v)$ satisfying that $\Shortcut(P)=P'$. That is, there is a bijection from the paths in $\mathcal{P}_{i-1}(v)$ to the paths in $\mathcal{P}^*_{i-1}(v)$. 

\vspace{-5pt}
\paragraph{Step 2: Defining the $i^{th}$ Clustering $\mathcal{C}_i$.}
Let $Z_{i}=Z_{i-1}[p]$ for $p=(f/n)^{1/k}$ for every $i \in \{1,\ldots, k-1\}$. For $i=k$, let $Z_k=\emptyset$.
A vertex $v$ is denoted as $i$-\emph{clustered} if $\mathcal{P}^*_{i-1}(v)$ \emph{hits} at least $K_f=20kf$ sampled centers. Formally, letting $Z'_{i-1}(v)=\{h(P)~\mid~ P \in \mathcal{P}^*_{i-1}(v)\}$, then $v$ is $i$-\emph{clustered} if $|Z'_{i-1}(v) \cap Z_{i}|\geq K_f$, and otherwise it is $i$-\emph{unclustered}. Note that all the vertices are $k$-unclustered.

For every $i$-clustered $v$, the algorithm adds $v$ to $V_i$ and connects it to the clusters of $K_f$ sampled centers (in $Z_i$) carefully chosen, as follows. Let $\mathcal{P}^*_{i-1}(v)=\{P_1,\ldots, P_\ell\}$ be ordered in increasing edge weights of their last edge, i.e., $LastE(P_1)<LastE(P_2)< \ldots< LastE(P_\ell)$.
We say that a path $P \in \mathcal{P}^*_{i-1}(v)$ is \emph{sampled} if its first vertex is sampled into $Z_i$, i.e., $h(P) \in Z_{i}$. The algorithm then selects the first $K_f$ sampled paths $P_{j_1},\ldots, P_{j_{K_f}}$ in the ordered set $\mathcal{P}^*_{i-1}(v)$. Let 
$$\mathcal{Q}_i(v)=\{P_{j_1},\ldots, P_{j_K}\}, j_1< j_2 < \ldots < j_{K_f}~.$$  
For every $s \in Z_i$, define its spanning tree $T_i(s)$ and its cluster $C_i(s)$ by
\begin{equation}\label{eq:tree}
T_i(s)=\{ P ~\mid~ P \in \mathcal{Q}_i(v), ~h(P)=s, ~v \in V_i\}~,~ C_i(s)=V(T_i(s))~.
\end{equation}
In the analysis, we show that each $T_i(s)$ is indeed a tree of depth at most $i$. 
The $i^{th}$ clustering is given by $\mathcal{C}_i=\{C_i(s) ~\mid~ s \in Z_i\}$. 
\vspace{-5pt} 

\paragraph{Step 3: Defining the Spanner $H_i$ and the Remaining Edge Set $R_i$.}
For an $i$-clustered vertex $v \in V_i$ with $\mathcal{P}^*_{i-1}(v)=\{P_1,\ldots, P_\ell\}$, let $i_v$ be the largest path index taken into $\mathcal{Q}_i(v)$. For an $i$-unclustered vertex $v$, let $i_v=|\mathcal{P}^*_{i-1}(v)|+1$. Define the set of $v$ edges:
\begin{equation}\label{eq:LEiv}
LE_i(v)=\bigcup_{j \leq i_v-1} \{ (u,v) ~\mid~ (u,v) \in R_{i-1}, u \in V(\Shortcut^{-1}(P_j))\}~.
\end{equation}
Note that in the above we consider any $u \in V(\Shortcut^{-1}(P_j))$ and that $V(P_j)\subseteq V(\Shortcut^{-1}(P_j))$. This will be important for the stretch analysis. 
The output spanner $H_i$ is then given by
\begin{equation}\label{eq:Hi}
H_i=\bigcup_{s \in Z_i} E(T_i(s)) \cup \bigcup_{v \in V_{i-1}} LE_i(v) \cup H_{i-1}~.
\end{equation}
The remaining edge set to be handled is defined by  
\begin{equation}\label{eq:Riv}
R_{i}=\{(u,v) \in V_{i} \times V_{i} ~\mid~ (u,v) \notin H_i, (u,v)>\mathcal{Q}_{i}(u) \cup \mathcal{Q}_{i}(v)\}~.
\end{equation}
This completes the description of phase $i \in \{1,\ldots, k\}$. The output spanner is given by $H=H_k$. A succinct pseudocode of phase $i$ is given below. 

\begin{mdframed}[hidealllines=false,backgroundcolor=gray!00]
\center \textbf{Phase $i$ of Alg. $\VertexFTSpanner$:}
\begin{flushleft}
\textbf{Input:} Clustering $\mathcal{C}_{i-1}$, subset $V_{i-1}$ of $(i-1)$ clustered vertices, remaining edge set $R_{i-1}$.
\textbf{Output:} Clustering $\mathcal{C}_{i}$, subset $V_{i}\subseteq V_{i-1}$, $H_{i}$ and $R_i \subseteq R_{i-1}$.
\end{flushleft}

\vspace{-8pt}
\begin{itemize}
\item{\textbf{Step 1: Maximal Disjoint Paths to Adjacent Clusters in} $\mathcal{C}_{i-1}$.}
\begin{enumerate}
\item For each $v \in V_{i-1}$, let $\mathcal{S}_{i-1}(v)$ be a random sample of $O(\log n)$ paths from $\mathcal{Q}_{i-1}(v)$. 

\item For every $v \in V_{i-1}$ compute a set $\mathcal{P}^{*}_{i-1}(v)$ as follows:
\begin{enumerate}
\item Set $\mathcal{P}_{i-1}(v)\gets \mathcal{Q}_{i-1}(v)$.
\item Set $E_{i-1}(v)=\{(u_1,v),\ldots, (u_\ell,v)\} \subseteq R_{i-1}$ ordered in increasing edge weights. 
\item{\textbf{Iteration} $j \in \{1,\ldots \ell\}$:} 
\begin{itemize}
\item If exists $P \in  \mathcal{S}_{i-1}(u_j)$ such that $V(P) \cap V(\mathcal{P}_{i-1}(v))=\emptyset$:
\begin{itemize} \item add $P \circ (u_j,v)$ to $\mathcal{P}_{i-1}(v)$. 
\end{itemize}
\end{itemize}
\item $\mathcal{P}^{*}_{i-1}(v)=\{\Shortcut(P) ~\mid~ P \in \mathcal{P}_{i-1}(v)\}$.
\end{enumerate}
\end{enumerate}

\item{\textbf{Step 2: Computing the} $i^{th} $\textbf{Clustering} $\mathcal{C}_{i}$.}
\begin{enumerate}
\item For $i \leq k-1$: $Z_i=Z_{i-1}[p]$ for $\mathbf{p=(f/n)^{1/k}}$.
\item For $i=k$: Let $Z_k=\emptyset$. 
\item Let $\mathcal{Q}_{i}(v)$ be the $\mathbf{K_f=20kf}$ nearest sampled paths in $\mathcal{P}^{*}_{i-1}(v)$ (if exists).
\item Let $V_i=\{v \in V_{i-1} ~\mid~ |\mathcal{Q}_{i}(v)|=K_f\}$.
\item Let $C_i(s)=\bigcup_{v \in V_{i-1}, P \in \mathcal{Q}_{i}(v), h(P)=s}V(P)$ and $\mathcal{C}_i=\{C_i(s) ~\mid~ s \in Z_i\}~.$
\end{enumerate}

\item{\textbf{Step 3: Defining} $H_i$ \textbf{and} $R_i$ (See Eq. (\ref{eq:Hi},\ref{eq:Riv}))}


\end{itemize}
\end{mdframed}

\begin{figure}[h!]
\begin{center}
\includegraphics[scale=0.35]{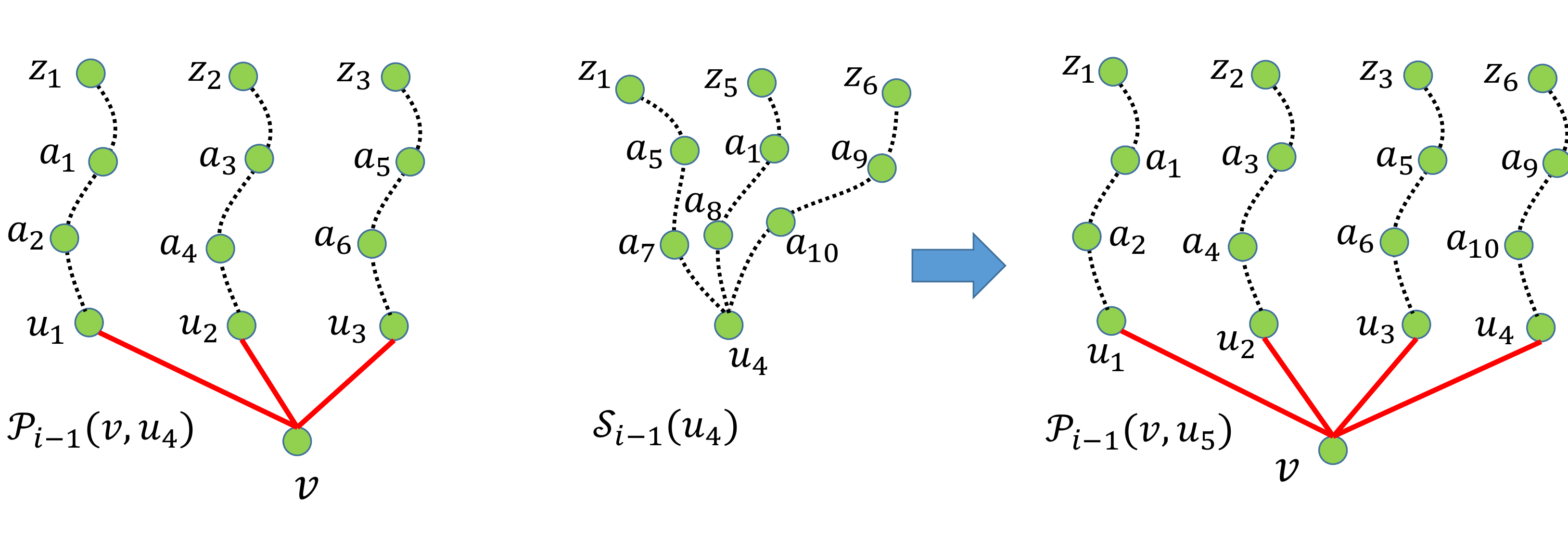}
\caption{\sf An illustration for Step (1) of Alg. $\VertexFTSpanner$. Focus on $v \in V_{i-1}$. The set $\mathcal{P}_{i-1}(v,u_4)$ is the collection of the vertex-disjoint paths (except for $v$) given at the beginning of the $4^{th}$ iteration. In that iteration, the algorithm considers the edge $(v,u_4)$ and scans all paths in $\mathcal{S}(u_4)$ in order to detect a path that is vertex-disjoint from all current paths in $\mathcal{P}_{i-1}(v,u_4)$. In the example, the path $[z_6,a_9,a_{10},u_4]$ is added to the output set $\mathcal{P}_{i-1}(v,u_5)$ (i.e., the input for the $5^{th}$ iteration). \label{fig:MIS} 
}
\end{center}
\end{figure}
\vspace{-3pt}
\subsection{Size Analysis and Auxiliary Claims} \label{sec:size-claims}
Missing proofs in this section are deferred to Appendix \ref{sec:missproof}. 

\begin{observation}\label{obs:cluster-cont}
For every $P \in \mathcal{Q}_{i-1}(v)$ such that $h(P)\in Z_i$, it also holds that $P \in \mathcal{Q}_{i}(v)$
\end{observation}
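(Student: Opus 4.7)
My plan is to trace what happens to $P$ as $\mathcal{Q}_i(v)$ is assembled from $\mathcal{Q}_{i-1}(v)$ through the construction of $\mathcal{P}^{*}_{i-1}(v)$ and the subsequent sampling step. The proof has three short pieces.

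First, observe that $P$ already lives in $\mathcal{P}^{*}_{i-1}(v)$: Equation~(\ref{eq:shortcut-paths}) retains $\mathcal{Q}_{i-1}(v)$ inside $\mathcal{P}^{*}_{i-1}(v)$ verbatim (only paths added during the MIS step are shortcut), so $P \in \mathcal{P}^{*}_{i-1}(v)$ without modification. The hypothesis $h(P) \in Z_i$ then says that $P$ is \emph{sampled} in the sense of the $\mathcal{Q}_i(v)$ construction.

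Second, I would establish the key claim that the last edge of every ``new'' path $P' \in \mathcal{P}^{*}_{i-1}(v) \setminus \mathcal{Q}_{i-1}(v)$ is strictly heavier than the last edge of every ``old'' path $P'' \in \mathcal{Q}_{i-1}(v)$. For $i \geq 2$, this follows from the definition of $R_{i-1}$ in Equation~(\ref{eq:Riv}): each new path arises as $\Shortcut(P^*_{u_j,v} \circ (u_j,v))$ for some $(u_j,v) \in E_{i-1}(v) \subseteq R_{i-1}$, and the shortcut's last edge is itself an edge of $R_{i-1}$ incident to $v$, which by Equation~(\ref{eq:Riv}) is strictly heavier than every edge appearing on any path of $\mathcal{Q}_{i-1}(v)$. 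For the base case $i = 1$, the set $\mathcal{Q}_0(v) = \{\{v\}\}$ is the singleton zero-edge trivial path, which carries no last edge; taking the convention that this trivial path precedes every positive-length path in the $LastE$ ordering handles the case directly.

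Third, since Invariant~V gives $|\mathcal{Q}_{i-1}(v)| \leq K_f$ (trivially for $i{=}1$ and by the invariant for $i \geq 2$), all sampled old paths occupy at most the first $K_f$ positions among the sampled paths of $\mathcal{P}^{*}_{i-1}(v)$ under the $LastE$ ordering. Hence $P$ is among the first $K_f$ sampled paths, and by the definition of $\mathcal{Q}_i(v)$ we conclude $P \in \mathcal{Q}_i(v)$. The only mild obstacle is the base case $i=1$, where $R_0 = E(G)$ removes the weight-ordering argument; there the singleton size of $\mathcal{Q}_0(v)$ combined with the $LastE$ convention for the trivial path makes the conclusion essentially automatic.
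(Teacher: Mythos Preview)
Your proposal is correct and follows essentially the same line as the paper's own proof: both argue that $\mathcal{Q}_{i-1}(v)\subseteq\mathcal{P}^*_{i-1}(v)$, that the last edges of the ``new'' shortcut paths lie in $R_{i-1}$ and are therefore heavier than every edge on the paths of $\mathcal{Q}_{i-1}(v)$, and hence that the old paths precede all new ones in the $LastE$ ordering, so any sampled old path is among the first $K_f$ sampled paths. Your treatment is slightly more careful about the $i=1$ base case and the bound $|\mathcal{Q}_{i-1}(v)|\le K_f$, but the substance is the same.
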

\def\APPENDOBSCLUSTCONT{
\begin{proof}[Proof of Obs. \ref{obs:cluster-cont}]
By definition $\mathcal{Q}_{i-1}(v) \subseteq \mathcal{P}^*_{i-1}(v)$. 
Since the last edges of the shortcut paths 
$$\{\Shortcut(P) ~\mid~ P \in \mathcal{P}_{i-1}(v) \setminus \mathcal{Q}_{i-1}(v)\}$$ are also in $R_{i-1}$ and incident to $v$, we have each of their last edges is heavier than the last edges of $\mathcal{Q}_i(v)$. Therefore the paths of $\mathcal{Q}_{i-1}(v)$ appear \emph{first} in the ordered set $\mathcal{P}^*_{i-1}(v)$. We have that for each sampled center $h(P) \in Z_i$ for every $P \in \mathcal{Q}_{i-1}(v)$, it holds that $P \in \mathcal{Q}_{i}(v)$. 
\end{proof}
}

\begin{observation}\label{obs:cluster-induc}
For every $v \in V_i$ and a path $P \in \mathcal{Q}_i(v)$, it holds that $P[\cdot, u] \in \mathcal{Q}_{i-1}(u) \cap \mathcal{Q}_i(u)$ for every $u \in V(P)\cap (V_i \setminus \{v\})$. 
\end{observation}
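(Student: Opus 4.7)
The plan is to proceed by induction on the phase index $i \geq 1$. For the base case $i=1$, since $\mathcal{Q}_0(\cdot)$ consists of singletons, a quick inspection of Step~1 shows that every $P \in \mathcal{P}^*_0(v)$ is either $\{v\}$ or a single-edge path $[u_j, v]$, so the only non-trivial $u \in V(P)\cap (V_1\setminus\{v\})$ is $u_j$ itself, with $P[\cdot, u_j] = \{u_j\} \in \mathcal{Q}_0(u_j)$. Observation~\ref{obs:cluster-cont}, applied at level~$1$ together with the fact that $P \in \mathcal{Q}_1(v)$ forces $h(P) = u_j \in Z_1$, then yields $\{u_j\} \in \mathcal{Q}_1(u_j)$, which settles the base case.

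For the inductive step, I would fix $v \in V_i$ and $P \in \mathcal{Q}_i(v)$. By Eq.~(\ref{eq:shortcut-paths}) there are exactly two possibilities for how $P$ arose: (A) $P \in \mathcal{Q}_{i-1}(v)$ was inherited directly from the previous level, or (B) $P = \Shortcut(P')$ for some $P' = P^*_{u_j,v}\circ (u_j,v)$ added during an iteration of Step~1, with $P^*_{u_j,v} \in \mathcal{S}_{i-1}(u_j) \subseteq \mathcal{Q}_{i-1}(u_j)$. In both cases my intermediate target is to establish $P[\cdot, u] \in \mathcal{Q}_{i-1}(u)$ for every $u \in V(P)\cap (V_i\setminus\{v\}) \subseteq V_{i-1}\setminus\{v\}$; once this is done, Observation~\ref{obs:cluster-cont} applied at level $i$ (using that $P$ being sampled into $\mathcal{Q}_i(v)$ forces $h(P[\cdot,u]) = h(P) \in Z_i$) promotes this membership to $P[\cdot, u] \in \mathcal{Q}_i(u)$, which is the other half of the claim.

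Case (A) is essentially free: the induction hypothesis applied at level $i-1$ to $P \in \mathcal{Q}_{i-1}(v)$ and to any $u \in V_i \subseteq V_{i-1}\setminus\{v\}$ already yields $P[\cdot,u] \in \mathcal{Q}_{i-2}(u) \cap \mathcal{Q}_{i-1}(u)$. Case (B) is the main technical point and where I expect the only subtlety. Writing $\Shortcut(P') = P^*_{u_j,v}[\cdot, u_{j'}] \circ (u_{j'}, v)$ for the vertex $u_{j'}$ chosen by the shortcut procedure, one must observe that every $u \in V(P)\setminus\{v\}$ lies on the $P^*_{u_j,v}$-prefix and, crucially, that the $P$-prefix $P[\cdot,u]$ coincides with $P^*_{u_j,v}[\cdot,u]$, so the shortcut operation does not tamper with the portion we need to reason about. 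For $u = u_j$ this prefix is $P^*_{u_j,v}$ itself, which lies in $\mathcal{Q}_{i-1}(u_j)$ by construction; for $u \neq u_j$, applying the inductive hypothesis to $P^*_{u_j,v} \in \mathcal{Q}_{i-1}(u_j)$ (using $u \in V_i \subseteq V_{i-1}\setminus\{u_j\}$) gives $P^*_{u_j,v}[\cdot,u] \in \mathcal{Q}_{i-1}(u)$, and hence $P[\cdot,u] \in \mathcal{Q}_{i-1}(u)$. The hard part is thus the bookkeeping around the shortcut: recognizing that $\Shortcut$ only rewrites the \emph{last} segment of its argument, leaving every proper prefix untouched, which is exactly what allows an induction hypothesis stated for genuine $\mathcal{Q}_{i-1}$-paths to transfer to shortcut paths. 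With that observation in hand, Observation~\ref{obs:cluster-cont} does all the remaining level-promotion work.
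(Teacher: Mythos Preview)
Your proof is correct and follows essentially the same inductive approach as the paper: both argue that any proper prefix $P[\cdot,u]$ of a path $P\in\mathcal{Q}_i(v)$ already sits in some $\mathcal{Q}_{i-1}(\cdot)$, and then invoke Observation~\ref{obs:cluster-cont} (using $h(P)\in Z_i$) to promote membership to $\mathcal{Q}_i(u)$. The only notable difference is organizational: you explicitly separate Case~(A) ($P$ inherited from $\mathcal{Q}_{i-1}(v)$) from Case~(B) ($P$ arose via Step~1 and a shortcut), whereas the paper treats both uniformly through $P'=\Shortcut^{-1}(P)$ and the assertion $P'[\cdot,u_j]\in\mathcal{S}_{i-1}(u_j)$; your split is arguably cleaner, since for inherited paths that assertion is not literally ``by construction'' but rather a consequence of the induction hypothesis, which your Case~(A) makes explicit.
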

\def\APPENDOBSCLUSTINDUC{
\begin{proof}[Proof of Obs. \ref{obs:cluster-induc}]
We prove it by induction on $i$. For $i=0$, the claim holds vacuously. Assume it holds up to $i-1$, and consider $i$. Fix $v \in V_i$, a path $P \in \mathcal{Q}_i(v)$ and $u \in V(P)\cap V_i$. Let $P'=\Shortcut^{-1}(P)$ and let $(u_j,v)=LastE(P')$. By construction, $P'[\cdot, u_j]\in \mathcal{S}_{i-1}(u_j) \subset \mathcal{Q}_{i-1}(u_j)$. Therefore, by the induction assumption for $i-1$, it holds that $P'[\cdot, u_\ell] \in \mathcal{Q}_{i-1}(u_\ell)$ for every $u_\ell \in V(P'[\cdot, u_j])$. Since $h(P')=h(P)\in Z_i$, by Obs. \ref{obs:cluster-cont}, we also have that $P'[\cdot, u_\ell] \in \mathcal{Q}_{i}(u_\ell)$ for every $u_\ell \in V(P'[\cdot, u_j])$. Since $P=\Shortcut(P')$, there is $u_q \in V(P'[\cdot, u_j])$ such that $P=P'[\cdot, u_q]\circ (u_q,v)$. We conclude that
$P[\cdot, u_\ell]=P'[\cdot, u_\ell] \in \mathcal{Q}_{i-1}(u_\ell) \cup \mathcal{Q}_{i}(u_\ell)$ for every $u_\ell \in V(P'[\cdot, u_q])$ as required. 
\end{proof}
}
\noindent Recall that $i_v$ is the largest index of the paths in $\mathcal{P}^*_i(v)=\{P_1,\ldots, P_\ell\}$ taken into $\mathcal{Q}_i(v)$.
\begin{claim}\label{cl:clust-prob}
Fix a vertex $v \in V_{i-1}$ and let $\mathcal{P}^*_{i-1}(v)=\{P_1,\ldots, P_\ell\}$ where $LastE(P_1)< \ldots < LastE(P_\ell)$. If $|\mathcal{P}^*_{i-1}(v)|\geq c\cdot k f^{1-1/k}n^{1/k}\log n$ for some constant $c>1$, then, w.h.p., $v$ is $i$-clustered and $i_v \leq c\cdot k f^{1-1/k}n^{1/k}\log n$. 
\end{claim}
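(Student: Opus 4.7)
The plan is to apply a multiplicative Chernoff bound to count how many of the first $T := c\cdot k f^{1-1/k}n^{1/k}\log n$ paths in the ordered list $\mathcal{P}^*_{i-1}(v) = \{P_1, \ldots, P_\ell\}$ have their heads sampled into $Z_i$. Since $\mathcal{Q}_i(v)$ is by definition the first $K_f = 20kf$ sampled paths in this ordering, showing that at least $K_f$ among $P_1,\ldots,P_T$ are sampled w.h.p.\ simultaneously establishes that $v$ is $i$-clustered (because $|Z'_{i-1}(v)\cap Z_i|\geq K_f$) and that $i_v\leq T$.

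The first step is to verify that the heads $\{h(P) : P \in \mathcal{P}^*_{i-1}(v)\}$ are distinct elements of $Z_{i-1}$. For paths already in $\mathcal{Q}_{i-1}(v)\subseteq \mathcal{P}^*_{i-1}(v)$ this follows from invariants (III) and (V) for $\mathcal{C}_{i-1}$: vertex-independence of the trees $T_{i-1}(s)$ together with membership of $v$ in $K_f$ distinct clusters yields distinct heads in $Z_{i-1}$. For a path added in iteration $j$ of Step~1, the head is $h(P')$ for some $P' \in \mathcal{S}_{i-1}(u_j) \subseteq \mathcal{Q}_{i-1}(u_j)$, hence lies in $Z_{i-1}$, and the vertex-disjointness test $V(P)\cap V(\mathcal{P}_{i-1}(v))=\emptyset$ performed in Step~1 guarantees the new head differs from every vertex (in particular, every head) of paths already in $\mathcal{P}_{i-1}(v)$. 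The shortcut operation preserves the head of a path, so distinctness carries over to $\mathcal{P}^*_{i-1}(v)$.

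Next I would invoke the independence of $Z_i$ from $\mathcal{P}^*_{i-1}(v)$. The set $\mathcal{P}^*_{i-1}(v)$ is a deterministic function of the output of phase $i-1$ together with the Step~1 samples $\mathcal{S}_{i-1}(\cdot)$, whereas $Z_i = Z_{i-1}[p]$ with $p = (f/n)^{1/k}$ uses fresh randomness drawn only in Step~2. Conditioning on $\mathcal{P}^*_{i-1}(v)$, the indicators $\{\mathbf{1}[h(P_j) \in Z_i]\}_{j=1}^T$ are therefore mutually independent Bernoulli$(p)$ variables, because the heads are distinct elements of $Z_{i-1}$ and each element of $Z_{i-1}$ is retained in $Z_i$ independently with probability $p$. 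Let $X$ denote their sum; then $\E[X] = Tp = c k f \log n$, and a multiplicative Chernoff bound with $\delta = 1/2$ gives
\[
\Pr\bigl[X < c k f \log n / 2\bigr] \;\leq\; \exp(-c k f \log n / 8) \;=\; n^{-\Omega(ckf)}.
\]
For $c$ a sufficiently large absolute constant (and $n$ large enough so that $c\log n/2 \geq 20$), we have $ckf\log n / 2 \geq 20kf = K_f$, so $X \geq K_f$ with probability $1 - 1/\poly(n)$, which is exactly the desired conclusion.

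There is no real obstacle beyond this Chernoff estimate; the only subtleties are verifying distinctness of the heads (handled by invariants (III) and (V) together with the greedy vertex-disjoint selection in Step~1) and confirming that $Z_i$ is independent of the data used to build $\mathcal{P}^*_{i-1}(v)$ (immediate from the fact that Step~2 samples $Z_i$ only after Step~1 has terminated).
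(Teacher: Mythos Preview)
Your proposal is correct and follows essentially the same approach as the paper: both arguments observe that the heads of the paths in $\mathcal{P}^*_{i-1}(v)$ are distinct elements of $Z_{i-1}$ (by vertex-disjointness except at $v$), apply a Chernoff bound to the independent sampling of $Z_i=Z_{i-1}[p]$, and then restrict to the first $T=c\cdot k f^{1-1/k}n^{1/k}\log n$ paths to conclude $i_v\le T$. Your write-up is in fact more careful than the paper's on two points the paper glosses over---the justification that heads are distinct (via invariants and the Step~1 disjointness test, preserved under $\Shortcut$) and the independence of the Step~2 coins from the Step~1 output---but the underlying argument is identical.
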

\def\APPENDCLUSTPROB{
\begin{proof}[Proof of Claim \ref{cl:clust-prob}]
Recall that $h(P) \in Z_{i-1}$ for every $P \in \mathcal{P}^*_{i-1}(v)$ and as the paths in this set are vertex-disjoint (expect for the endpoint $v$), $h(P)\neq h(P')$ for every $P,P' \in \mathcal{P}^*_{i-1}(v)$. Also recall that $Z'_{i-1}(v)=\{h(P)~\mid~ P \in \mathcal{P}^*_{i-1}(v)\}$ and that $v$ is $i$-\emph{clustered} if $|Z'_{i-1}(v) \cap Z_{i}|\geq K_f$. Clearly, $|Z'_{i-1}(v)|=|\mathcal{P}^*_{i-1}(v)|$. Since each $s \in Z_{i-1}$ is sampled into $Z_i$ independently with probability of $p=(f/n)^{1/k}$, in expectation $|Z'_{i-1}(v) \cap Z_{i}|=|Z'_{i-1}(v)|\cdot p\geq c\cdot K_f \cdot \log n$ provided that $|\mathcal{P}^*_{i-1}(v)|\geq c\cdot k f^{1-1/k}n^{1/k}\log n$. 

By a simple application of the Chernoff bound, we get that w.h.p., $|\mathcal{P}^*_{i-1}(v)|\geq c'\cdot k f^{1-1/k}n^{1/k}\log n$ for some constant $c'< c$. Finally for an $i$-clustered vertex $v$, one can apply this argument on the first $c\cdot k f^{1-1/k}n^{1/k}\log n$ paths in $\mathcal{P}^*_{i-1}(v)$ (sorted based on the weight of their last edge). By the above argument, we get that this subset must be hit by at least $k_f$ sampled centers, concluding that w.h.p. $i_v \leq c\cdot k f^{1-1/k}n^{1/k}\log n$.
\end{proof}
}

\begin{lemma}[Size]\label{lem:size}
W.h.p., $|E(H)|=O(k^3 \cdot \log n \cdot f^{1-1/k}\cdot n^{1+1/k}+k^2 f n)$. 
\end{lemma}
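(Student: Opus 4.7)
The plan is to decompose $H = H_k$ via Eq.~(\ref{eq:Hi}) into the edges added across the $k$ phases, classified into (a) tree edges $\bigcup_{s \in Z_i} E(T_i(s))$ and (b) last-edge bundles $\bigcup_{v \in V_{i-1}} LE_i(v)$; I will bound each class per phase and sum over $i$.

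For the tree edges, invariant (V) states that every $v \in V_i$ lies in exactly $K_f = 20kf$ clusters of $\mathcal{C}_i$, so
$\sum_{s \in Z_i} |E(T_i(s))| \leq \sum_{s \in Z_i} |C_i(s)| = K_f \cdot |V_i| \leq 20kf \cdot n$.
Summing over $i \in \{1,\ldots,k\}$ yields $O(k^2 f n)$, matching the second term of the claimed bound. This part requires no randomness and only invariant (V).

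For the $LE_i(v)$ bundles, I first observe that $|V(\Shortcut^{-1}(P_j))| \leq k+1$ for every $P_j \in \mathcal{P}^*_{i-1}(v)$. Indeed, any $P' \in \mathcal{P}_{i-1}(v)\setminus \mathcal{Q}_{i-1}(v)$ has the form $P'_{u_j,v}\circ(u_j,v)$ with $P'_{u_j,v} \in \mathcal{S}_{i-1}(u_j) \subseteq \mathcal{Q}_{i-1}(u_j)$, and by invariant (II) this tree sub-path has depth at most $i-1 \leq k-1$; paths of $\mathcal{Q}_{i-1}(v) \subseteq \mathcal{P}^*_{i-1}(v)$ satisfy the same depth bound directly. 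By Eq.~(\ref{eq:LEiv}) this yields $|LE_i(v)| \leq (i_v-1)(k+1)$, since for each $j \leq i_v-1$ at most $|V(\Shortcut^{-1}(P_j))|$ edges of $R_{i-1}$ incident to $v$ can be contributed. To bound $i_v$, I invoke Claim~\ref{cl:clust-prob}: if $v$ is $i$-clustered, the claim directly gives $i_v = O(k f^{1-1/k} n^{1/k} \log n)$ w.h.p.; if $v$ is $i$-unclustered, the contrapositive forces $|\mathcal{P}^*_{i-1}(v)| = O(k f^{1-1/k} n^{1/k}\log n)$ w.h.p., so $i_v - 1 \leq |\mathcal{P}^*_{i-1}(v)|$ is bounded identically. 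A union bound over the $O(kn)$ pairs $(i,v)$ preserves the high-probability guarantee since Claim~\ref{cl:clust-prob} has polynomially small failure probability. Therefore $|LE_i(v)| = O(k^2 f^{1-1/k} n^{1/k} \log n)$ simultaneously for all $v, i$, and summing over $v \in V_{i-1}$ and $i \in \{1,\ldots,k\}$ gives the first term $O(k^3 \log n \cdot f^{1-1/k} n^{1+1/k})$.

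The only piece of care needed is the path-length bookkeeping: the shortcut operation can change the tail of a path, but the vertices that feed into $LE_i(v)$ are taken from the \emph{pre-shortcut} path $\Shortcut^{-1}(P_j)$ (and $V(P_j) \subseteq V(\Shortcut^{-1}(P_j))$ in general). This is why the length bound must be argued on the underlying tree path in $\mathcal{Q}_{i-1}(u_j)$ through invariant (II), rather than on the possibly shorter shortcut image. Once this bound and the w.h.p.\ bound on $i_v$ are in place, the rest of the proof is purely arithmetic aggregation across phases.
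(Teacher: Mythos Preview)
Your proof is correct and mirrors the paper's own argument: bound $i_v$ via Claim~\ref{cl:clust-prob}, multiply by the $O(k)$ path length to get $|LE_i(v)| = O(k^2 f^{1-1/k} n^{1/k}\log n)$ per vertex, count $O(K_f n)$ tree edges per phase, and sum over the $k$ phases. One small imprecision: invariant~(V) alone does not justify the equality $\sum_s |C_i(s)| = K_f\,|V_i|$, since internal vertices of tree paths need not lie in $V_i$; the paper sidesteps this by counting only the \emph{new} tree edges in phase $i$ as the last edges of the paths in $\mathcal{Q}_i(v)$ for $v\in V_i$ (via Observation~\ref{obs:cluster-induc}), which directly gives $K_f\,|V_i|$ and the same $O(k^2 fn)$ total.
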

\begin{proof}
We focus on phase $i$ and bound the size of $E(H_i)\setminus E(H_{i-1})$. We show that, w.h.p., $|LE_i(v)|=O(k^2 f^{1-1/k}n^{1/k})$ for every $v \in V_{i-1}$. 
Every vertex $v$ becomes unclustered in a unique phase $i$. At this point we add the edges in 
$LE_{i}(v)$ to the spanner $H_i$. By Claim \ref{cl:clust-prob}, the index $i_v$ is at most $O(k f^{1-1/k}n^{1/k}\log n)$. Recall that $LE_{i}(v)$ consists of $v$'s edges to vertices of $V(\Shortcut^{-1}(P_j))$ for every $j \in \{1,\ldots, i_v-1\}$ (see Eq. (\ref{eq:LEiv})). As each path in $P_j$ and $\Shortcut^{-1}(P_j)$ has at most $i$ edges, we have that $|LE_{i}(v)|=O(k^2 f^{1-1/k}n^{1/k}\log n)$. 

In addition, the algorithm adds to $H_i$ the last edges of the path $P$ for each $P \in \mathcal{Q}_i(v)$, for every $v \in V_i$. Note that by Obs. \ref{obs:cluster-induc}, these edges correspond to $E(T_i(s))\setminus E(T_{i-1}(s))$ for every $s \in Z_i$. As $|\mathcal{Q}_i(v)|=K_f$, this adds a total of $O(k f n)$ edges. Overall, we conclude that $|E(H_i)\setminus E(H_{i-1})|=O(k^2 f^{1-1/k}n^{1+1/k}\log n)$. Summing over all $k$ phases provides the final size bound of $O(k^3 f^{1-1/k}n^{1+1/k}\log n)$. 
\end{proof}

We next show that the clustering $\mathcal{C}_i$ obtained at the end of phase $i$ satisfies properties (I-III). The proof of property (IV) is more involved and deferred to Lemma \ref{lem:monotone}.
\begin{lemma}\label{lem:cluster-invariant}
The output clustering $\mathcal{C}_i$ of phase $i$ satisfies properties (I-III), w.h.p.
\end{lemma}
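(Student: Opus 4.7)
The proof proceeds by induction on the phase index $i$, establishing properties (I), (II), and (III) simultaneously. The base case $i=0$ holds vacuously: $\mathcal{C}_0$ consists of $n$ singleton trees at distinct roots, trivially vertex-independent. For the inductive step I assume properties (I)--(III) hold for $\mathcal{C}_{i-1}$ and verify them for $\mathcal{C}_i$.

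Property (I) follows from a Chernoff bound on sampling: since $Z_i = Z_{i-1}[p]$ with $p=(f/n)^{1/k}$ and by induction $|Z_{i-1}|=\widetilde{O}(f^{(i-1)/k}n^{1-(i-1)/k})$, we get $\mathbb{E}[|Z_i|]=\widetilde{O}(f^{i/k}n^{1-i/k})$ with concentration w.h.p.\ (the small-expectation regime is handled by a trivial union bound). Since $\mathcal{C}_i=\{C_i(s):s\in Z_i\}$ is indexed by distinct centers, $|\mathcal{C}_i|\leq|Z_i|$.

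For (II), the distinct-centers requirement is immediate from the definition $\mathcal{C}_i=\{C_i(s):s\in Z_i\}$. The depth bound follows by induction: any $P \in \mathcal{Q}_i(v)$ is either inherited from $\mathcal{Q}_{i-1}(v)$ (depth $\leq i-1$), or arises as $\Shortcut(P''\circ (u_j,v))$ with $P''\in\mathcal{S}_{i-1}(u_j)\subseteq\mathcal{Q}_{i-1}(u_j)$ of depth $\leq i-1$, and $\Shortcut$ never lengthens a path. The more delicate task is showing each $T_i(s)$ is actually a tree. Here I first observe that the MIS construction, combined with the head-preservation of $\Shortcut$, forces $\mathcal{P}^*_{i-1}(v)$ and hence $\mathcal{Q}_i(v)$ to consist of paths with pairwise distinct heads. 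Then, if $P_1 \in \mathcal{Q}_i(w_1)$ and $P_2 \in \mathcal{Q}_i(w_2)$ both have head $s$ and share a vertex $u$, I show $P_1[\cdot,u]=P_2[\cdot,u]$ in two sub-cases: when $u\in V_i$, Obs.~\ref{obs:cluster-induc} places both prefixes in $\mathcal{Q}_i(u)$ with head $s$, forcing equality by head-uniqueness; when $u\notin V_i$, the vertex $u$ lies on the pre-shortcut paths $P_1''$ and $P_2''$, both of which sit in $T_{i-1}(s)$, and the inductive tree property of $T_{i-1}(s)$ gives $P_1''[\cdot,u]=P_2''[\cdot,u]$.

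Property (III), vertex-independence, is the main obstacle. For a vertex $v$ appearing in trees $T_i(s_1),\ldots,T_i(s_q)$, I split on whether $v\in V_i$. When $v\in V_i$, Obs.~\ref{obs:cluster-induc} shows that every root-to-$v$ path in $T_i(s_j)$---whether $v$ is terminal or internal---coincides with the unique element of $\mathcal{Q}_i(v)$ having head $s_j$; since $\mathcal{Q}_i(v)\subseteq\mathcal{P}^*_{i-1}(v)$ is vertex-disjoint except at $v$ (preserved through the greedy MIS construction and by $\Shortcut$), vertex-independence follows. When $v\notin V_i$, $v$ appears only as an internal vertex in each containing tree; since $\Shortcut$ keeps internal vertices of the preserved prefix intact, $P_j[\cdot,v]$ coincides with a prefix of a path $P_j''\in\mathcal{Q}_{i-1}(\cdot)\subseteq T_{i-1}(s_j)$, and the inductive vertex-independence of $\mathcal{C}_{i-1}$ delivers the required disjointness. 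The subtle point I expect to spend the most effort on is the careful bookkeeping of $\Shortcut$: verifying that the shortcut vertex $u_q$ is a strict internal vertex of the pre-shortcut path so that no internal vertex of $\Shortcut(P')$ is a freshly introduced vertex, thereby letting me transfer the inductive structure of $T_{i-1}$ to $T_i$ in both (II) and (III).
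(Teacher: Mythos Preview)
Your proposal is correct and follows the same inductive framework as the paper's own proof: a Chernoff bound for (I), and Observation~\ref{obs:cluster-induc} together with the vertex-disjointness of $\mathcal{Q}_i(v)\subseteq\mathcal{P}^*_{i-1}(v)$ for (II) and (III). You are in fact more careful than the paper here: the paper's argument for (II) and (III) invokes Obs.~\ref{obs:cluster-induc} only for vertices $u\in V_i$ and does not explicitly treat internal tree vertices $u\notin V_i$, whereas your case split---falling back on the inductive tree/independence structure of $T_{i-1}(s)$ via the observation that $\Shortcut$ introduces no new internal vertices---is exactly the bookkeeping needed to close that gap.
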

\def\APPENDCLUSTINV{
\begin{proof}[Proof of Lemma \ref{lem:cluster-invariant}]
We prove this by induction on $i$. For $i=0$, properties (I-III) hold vacuously. 
We next prove by induction that $|Z_i|=f^{i/k}\cdot n^{1-i/k}$ in expectation for every $i \in \{0,\ldots, k-1\}$. Since $|Z_0|=V$, the claim holds for $i=0$. Assume it holds up to $i-1$, then in expectation $|Z_i|=p \cdot |Z_{i-1}|=f^{i/k}\cdot n^{1-i/k}$. Note that each vertex $v \in Z_0$ is in $Z_i$ with probability of $p^i=(f/n)^{i/k}$. Using Chernoff bound we have that w.h.p. $|Z_i|=O(f^{i/k}\cdot n^{1-i/k})$. 

We next show that each $T_i(s)$ given by Eq. (\ref{eq:tree}) is a tree of depth at most $i$ for every $s \in Z_i$. Fix a path $P \subset T_i(s)$ such that $P \in \mathcal{Q}_i(v)$ for some $v \in V_i$. It is easy to see $|P|\leq i$. By Obs. \ref{obs:cluster-induc}, we have that for $u \in V(P) \cap V_i$, it holds that $P[\cdot,u] \in \mathcal{Q}_i(u)$. This implies that for each $u \in V(T_i(s))$, there is a unique $s$-$u$ path 
in $T_i(s)$ given by $P[\cdot,u]$. Property (II) follows. Property (III) follows by the fact that the tree paths of vertex $v$ in the clusters of $\mathcal{C}_i$ are given by $\mathcal{Q}_i(v)$, and these paths are, by construction, vertex-disjoint (except for the endpoint $v$). 
\end{proof}
}
\noindent The shortcutting procedure of each path $P \in \mathcal{P}_{i-1}(v)$ immediately implies that:
\begin{observation}\label{obs:shortcut}
$\forall P\in \mathcal{P}_{i-1}(v)$, $W(LastE(\Shortcut(P)))\leq W(e')$ for every $e'=(u,v) \in R_{i-1}$ and $u \in V(P)$.
\end{observation}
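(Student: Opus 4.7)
The plan is to unwind the definition of $\Shortcut(P)$ directly; the statement is essentially a tautology once we track what ``last edge'' means for a shortcut path. First I would recall that for $P = [u_0, \ldots, u_q = v] \in \mathcal{P}_{i-1}(v)$, the procedure scans the candidate edge set
\[
\mathcal{E}(P) \;=\; \{ (u,v) \in R_{i-1} \;\mid\; u \in V(P) \}
\]
and selects $(u_j, v)$ to be the (unique, by the uniqueness of weights) minimum-weight edge in $\mathcal{E}(P)$. Then $\Shortcut(P) = P[u_0, u_j] \circ (u_j, v)$, so by construction $LastE(\Shortcut(P)) = (u_j, v)$.

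The key step is simply to observe that the claimed inequality is exactly the minimality property of $(u_j, v)$ within $\mathcal{E}(P)$. Given any candidate $e' = (u, v) \in R_{i-1}$ with $u \in V(P)$, we have $e' \in \mathcal{E}(P)$, and therefore $W(e') \geq W((u_j, v)) = W(LastE(\Shortcut(P)))$. This is precisely the asserted bound.

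Since the observation is a direct consequence of the definition, there is really no obstacle to overcome; the only care needed is to confirm that $\mathcal{E}(P)$ is nonempty whenever $\Shortcut(P)$ is defined, which follows because $LastE(P) = (u_{q-1}, v)$ itself lies in $R_{i-1}$ (this edge is what caused $P$ to be added to $\mathcal{P}_{i-1}(v)$ in the MIS loop). Hence $\mathcal{E}(P) \neq \emptyset$, the minimum $(u_j, v)$ exists, and the one-line argument above completes the proof.
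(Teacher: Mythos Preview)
Your proof is correct and matches the paper's treatment, which simply declares that the observation ``immediately'' follows from the definition of $\Shortcut$; you have just spelled out that immediacy. One small remark: your nonemptiness check ($LastE(P)\in R_{i-1}$) applies only to paths $P\in\mathcal{P}_{i-1}(v)\setminus\mathcal{Q}_{i-1}(v)$, since for $P\in\mathcal{Q}_{i-1}(v)$ the last edge is not in $R_{i-1}$ --- but this is harmless, because in that case either $\mathcal{E}(P)=\emptyset$ and the statement is vacuous, or $\mathcal{E}(P)\neq\emptyset$ and your minimality argument goes through unchanged.
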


\vspace{-12pt}
\subsection{The Stretch Argument} \label{subsec:stretch} \vspace{-5pt}
We next turn to consider the stretch argument, which constitutes the major technical contribution. As we will see the analysis of our FT-variant of the Baswana-Sen algorithm calls for a new graph theoretical characterization. We start by identifying a sufficient condition for $(f,i)$ protection (see Def. \ref{def:protect}). We then show that, w.h.p., this condition holds for every edge in the graph, which will establish that the output subgraph is an $f$-VFT $(2k-1)$-spanner. 

\begin{observation}[Sufficient condition for $(f,i)$ protection]\label{obs:Suff}
Fix an edge $(u,v) \in E$, and let $H \subseteq G$ be a subgraph containing a collection of $(\alpha \cdot f+1)$ paths $\mathcal{P}=\{P_1,\ldots, P_{\alpha \cdot f+1}\}$ between $u$ and $v$ such that:
\begin{itemize}
\item $|P_j|\leq 2i-1$, for every $P_j \in \mathcal{P}$;

\item $(u,v)$ is heavier than every edge in $E(P_j)$, for every $P_j \in \mathcal{P}$;

\item every vertex $x \in V \setminus \{u,v\}$ appears on at most $\alpha$ paths in $\mathcal{P}$;
\end{itemize}
then $(u,v)$ is $(f,i)$ protected in $H$.
\end{observation}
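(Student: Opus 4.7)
The plan is to prove Observation~\ref{obs:Suff} by a direct pigeonhole/counting argument: I want to show that for any fault set $F \subseteq V \setminus \{u,v\}$ with $|F| \leq f$, at least one of the $\alpha f + 1$ paths in $\mathcal{P}$ survives in $H \setminus F$, and that any such surviving path already witnesses the required distance bound.

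First, I would fix an arbitrary $F \subseteq V \setminus \{u,v\}$ with $|F| \leq f$ and count the number of paths of $\mathcal{P}$ that can be destroyed by $F$. By the third bullet, each individual vertex $x \in V \setminus \{u,v\}$ lies on at most $\alpha$ of the paths $P_1,\dots,P_{\alpha f+1}$. Therefore the total number of paths intersected by $F$ is at most
\[
\sum_{x \in F} |\{\, P_j \in \mathcal{P} \;:\; x \in V(P_j) \,\}| \;\leq\; \alpha \cdot |F| \;\leq\; \alpha f .
\]
Since $|\mathcal{P}| = \alpha f + 1$, at least one path $P^* \in \mathcal{P}$ satisfies $V(P^*) \cap F = \emptyset$, so $P^*$ is entirely contained in $H \setminus F$.

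Next I would translate the hop bound and the weight domination into a length bound. By the first bullet, $|P^*| \leq 2i-1$, and by the second bullet every edge $e' \in E(P^*)$ satisfies $W(e') < W((u,v))$. Hence
\[
\mathrm{len}(P^*) \;=\; \sum_{e' \in E(P^*)} W(e') \;\leq\; (2i-1)\cdot W((u,v)).
\]
Combining the two observations gives $\dist_{H \setminus F}(u,v) \leq \mathrm{len}(P^*) \leq (2i-1) W((u,v))$. Since $F$ was an arbitrary fault set of size at most $f$, this matches Definition~\ref{def:protect}, establishing that $(u,v)$ is $(f,i)$-protected in $H$.

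There is essentially no technical obstacle here: both required ingredients (fault-free survivor and length bound) follow immediately from the three bulleted hypotheses via a one-line counting argument plus a one-line weight summation. The only thing worth being careful about is that the paths in $\mathcal{P}$ are treated as subgraphs, so that "a vertex $x$ kills the path" means $x \in V(P_j) \setminus \{u,v\}$ — which is exactly what the third bullet quantifies over — so the count $\alpha \cdot |F|$ is tight and makes the choice of size $\alpha f + 1$ for $\mathcal{P}$ sufficient.
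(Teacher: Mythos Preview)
Your proof is correct and follows essentially the same approach as the paper: a pigeonhole count shows that $F$ can destroy at most $\alpha f$ paths, leaving a surviving path whose length is bounded by $(2i-1)W((u,v))$ via the hop and weight conditions. The paper's version is just a terser rendition of exactly this argument.
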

\begin{proof}
Fix a faulty subset $F \subseteq V \setminus \{u,v\}$ where $|F|\leq f$. As each vertex $x \in F$ intersects with at most $\alpha$ paths, we have that $F$ intersects with at most $\alpha \cdot f$ paths. Therefore, there exists a $u$-$v$ path $P_j \in \mathcal{P}$ satisfying that $V(P_j) \cap F=\emptyset$. We have that $\dist_{H\setminus F}(u,v)\leq len(P_{j})\leq (2i-1)W(e)$ as required.
\end{proof}

We next state the main technical stretch argument, which will be shown in a sequence of claims. Most of the analysis is done w.r.t to the set of vertex-disjoint paths $\mathcal{P}_{i-1}(v)$, and only at the end of the arguments we extend them to work w.r.t the shortcut set $\mathcal{P}^*_{i-1}(v)$. 
\begin{mdframed}[hidealllines=true,backgroundcolor=gray!25]
\vspace{-5pt}
\begin{lemma}[Main Stretch Lemma]\label{lem:key}
Fix $i \in \{1,\ldots, k\}$ and let $(u,v) \in R_{i-1}$ such that $u \notin V(\mathcal{P}_{i-1}(v))$. Then, w.h.p., one of the two must hold: either $(u,v)$ is $(f,i)$-protected in $H_i$, or else $(u,v) \in R_i$. 
\end{lemma}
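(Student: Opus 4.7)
The plan is to build, with high probability, a collection $\mathcal{R}$ of $u$-$v$ paths of hop-length $\le 2i-1$ with all edges lighter than $(u,v)$ and with bounded vertex-overlap, so as to invoke Observation~\ref{obs:Suff}; a careful case split will force the alternative $(u,v)\in R_i$ whenever the construction's edges cannot all be guaranteed to lie in $H_i$. The randomness enters through the uniform sample $\mathcal{S}_{i-1}(u)\subseteq\mathcal{Q}_{i-1}(u)$ of size $\Theta(\log n)$. Let $\mathcal{P}_{i-1}(v,u)$ denote the snapshot of $\mathcal{P}_{i-1}(v)$ just before iteration $(u,v)$. The hypothesis $u\notin V(\mathcal{P}_{i-1}(v))$ forces every path of $\mathcal{S}_{i-1}(u)$ to intersect $V(\mathcal{P}_{i-1}(v,u))$ (otherwise a disjoint sample would have been added in that iteration, putting $u$ into $V(\mathcal{P}_{i-1}(v))$); a Chernoff bound then shows that w.h.p. at least $3K_f/4$ of the $K_f=20kf$ paths of $\mathcal{Q}_{i-1}(u)$ do so, since otherwise the event that all $\Theta(\log n)$ samples hit has probability $(1/4)^{\Theta(\log n)}=1/\poly(n)$. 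Denote these paths $Q_1,\dots,Q_m$, with intersection vertex $w_j\in V(Q_j)\cap V(P''_j)$ for some $P''_j\in\mathcal{P}_{i-1}(v,u)$, and set $R_j:=Q_j[u,w_j]\circ P''_j[w_j,v]$.

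Next I verify the conditions of Observation~\ref{obs:Suff}. Each $R_j$ has $\le(i-1)+i=2i-1$ edges. Every edge of $R_j$ is strictly lighter than $(u,v)$: edges of $Q_j$ since $(u,v)\in R_{i-1}$ gives $(u,v)>\mathcal{Q}_{i-1}(u)$; edges of $P''_j$ either by $(u,v)>\mathcal{Q}_{i-1}(v)$ (when $P''_j\in\mathcal{Q}_{i-1}(v)$) or because $P''_j=P'\circ(u_\ell,v)$ was added in an earlier iteration $\ell$, whence $W((u_\ell,v))<W((u,v))$ by the sorted order of $E_{i-1}(v)$ and $(u_\ell,v)\in R_{i-1}$ dominates $\mathcal{Q}_{i-1}(u_\ell)\supseteq\{P'\}$. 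Hence $\mathrm{len}(R_j)\le(2i-1)W((u,v))$. For vertex independence: the $Q_j$'s are pairwise vertex-disjoint except at $u$ and the $P''_j$'s are pairwise vertex-disjoint except at $v$; any $x\in V\setminus\{u,v\}$ thus lies on at most one $Q_j$ and on at most one $P''_k$, contributing to at most $|V(P''_k)|\le i+1$ of the $R_j$'s via $P''_k$. So $x$ appears in at most $\alpha:=i+2\le k+2$ paths of $\mathcal{R}$, and since $|\mathcal{R}|\ge 3K_f/4\ge\alpha f+1$, the arithmetic premise is met.

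Finally I show the edges of $\mathcal{R}$ lie in $H_i$, or else conclude $(u,v)\in R_i$. Internal edges of $Q_j,P''_j$ are tree edges of some $T_{i-1}(\cdot)\subseteq H_{i-1}\subseteq H_i$. The only possibly-missing edges are the last edges $(u_\ell,v)$ of $P''_j=P'\circ(u_\ell,v)$, which by \eqref{eq:LEiv} lie in $LE_i(v)\subseteq H_i$ precisely when the index of $\Shortcut(P''_j)$ in the sorted $\mathcal{P}^*_{i-1}(v)$ is $<i_v$. By Obs.~\ref{obs:shortcut}, $W(\mathrm{LastE}(\Shortcut(P''_j)))\le W((u_\ell,v))<W((u,v))$, a fact I will leverage throughout the case split. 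If $v\notin V_i$, then $i_v=|\mathcal{P}^*_{i-1}(v)|+1$ and every such index is $<i_v$, so $\mathcal{R}\subseteq H_i$ and Observation~\ref{obs:Suff} gives $(f,i)$-protection. If $v\in V_i$, I compare $W(\mathrm{LastE}(P_{i_v}))$ with $W((u,v))$: when the former is $\ge$ the latter, the sorted order forces every $\Shortcut(P''_j)$ to sit at index $<i_v$ (its last edge is strictly below $W((u,v))$), so $\mathcal{R}\subseteq H_i$ and protection again holds; when the former is $<$ the latter, property~(IV) yields $(u,v)>\mathcal{Q}_i(v)$, and an analogous analysis on the $u$-side (invoking either $LE_i(u)$ coverage if $u\notin V_i$, or the symmetric sampling dichotomy applied through $\mathcal{S}_{i-1}(v)$ depending on whether $v\in V(\mathcal{P}_{i-1}(u))$) delivers either $(u,v)>\mathcal{Q}_i(u)$ — giving $(u,v)\in R_i$ once we also observe $(u,v)\notin H_i$ and $u,v\in V_i$ — or a supplementary collection of $u$-$v$ paths in $H_i$ completing the protection argument. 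The main obstacle is precisely this last symmetric sub-case: the argument must track how the shortcut step's weight-monotonicity (Obs.~\ref{obs:shortcut}) interacts with both sorted sets $\mathcal{P}^*_{i-1}(v),\mathcal{P}^*_{i-1}(u)$ and with the cutoffs $i_v,i_u$, in order to simultaneously secure the weight-domination $(u,v)>\mathcal{Q}_i(u)\cup\mathcal{Q}_i(v)$ required for $R_i$-membership.
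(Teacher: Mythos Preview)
Your approach is correct and in fact takes a genuinely simpler route than the paper. The paper's proof hinges on the ``Key Structural Lemma'' (Lemma~\ref{lem:path-pairs}): a greedy procedure that extracts exactly $3f$ pairs $(Q'_j,P'_j)$ with the $P'_j$'s pairwise \emph{distinct}, so that any $x\notin\{u,v\}$ lies on at most one $Q'_j$ and at most one $P'_j$, giving overlap parameter $\alpha=2$ in Observation~\ref{obs:Suff}. You sidestep this lemma entirely: you keep \emph{all} $m\ge 3K_f/4=15kf$ intersecting paths $Q_j$ and pair each with an arbitrary $P''_j$, accepting the cruder overlap bound $\alpha\le i+2\le k+2$ (your counting argument---each vertex of the unique $P''$ through $x$ lies on at most one $Q_j$---is correct). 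Since $K_f=20kf$, the arithmetic $15kf\ge (k+2)f+1$ still clears. What the paper's route buys is a tight $\alpha=2$ (hence a smaller required $K_f$ in principle); what your route buys is the elimination of a nontrivial greedy pairing argument. Two minor points: your bound $(1/4)^{\Theta(\log n)}$ should be $(3/4)^{\Theta(\log n)}$ (still $1/\poly(n)$), and your case split on $W(\LastE(P_{i_v}))$ versus $W((u,v))$ is cleaner than the paper's corresponding step (Claim~\ref{cl:aux4}), which reasons about ``some $P'_j$ not taken into $H_i$''.

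The one place you stop short is the $u$-side, which you flag as ``the main obstacle.'' The lemma's hypothesis is one-sided ($u\notin V(\mathcal{P}_{i-1}(v))$), so you cannot simply invoke symmetry; you must separately handle $v\in V(\mathcal{P}_{i-1}(u))$. The paper's proof of Lemma~\ref{lem:key} is equally terse here (``repeating the exact same analysis from the point of view of $u$'') and effectively defers this sub-case to the argument in Lemma~\ref{lem:stretch-finallem}: if $v$ lies on some $P\in\mathcal{P}_{i-1}(u)$, compare the index of $\Shortcut(P)$ in $\mathcal{P}^*_{i-1}(u)$ with $i_u$; either $(v,u)\in LE_i(u)\subseteq H_i$, or Obs.~\ref{obs:shortcut} plus Lemma~\ref{lem:monotone} gives $(u,v)>\mathcal{Q}_i(u)$. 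Your sketch names exactly these ingredients, so the gap is one of exposition rather than of idea.
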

\end{mdframed}

Since $e=(u,v) \in R_{i-1}$, by Def. \ref{eq:Riv} it holds that $u$ and $v$ are $(i-1)$-clustered, and in addition:
\begin{equation}\label{eq:increasing-path-edge}
W(e) > W(e') \mbox{~~for every~~} e' \in \{ E(P) ~\mid~ P \in \mathcal{Q}_{i-1}(u) \cup \mathcal{Q}_{i-1}(v)\}~.
\end{equation}
Let $\ell$ be such that $(u,v)$ is the $\ell^{th}$ edge in $E_{i-1}(v) \subseteq R_{i-1}$, and denote the set $\mathcal{P}_{i-1}(v)$ at the beginning of the $\ell^{th}$ iteration (namely, just before considering the edge $(u,v)$) by $\mathcal{P}_{i-1}(v,u)$. See Fig. \ref{fig:MIS} for an illustration. 
We start with some auxiliary claims to identify cases in which $(u,v)$ is protected. 
\begin{claim}\label{cl:aux1}
If at least $1/4$ of the paths in $\mathcal{Q}_{i-1}(u)$ contains $v$, then $(u,v)$ is $(f,i)$ protected in $H_{i-1}$. 
\end{claim}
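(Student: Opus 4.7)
The plan is as follows. I first interpret the hypothesis ``at least $1/4$ of the paths in $\mathcal{Q}_{i-1}(u)$ contains $v$'' as the statement that there are at least $K_f/4 = 5kf$ paths $P\in\mathcal{Q}_{i-1}(u)$ whose underlying tree $T_{i-1}(h(P))$ also contains $v$; equivalently, $|S^{uv}| \geq 5kf$, where $S^{uv} := \{h(P) : P \in \mathcal{Q}_{i-1}(u)\} \cap \{h(P) : P \in \mathcal{Q}_{i-1}(v)\}$. The literal reading (namely, $v$ lying on the path itself) is vacuous here: invariant (III) applied at $u$ forces any $v \neq u$ to lie on at most one path of $\mathcal{Q}_{i-1}(u)$, so the hypothesis could never hold for $K_f>4$. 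For each $s\in S^{uv}$, I will write $P^u_s\in\mathcal{Q}_{i-1}(u)$ and $P^v_s\in\mathcal{Q}_{i-1}(v)$ for the root-to-$u$ and root-to-$v$ paths inside the tree $T_{i-1}(s)$.

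For each $s\in S^{uv}$ I will use the unique $u$--$v$ tree-path $Q_s$ in $T_{i-1}(s)$. Since $T_{i-1}(s)$ has depth at most $i-1$ by invariant (II), we get $|Q_s|\leq 2(i-1)\leq 2i-1$. Splitting $Q_s$ at the LCA of $u$ and $v$ in the tree shows $E(Q_s)\subseteq E(P^u_s)\cup E(P^v_s)\subseteq E(T_{i-1}(s))\subseteq H_{i-1}$, so each $Q_s$ already lives in $H_{i-1}$. Because $(u,v)\in R_{i-1}$, Eq.~(\ref{eq:Riv}) gives $(u,v)>\mathcal{Q}_{i-1}(u)\cup\mathcal{Q}_{i-1}(v)$, so $(u,v)$ is heavier than every edge of each $Q_s$.

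The main technical step, and the principal obstacle, is to sharply bound the vertex-overlap among the paths $\{Q_s\}_{s\in S^{uv}}$ so as to match the hypothesis of Obs.~\ref{obs:Suff}. The key observation is that any $x\notin\{u,v\}$ on $Q_s$ must be an ancestor of $u$ or of $v$ inside $T_{i-1}(s)$, and hence $x\in V(P^u_s)\cup V(P^v_s)$. Invariant (III) applied at the vertex $u$ says the paths of $\mathcal{Q}_{i-1}(u)$ are pairwise vertex-disjoint except at $u$, so $x$ can lie on at most one $P^u_s$; by the symmetric argument at $v$, $x$ can lie on at most one $P^v_s$. Therefore each such $x$ appears on at most $\alpha=2$ paths of $\{Q_s\}_{s\in S^{uv}}$.

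Since $|S^{uv}|\geq 5kf \geq 2f+1 = \alpha f+1$, I select any $2f+1$ of the $Q_s$'s and invoke Obs.~\ref{obs:Suff} with $\alpha=2$ to conclude that $(u,v)$ is $(f,i)$-protected in $H_{i-1}$. The delicate point is really the constant $\alpha=2$: a more naive overlap bound would scale with $f$ or $k$ and would not leave enough vertex-disjoint paths to absorb $f$ faults, and the tightness comes from applying vertex-independence at \emph{both} endpoints of the edge $(u,v)$ at once.
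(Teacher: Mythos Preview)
The paper reads ``contains $v$'' literally as $v\in V(P)$ and gives a one-line proof: the subpaths $P_j[v,u]$ of the $\ell'\ge K_f/4\ge 3f$ paths $P_j\in\mathcal{Q}_{i-1}(u)$ through $v$ are vertex-disjoint $u$--$v$ paths in $H_{i-1}$ of hop-length at most $i-1$, each lighter than $(u,v)$ by Eq.~(\ref{eq:increasing-path-edge}).

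Your vacuity observation is correct and is by itself already a complete proof of the claim as written: the paper states explicitly (just above the claim) that the paths of $\mathcal{Q}_{i-1}(u)$ are vertex-disjoint except at $u$, so at most one of them can contain $v$, and the premise is never met once $K_f>4$. The paper does not remark on this; Claim~\ref{cl:aux1} is used only to justify the first condition of Case~$\star$, which is therefore always automatically satisfied.

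Your reinterpretation via common centers $S^{uv}$ and the accompanying argument are correct and prove a genuinely non-vacuous, stronger statement (the literal hypothesis implies yours via Obs.~\ref{obs:cluster-induc}). It is more than what the paper needs here, but the $\alpha=2$ overlap bound you extract---applying vertex-independence at \emph{both} endpoints simultaneously---is exactly the mechanism the paper deploys later, in Cor.~\ref{cor:aux3easy}, when combining the path-pairs of Lemma~\ref{lem:path-pairs}. So your detour is not wasted; it just anticipates a later step of the stretch argument rather than belonging to this (essentially trivial) claim.
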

\begin{proof}
Let $P_1,\ldots, P_{\ell'}$ be the paths in $\mathcal{Q}_{i-1}(u)$ that contains $v$. Then, the paths $P_1[u,v], \ldots, P_{\ell'}[u,v] \subseteq H_{i-1}$ are vertex-disjoint, and the claim holds as $\ell' \geq K_f/4\geq 3f$ and by Eq. (\ref{eq:increasing-path-edge}).
\end{proof}

\begin{claim}\label{cl:aux2}
If at least half of the paths in $\mathcal{Q}_{i-1}(u)$ \emph{do not intersect} with $\mathcal{P}_{i-1}(v,u)$, then w.h.p., $u \in V(\mathcal{P}_{i-1}(v))$. 
\end{claim}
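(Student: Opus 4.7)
The plan is to exhibit, with high probability, a path $P \in \mathcal{S}_{i-1}(u)$ whose vertex set is disjoint from $V(\mathcal{P}_{i-1}(v,u))$. Once such a $P$ is shown to exist, the description of the $\ell^{th}$ iteration in Step 1 of Phase $i$ forces the algorithm to append $P \circ (u,v)$ to $\mathcal{P}_{i-1}(v)$, thereby placing $u$ in $V(\mathcal{P}_{i-1}(v))$, which is exactly what the claim asserts.

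The core is a direct sampling bound. Recall that $\mathcal{S}_{i-1}(u)$ is formed by drawing $c\log n$ paths independently and uniformly at random from $\mathcal{Q}_{i-1}(u)$, that this sampling is done once at the start of Phase $i$, and that the samples are drawn independently across the vertices of $V_{i-1}$. In particular, $\mathcal{S}_{i-1}(u)=\mathcal{S}_{i-1}(u_\ell)$ is independent of the samples $\{\mathcal{S}_{i-1}(u_j)\}_{j<\ell}$, which are the only random objects (beyond the already-fixed clustering $\mathcal{C}_{i-1}$ and its associated tree paths) that the construction of $\mathcal{P}_{i-1}(v,u)$ consults before iteration $\ell$. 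Consequently, after conditioning on all randomness that fixes $\mathcal{Q}_{i-1}(u)$ and $\mathcal{P}_{i-1}(v,u)$, the elements of $\mathcal{S}_{i-1}(u)$ remain i.i.d.\ uniform on $\mathcal{Q}_{i-1}(u)$.

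Call a path in $\mathcal{Q}_{i-1}(u)$ \emph{good} if its vertex set is disjoint from $V(\mathcal{P}_{i-1}(v,u))$. The hypothesis says at least half of the $K_f$ paths in $\mathcal{Q}_{i-1}(u)$ are good, so each sample is good with probability at least $1/2$ independently. The probability that none of the $c\log n$ samples is good is therefore at most $2^{-c\log n}=n^{-c}$, which is polynomially small for any desired exponent by choosing $c$ sufficiently large. On the complementary event, $\mathcal{S}_{i-1}(u)$ contains a good path $P$; since $P$ does not intersect $V(\mathcal{P}_{i-1}(v,u))$, iteration $\ell$ selects some such path and augments $\mathcal{P}_{i-1}(v)$ with $P\circ(u,v)$, so $u\in V(\mathcal{P}_{i-1}(v))$.

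The only subtle step is the independence assertion in the second paragraph—one has to verify that no random choice used to define $\mathcal{S}_{i-1}(u)$ has already been re-used in constructing $\mathcal{P}_{i-1}(v,u)$. This is immediate from the algorithm's specification: $\mathcal{P}_{i-1}(v,u)$ is exactly the state of $\mathcal{P}_{i-1}(v)$ before iteration $\ell$ consults $\mathcal{S}_{i-1}(u_\ell)$, so only the per-vertex samples at $u_1,\dots,u_{\ell-1}$ feed into it. Once this is observed, the rest is a one-line independent sampling bound.
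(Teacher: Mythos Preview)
Your proof is correct and follows essentially the same approach as the paper's own proof: both argue that since $\mathcal{S}_{i-1}(u)$ consists of $\Theta(\log n)$ independent uniform samples from $\mathcal{Q}_{i-1}(u)$, and at least half of $\mathcal{Q}_{i-1}(u)$ is ``good'', with high probability some sampled path is good and gets added in iteration $\ell$. Your version is more explicit than the paper's about why the conditioning on $\mathcal{P}_{i-1}(v,u)$ does not bias the distribution of $\mathcal{S}_{i-1}(u)$ (namely, that $\mathcal{P}_{i-1}(v,u)$ depends only on $\mathcal{Q}_{i-1}(v)$ and the per-vertex samples $\{\mathcal{S}_{i-1}(u_j)\}_{j<\ell}$, which are independent of $\mathcal{S}_{i-1}(u_\ell)$); the paper leaves this independence implicit.
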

\begin{proof}
Recall that $\mathcal{S}_{i-1}(u) \subseteq \mathcal{Q}_{i-1}(u)$ is a random sample of $O(\log n)$ paths, each sampled uniformly at random from $\mathcal{Q}_{i-1}(u)$. Since at least half of the paths in $\mathcal{Q}_{i-1}(u)$ do not intersect with $\mathcal{P}_{i-1}(v,u)$ there is a probability of $1/2$ to sample a path that does not intersect $\mathcal{P}_{i-1}(v,u)$. Since the algorithm samples $\Theta(\log n)$ paths independently into  $\mathcal{S}_{i-1}(u)$, w.h.p. at least one of these paths does not intersect $\mathcal{P}_{i-1}(v,u)$. In the latter case, this path is added to $\mathcal{P}_{i-1}(v)$ and $u \in V(\mathcal{P}_{i-1}(v))$. 

\end{proof}
\noindent Since the key stretch lemma considers the case where $u \notin V(\mathcal{P}_{i-1}(v))$, by Claims \ref{cl:aux1} and \ref{cl:aux2}, it remains to consider the remaining case where:
\begin{quote}
\textbf{Case $\star$:} \emph{At least a $3/4$-fraction of the paths in $\mathcal{Q}_{i-1}(u)$ do not contain $v$, and at least half of the paths in $\mathcal{Q}_{i-1}(u)$ intersect $\mathcal{P}_{i-1}(v,u)$. }
\end{quote}

Let $\mathcal{Q}_{i-1}(u,v) \subseteq \mathcal{Q}_{i-1}(u)$ be the subset of paths in $\mathcal{Q}_{i-1}(u)$ that intersect with $\mathcal{P}_{i-1}(v,u)$, and in addition do not contain the vertex $v$. 
Formally, define
$$\mathcal{Q}_{i-1}(u,v)=\{P' \in \mathcal{Q}_{i-1}(u) ~\mid~ v \notin V(P'),  P' \mbox{~intersects~} \mathcal{P}_{i-1}(v,u)\}.$$ 
Hence, by Case $\star$, we have that $|\mathcal{Q}_{i-1}(u,v)|\geq |\mathcal{Q}_{i-1}(u)|/4$. Since $\mathcal{P}_{i-1}(v,u)\subseteq \mathcal{P}_{i-1}(v)$, we also have that the paths of $\mathcal{P}_{i-1}(v,u)$ do not contain $u$.  The next key lemma shows that $\mathcal{P}_{i-1}(v,u)$ and $\mathcal{Q}_{i-1}(u,v)$ contain sufficiently many intersecting pairs of paths. 
This will serve the basis for showing the existence in $H_{i}$ of $3f$ nearly vertex-disjoint paths between $u,v$ of length at most $(2i-1)W((u,v))$, as required.

\begin{mdframed}[hidealllines=true,backgroundcolor=gray!25]
\vspace{-5pt}
\begin{lemma}\label{lem:path-pairs}[Key Structural Lemma]
There exists a collection of $3f$ pairs of paths $(Q'_j,P'_j) \in \mathcal{Q}_{i-1}(u,v) \times \mathcal{P}_{i-1}(v,u)$ such that $V(Q'_j) \cap V(P'_j)\neq \emptyset$, for every $j \in \{1,\ldots, 3f\}$. Moreover, $Q'_{1}, \ldots, Q'_{3f}$ are vertex-disjoint (except for their common endpoint $u$), and similarly, $P'_{1}, \ldots, P'_{3f}$ are vertex-disjoint (except for their common endpoint $v$). 
\end{lemma}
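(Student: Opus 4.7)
}
The plan is to cast this as a bipartite matching problem. Build a bipartite graph $\mathcal{G}=(A,B,E_{\mathcal{G}})$ with $A:=\mathcal{Q}_{i-1}(u,v)$ and $B:=\mathcal{P}_{i-1}(v,u)$, joining $Q\in A$ to $P\in B$ by an edge whenever $V(Q)\cap V(P)\neq\emptyset$. Because the paths in $A\subseteq\mathcal{Q}_{i-1}(u)$ are pairwise vertex-disjoint except at the common endpoint $u$ (invariant (III) for $\mathcal{C}_{i-1}$), and similarly the paths in $B\subseteq\mathcal{P}_{i-1}(v)$ are pairwise vertex-disjoint except at $v$ (by the construction in Step~1 of phase $i$), any matching in $\mathcal{G}$ of size $3f$ immediately yields $3f$ pairs satisfying all the structural guarantees required by the lemma. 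Moreover, by Case~$\star$ combined with invariant (V), $|A|\geq|\mathcal{Q}_{i-1}(u)|/4=K_f/4=5kf$, while every $Q\in A$ satisfies $\deg_{\mathcal{G}}(Q)\geq 1$ by the very definition of $\mathcal{Q}_{i-1}(u,v)$.

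The technical heart of the argument is the degree upper bound $\deg_{\mathcal{G}}(P)\leq i$ for every $P\in B$. Fix an edge $\{Q,P\}\in E_{\mathcal{G}}$ and an intersection vertex $x\in V(P)\cap V(Q)$. By the lemma's hypothesis, $u\notin V(\mathcal{P}_{i-1}(v))\supseteq V(P)$, and by the definition of $\mathcal{Q}_{i-1}(u,v)$ we have $v\notin V(Q)$; hence $x\in\bigl(V(P)\setminus\{v\}\bigr)\cap\bigl(V(Q)\setminus\{u\}\bigr)$. The vertex-disjointness of $A$ off $u$ then forces $x$ to lie in at most one member of $A$, so $\deg_{\mathcal{G}}(P)\leq|V(P)\setminus\{v\}|$. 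To bound the latter, observe that $P\in\mathcal{P}_{i-1}(v)$ either belongs to $\mathcal{Q}_{i-1}(v)$ and has length at most $i-1$, or was added in the greedy phase as $P'\circ(u_j,v)$ for some $P'\in\mathcal{S}_{i-1}(u_j)\subseteq\mathcal{Q}_{i-1}(u_j)$ of length at most $i-1$, giving total length at most $i$ and hence $|V(P)\setminus\{v\}|\leq i$.

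With the degree bound in hand, I will conclude by a one-line König/Hall count. Let $(A',B')$ be any minimum vertex cover of $\mathcal{G}$; every edge incident to $A\setminus A'$ must hit $B'$, so
\[
|A|-|A'|\;\leq\;\sum_{P\in B'}\deg_{\mathcal{G}}(P)\;\leq\;i\cdot|B'|,
\]
and therefore $|A|\leq|A'|+i\,|B'|\leq i(|A'|+|B'|)$. By König's theorem the maximum matching in $\mathcal{G}$ has size $|A'|+|B'|\geq |A|/i\geq 5kf/k=5f\geq 3f$. Picking any $3f$ matched pairs from this matching produces the desired sequence $(Q'_j,P'_j)$.

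The main obstacle is establishing $\deg_{\mathcal{G}}(P)\leq i$: one must simultaneously exploit both that $u$ is absent from every path in $\mathcal{P}_{i-1}(v,u)$ (which is precisely where the hypothesis $u\notin V(\mathcal{P}_{i-1}(v))$ of the lemma enters) and that the members of $A$ are vertex-disjoint off $u$, so that the single intersection point pinned to $V(P)\setminus\{v\}$ cannot be shared by two different $Q$'s. Once this structural bound is secured, the conversion into a matching of size $3f$ via König is essentially routine counting.
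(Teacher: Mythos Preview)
Your argument is correct, and it takes a genuinely different route from the paper's proof. The paper argues by a direct greedy construction: at step $j$ it picks an arbitrary $Q'_j$ from the surviving subset of $\mathcal{Q}_{i-1}(u,v)$, matches it to some intersecting $P'_j\in\mathcal{P}_{i-1}(v,u)$, then discards $P'_j$ together with every remaining $Q$ that intersects $P'_j$. The same degree observation you isolate (each $P'_j$ has at most $k$ vertices off $v$, and $u$ is absent from it, so it can knock out at most $k$ of the pairwise vertex-disjoint $Q$'s) guarantees $|\mathcal{A}_{j+1}|\geq K_f/4 - kj$, which stays positive for $3f$ rounds.

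Your version packages the same degree bound into a clean K\"onig/defect count: $\deg_{\mathcal{G}}(P)\leq i$ on the $B$-side plus $\deg_{\mathcal{G}}(Q)\geq 1$ on the $A$-side forces any vertex cover to have size at least $|A|/i\geq 5f$, hence a matching of size $\geq 3f$. This is arguably more systematic and gives a slightly sharper numeric ($|A|/i$ rather than the paper's implicit $|A|/k$), at the cost of invoking K\"onig rather than an elementary iteration. Both proofs hinge on exactly the same structural fact---each $P\in\mathcal{P}_{i-1}(v,u)$ meets at most $i$ members of $\mathcal{Q}_{i-1}(u,v)$, which in turn needs both $u\notin V(P)$ and the vertex-disjointness of $\mathcal{Q}_{i-1}(u)$ off $u$---and you identify that dependency correctly.
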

\end{mdframed}
\begin{proof}
\textbf{Defining the path pairs.} We greedily pick the $(Q'_j,P'_j)$ pairs, one by one, in $3f$ iterations as follows. Initially, set $\mathcal{A}_1=\mathcal{Q}_{i-1}(u,v)$ and $\mathcal{B}_1=\mathcal{P}_{i-1}(v,u)$. In every iteration $j \in \{1,\ldots, 3f\}$, the algorithm is given the sets $\mathcal{A}_j, \mathcal{B}_j$ and computes the $j^{th}$ path pair $(Q'_j,P'_j)$. Let $Q'_j$ be some arbitrary path in $\mathcal{A}_j$, and let $P'_j$ be some path in $\mathcal{B}_j$ that intersects $Q'_j$. Then, define the sets 
$\mathcal{A}_{j+1}$ and $\mathcal{B}_{j+1}$, by omitting all paths in $\mathcal{A}_{j}$ that intersect $P'_j$ and omitting $P'_j$ from $\mathcal{B}_{j+1}$. Formally, define 
$$\mathcal{A}_{j+1}=\{Q \in \mathcal{A}_{j} ~\mid~ V(P'_j) \cap V(Q)=\emptyset\} \mbox{~and~} \mathcal{B}_{j+1}=\mathcal{B}_{j} \setminus \{P'_j\}~.$$
\\
\noindent \textbf{Analysis.} We now claim that the algorithm above is valid, we need to show that for every $j \in \{1,\ldots, 3f\}$ it holds that:
\begin{enumerate}
\item the set $\mathcal{A}_j$ is non-empty, and 
\item every path in $\mathcal{A}_j$ intersects $\mathcal{B}_{j}$. 
\end{enumerate}
We start with (1). Since the paths of $\mathcal{Q}_{i-1}(u,v)$ are vertex-disjoint (except for the endpoint $u$), and since we assume (by Case 1) that no path in $\mathcal{P}_{i-1}(v,u)$ contains $u$, we have that each path $P' \in \mathcal{P}_{i-1}(v,u)$ can intersect with \emph{at most} $|P'|\leq k$ many paths in $\mathcal{Q}_{i-1}(u,v)$. Since in each iteration $j$, the algorithm omits $|P'_j|$ paths from $\mathcal{Q}_{i-1}(u,v)$, we have that 
$$|\mathcal{A}_{j+1}|\geq |\mathcal{Q}_{i-1}(u,v)|-k \cdot j~.$$
Therefore, $|\mathcal{A}_{3f}|\geq K_f/4-3kf\geq 2kf \geq 3f$, for $k\geq 2$, as required. 

To see (2), we show by induction on $j\in \{1,\ldots, 3f\}$, that every path in $\mathcal{A}_{j}$ does not intersect with any of the paths in $P'_1,\ldots, P'_{j-1}$. This holds as at the end of each iteration $j$, when adding $P'_j$ we omit from $\mathcal{A}_{j+1}$ all the paths that intersect with $P'_j$. Since each path in $\mathcal{A}_{j} \subseteq \mathcal{Q}_{i-1}(u,v)$ intersects $\mathcal{P}_{i-1}(v,u)$, we deduce that each such path must intersect with the paths of $\mathcal{B}_{j}$. The lemma follows. See Fig. \ref{fig:multi-paths} for an illustration.
\end{proof}

We now need to complete the proof for Case $\star$, and show that either that $(u,v)$ is protected in $H_{i-1}$, or else in $(u,v)\in R_i$. The next lemma is crucial for the stretch argument in the \emph{weighted} setting (i.e., it is not needed for unweighted graphs). 
\begin{claim}\label{cl:aux3}
$(u,v)$ is heavier than any edge in $P'_j \cup Q'_j$ (from Lemma \ref{lem:path-pairs}) for every $j \in \{1,\ldots, 3f\}$. 
\end{claim}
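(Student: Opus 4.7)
My plan is to apply Equation~(\ref{eq:increasing-path-edge}) not only to the edge $(u,v)$ itself but also to the generating edges $(u_{j'},v)$ that spawned the non-initial paths of $\mathcal{P}_{i-1}(v,u)$; this is legitimate because~(\ref{eq:increasing-path-edge}) is a structural property of \emph{every} edge in $R_{i-1}$, stemming directly from the definition of $R_{i-1}$ in~(\ref{eq:Riv}). The $Q'_j$ paths are handled immediately: since $Q'_j \in \mathcal{Q}_{i-1}(u,v) \subseteq \mathcal{Q}_{i-1}(u)$ and $(u,v) \in R_{i-1}$, Equation~(\ref{eq:increasing-path-edge}) yields $W(e') < W((u,v))$ for every $e' \in E(Q'_j)$.

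For the paths $P'_j \in \mathcal{P}_{i-1}(v,u)$, I would split into two subcases based on the origin of $P'_j$. If $P'_j \in \mathcal{Q}_{i-1}(v)$, then~(\ref{eq:increasing-path-edge}) applied to $(u,v)$ again suffices. Otherwise $P'_j$ was inserted during some iteration $j' < \ell$, where $\ell$ is the iteration at which the algorithm processes the edge $(u,v)$; hence $P'_j = P \circ (u_{j'},v)$ for some $P \in \mathcal{S}_{i-1}(u_{j'}) \subseteq \mathcal{Q}_{i-1}(u_{j'})$. The terminal edge $(u_{j'},v)$ satisfies $W((u_{j'},v)) < W((u,v))$ because $E_{i-1}(v)$ is scanned in increasing edge weight order and $j' < \ell$. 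For the remaining edges of $P$, I would use that $(u_{j'},v) \in E_{i-1}(v) \subseteq R_{i-1}$ (this inclusion also certifies that $u_{j'}$ is $(i-1)$-clustered, which is needed for~(\ref{eq:increasing-path-edge}) to apply), so~(\ref{eq:increasing-path-edge}) applied to $(u_{j'},v)$ yields $W(e') < W((u_{j'},v))$ for every $e' \in E(P)$, since $P \in \mathcal{Q}_{i-1}(u_{j'})$. Chaining the two inequalities gives $W(e') < W((u,v))$ along the entire path $P'_j$.

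The only step that requires genuine thought is recognizing that~(\ref{eq:increasing-path-edge}) must be invoked twice in the concatenated case — once for $(u,v)$ and once per generating edge $(u_{j'},v)$ — since the tree prefix $P \in \mathcal{Q}_{i-1}(u_{j'})$ lives in a cluster tree rooted at a center unrelated to either $u$ or $v$, so without this second application there would be no direct handle on the internal edges of $P'_j$. Everything else is bookkeeping: tracing the structure of paths added by Step~1, and invoking the sorted order of $E_{i-1}(v)$ to compare last edges.
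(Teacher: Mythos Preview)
Your proof is correct and follows essentially the same route as the paper's: handle $Q'_j$ directly via the $R_{i-1}$ property~(\ref{eq:increasing-path-edge}), and for $P'_j$ use the ordering of $E_{i-1}(v)$ to bound the last edge $(u_{j'},v)$, then invoke~(\ref{eq:increasing-path-edge}) at the endpoint $u_{j'}$ to bound the prefix $P \in \mathcal{Q}_{i-1}(u_{j'})$. If anything, your case split is slightly more careful than the paper's, since you explicitly treat the possibility $P'_j \in \mathcal{Q}_{i-1}(v)$, which the paper's proof glosses over (though it follows immediately from~(\ref{eq:increasing-path-edge}) as you note).
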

\begin{proof}
Since $(u,v) \in R_{i-1}$ and $Q'_j \in \mathcal{Q}_{i-1}(u)$, we have that $(u,v)$ is heavier than any edge on each of the paths in $\mathcal{Q}_{i-1}(u)$. We next show that $(u,v)$ is also heavier than all edges on $P'_j$. Let $(u_j,v)$ be the last edge of $P'_j$ for every $j \in \{1,\ldots, 3f\}$. Since $(u_j,v) \in R_{i-1}$, and $P'_j[\cdot,u_j]\in \mathcal{S}_{i-1}(u_j)\subset \mathcal{Q}_{i-1}(u_j)$, it holds that $W((u_j,v))\geq W(e')$ for every $e' \in P'_j$. 
Since $P'_j \in \mathcal{P}_{i-1}(v,u)$, by the ordering of the edges in $E_{i-1}(v)$, we have that $W((u,v))>W((u_j,v))$. Therefore, $(u,v)$ is heavier than any edge on $P'_j$.
\end{proof}

\begin{corollary}\label{cor:aux3easy}
If every $P'_j$ (from Lemma \ref{lem:path-pairs}) for $j \in \{1,\ldots, 3f\}$, is taken into $H_i$, then $(u,v)$ is $(f,i)$ protected.
\end{corollary}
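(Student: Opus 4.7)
The plan is to use Observation \ref{obs:Suff} as our one-shot sufficient condition for $(f,i)$ protection. We will produce, from the $3f$ pairs $(Q'_j,P'_j)$ supplied by Lemma \ref{lem:path-pairs}, a family $\{R_j\}_{j=1}^{3f}$ of $u$-$v$ paths inside $H_i$, and verify that this family satisfies the three hypotheses of Observation \ref{obs:Suff} with overlap parameter $\alpha=2$.

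\medskip

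\noindent\textbf{Step 1: Construct the $u$-$v$ paths.} For each $j \in \{1,\ldots,3f\}$ pick some $w_j \in V(Q'_j) \cap V(P'_j)$ (exists by Lemma \ref{lem:path-pairs}). Since $Q'_j \in \mathcal{Q}_{i-1}(u)$ is oriented from its center to $u$, and $P'_j \in \mathcal{P}_{i-1}(v,u)$ is oriented from its center to $v$, we can form the $u$-$v$ walk
\[
R_j \;:=\; Q'_j[w_j,u]^{\mathrm{rev}} \;\circ\; P'_j[w_j,v],
\]
and, after short-cutting any repeated vertex, treat $R_j$ as a simple $u$-$v$ path. Its length satisfies
\[
|R_j| \;\le\; |Q'_j|+|P'_j| \;\le\; (i-1)+i \;=\; 2i-1,
\]
since level-$(i-1)$ tree paths have at most $i-1$ edges and paths in $\mathcal{P}_{i-1}(v)$ consist of such a tree path extended by one edge to $v$.

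\medskip

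\noindent\textbf{Step 2: Verify the three hypotheses of Observation \ref{obs:Suff} with $\alpha=2$.} (a) Each edge of $R_j$ lies in $Q'_j$ or $P'_j$, and by Claim \ref{cl:aux3} every such edge is lighter than $(u,v)$. (b) Every $R_j$ lies in $H_i$: the $Q'_j$ portion belongs to the tree $T_{i-1}(h(Q'_j))\subseteq H_{i-1}\subseteq H_i$ by Eq.~(\ref{eq:Hi}), and the $P'_j$ portion lies in $H_i$ by the hypothesis of the corollary. (c) Pick any $x \in V\setminus\{u,v\}$; by the vertex-disjointness clause of Lemma \ref{lem:path-pairs}, $x$ lies in at most one of the $Q'_j$'s and at most one of the $P'_j$'s. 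Hence $x$ can belong to at most two of the concatenated paths $R_j$, so the overlap parameter $\alpha=2$ is valid.

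\medskip

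\noindent\textbf{Step 3: Conclude.} Since $3f \;\ge\; 2f+1 \;=\; \alpha f + 1$ for $f\ge 1$, the family $\{R_j\}_{j=1}^{3f}$ supplies the collection required by Observation \ref{obs:Suff} (we may discard excess paths). Applying that observation yields that $(u,v)$ is $(f,i)$-protected in $H_i$, as desired. No obstacle is expected here; the non-trivial content (finding the intersecting pairs with the disjointness guarantees, and the weight comparison) was already absorbed into Lemma \ref{lem:path-pairs} and Claim \ref{cl:aux3}, so this corollary is essentially a clean bookkeeping step that plugs those two ingredients into the sufficient condition of Observation \ref{obs:Suff}.
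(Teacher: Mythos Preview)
Your proof is correct and essentially mirrors the paper's own argument: both pick an intersection point on each $(Q'_j,P'_j)$ pair, concatenate the two subpaths into a $u$--$v$ walk of hop-length at most $2i-1$, invoke Claim~\ref{cl:aux3} for the weight bound, observe that any $x\neq u,v$ lies in at most one $Q'_j$ and one $P'_j$ (hence $\alpha=2$), and apply Observation~\ref{obs:Suff} with $3f\ge 2f+1$ paths. The only cosmetic difference is that the paper leaves the $X_j$ as walks rather than short-cutting to simple paths, but this changes nothing in the argument.
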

\begin{proof}
In this case $P'_j \cup Q'_j \subseteq H_{i-1} \cup H_i$. For every $j \in \{1,\ldots, 3f\}$, let $q_j \in V(Q'_j)\cap V(P'_j)$. Therefore, we have $3f$ walks between $u$-$v$ given by $X_j=Q'_j[u,q_j]\circ P'_j[q_j,v]$. Each walk has at most $|P'_j|+|Q'_j|\leq i+i-1\leq 2i-1$ hops. By combining with Claim \ref{cl:aux3}, we have that $len(X_j)\leq (2i-1)W((u,v))$. 

Since the paths in $\mathcal{Q}_i(u,v)$ are vertex-disjoint (except for the common $u$), each vertex $x \in V \setminus \{u,v\}$ can appear on at most one path $Q'_{j_x} \in \mathcal{Q}_i(u,v)$. In the same manner, since the paths in $\mathcal{P}_i(v,u)$ are vertex-disjoint (except for the common $v$), each vertex $x \in V \setminus \{u,v\}$ can appear on at most one path $P'_{j_x} \in \mathcal{P}_i(v,u)$. Therefore, we conclude that each vertex $x \in V \setminus \{u,v\}$ can appear on at most \emph{two} walks. We have that the $3f$ walks $X_1,\ldots, X_f$ satisfy all properties of Obs. \ref{obs:Suff}, and therefore $(u,v)$ is $(f,i)$ protected.
\end{proof}

\begin{claim}\label{cl:aux4}
If there exists a path $P'_j$ that is not taken into $H_i$ for \emph{some} $j \in \{1,\ldots, 3f\}$ (from Lemma \ref{lem:path-pairs}), then $v \in V_i$ and $(u,v)$ is heavier than all edges of the paths in $\mathcal{Q}_{i}(v)$.
\end{claim}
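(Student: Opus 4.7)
\textbf{Proof plan for Claim~\ref{cl:aux4}.} The strategy is to first reduce the hypothesis to a statement about a single edge, and then read off both conclusions from the weight-ordering used to build $\mathcal{P}^*_{i-1}(v)$.

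First I would observe that any $P'_j \in \mathcal{P}_{i-1}(v,u)$ that already lies in $\mathcal{Q}_{i-1}(v)$ has all of its edges inside the tree $T_{i-1}(h(P'_j)) \subseteq H_{i-1} \subseteq H_i$, hence is automatically taken into $H_i$. So a non-taken $P'_j$ must have been added during the iterations: $P'_j = P \circ (u_{j'},v)$ with $P \in \mathcal{S}_{i-1}(u_{j'}) \subseteq \mathcal{Q}_{i-1}(u_{j'})$ and $(u_{j'},v) \in R_{i-1}$ an edge of $E_{i-1}(v)$ that precedes $(u,v)$. Since $E(P) \subseteq H_{i-1}$, the failure of $P'_j$ to be taken is equivalent to $(u_{j'},v) \notin H_i$.

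Next, let $P^*_j := \Shortcut(P'_j) \in \mathcal{P}^*_{i-1}(v)$ and let $\mathrm{ind}(P^*_j)$ denote its position in that weight-sorted list. Because $\Shortcut^{-1}(P^*_j) = P'_j \ni u_{j'}$ and $(u_{j'},v) \in R_{i-1}$, the definition of $LE_i(v)$ in Eq.~(\ref{eq:LEiv}) forces $(u_{j'},v) \in LE_i(v) \subseteq H_i$ whenever $\mathrm{ind}(P^*_j) < i_v$. The hypothesis therefore requires $\mathrm{ind}(P^*_j) \geq i_v$. However, if $v$ is $i$-unclustered then $i_v = |\mathcal{P}^*_{i-1}(v)| + 1$, which makes $\mathrm{ind}(P^*_j) \geq i_v$ impossible. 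Hence $v \in V_i$, proving (a). For (b), the ordering of $\mathcal{P}^*_{i-1}(v)$ by the weight of its last edge together with $\mathrm{ind}(P^*_j) \geq i_v$ gives $W(LastE(P)) \leq W(LastE(P^*_j))$ for every $P \in \mathcal{Q}_i(v)$. Obs.~\ref{obs:shortcut} applied with $e' = (u_{j'},v)$ yields $W(LastE(P^*_j)) \leq W((u_{j'},v))$, and the fact that $(u_{j'},v)$ precedes $(u,v)$ in $E_{i-1}(v)$ gives $W((u_{j'},v)) < W((u,v))$. Chaining the three inequalities produces $W(LastE(P)) < W((u,v))$ for every $P \in \mathcal{Q}_i(v)$, and invariant~(IV) (weights are monotone increasing towards the leaves in $T_i$) propagates this to \emph{every} edge of $P$, establishing (b).

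The main subtlety will be bookkeeping the relation between $P'_j$ and $P^*_j$: $LE_i(v)$ is indexed by $V(\Shortcut^{-1}(P_{j^*}))$ rather than $V(P_{j^*})$, and this is precisely what lets the $\mathrm{ind}(P^*_j) < i_v$ regime absorb $(u_{j'},v)$ into $H_i$ even when the shortcut has replaced the last edge of $P'_j$ by a different, lighter edge $(u_s,v)$. Once this asymmetry is unpacked, the remaining weight comparison is just a short chain driven by the sorting conventions defining $\mathcal{P}^*_{i-1}(v)$ and $\mathcal{Q}_i(v)$ together with Obs.~\ref{obs:shortcut}.
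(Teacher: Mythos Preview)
Your proof is correct and follows essentially the same approach as the paper's own argument: both deduce that $\Shortcut(P'_j)$ sits at index $\geq i_v$ in the ordered set $\mathcal{P}^*_{i-1}(v)$ from the definition of $LE_i(v)$, conclude $v\in V_i$ from the fact that $i_v=|\mathcal{P}^*_{i-1}(v)|+1$ for unclustered $v$, and then chain $W(LastE(P))\leq W(LastE(\Shortcut(P'_j)))\leq W((u_{j'},v))<W((u,v))$ via Obs.~\ref{obs:shortcut} and the edge ordering of $E_{i-1}(v)$.

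The one stylistic difference worth noting: the paper handles the final step---that $W(LastE(P_q))$ dominates every edge of $P_q$---by a case split (for $P_q\in\mathcal{Q}_{i-1}(v)$ it invokes $(u,v)\in R_{i-1}$ directly; for $P_q\in\mathcal{Q}_i(v)\setminus\mathcal{Q}_{i-1}(v)$ it uses Obs.~\ref{obs:cluster-induc} together with $LastE(P_q)\in R_{i-1}$), whereas you invoke invariant~(IV) uniformly. Your route is cleaner; just be aware that invariant~(IV) is only established in Lemma~\ref{lem:monotone}, which appears after Claim~\ref{cl:aux4} in the paper's ordering, so you are implicitly forward-referencing (there is no logical circularity, since Lemma~\ref{lem:monotone}'s proof is independent of Claim~\ref{cl:aux4}).
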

\begin{proof}
Since by Eq. (\ref{eq:LEiv}), the algorithm adds to $H_i$ all paths in $\mathcal{P}_{i-1}(x)$ for an $i$-unclustered vertex $x$, we can deduce that $v$ is $i$-clustered. 

Since $P'_j \in \mathcal{P}_{i}(v,u)$ (the set $\mathcal{P}_{i}(v)$ just before inspecting $(u,v)$), by the ordering of the edges in $E_{i-1}(v)$, we have that $W((u,v))>W(LastE(P'_j))$. 

Recall that $i_v$ is the largest index of the paths in $\mathcal{P}^*_i(v)=\{P_1,\ldots, P_\ell\}$ taken into $\mathcal{Q}_i(v)$, and fix a path $P_q \in \mathcal{Q}_i(v)$ (where $q \leq i_v$), letting $(u_{q},v)=LastE(P_{q})$.
Since $(u,v)\in R_{i-1}$, it is heavier than all edges on the paths in $\mathcal{Q}_{i-1}(v)$. It therefore remains to consider that case where $P_q \in \mathcal{Q}_i(v) \setminus \mathcal{Q}_{i-1}(v)$, and therefore $LastE(P_q)\in R_{i-1}$.

Since $P'_j$ is not taken into $H_i$, by Eq. (\ref{eq:LEiv}) we have that the path $\Shortcut(P'_j)$ appears \emph{after} the path $P_{i_v}$ in the ordered set $\mathcal{P}^*_{i-1}(v)$. So-far, we have that
$$W((u,v))> W(LastE(P'_j))\geq W(LastE(\Shortcut(P'_j))) > W(LastE(P_q)),$$ 
where the second inequality follows by Obs. \ref{obs:shortcut}.
By Obs. \ref{obs:cluster-induc}, it also holds that $P_q[\cdot , u_q] \in \mathcal{Q}_{i-1}(u_q)$.  Since $LastE(P_q) \in R_{i-1}$, we have that $W(LastE(P_q))$ is the heaviest edge on $P_q$. We conclude that $W(u,v)$ is heavier than any edge on $P_q$ for every $P_q \in \mathcal{Q}_i(v)$. 
\end{proof}

We are now ready to complete the proof of Lemma \ref{lem:key}.
\begin{proof}[Proof of Lemma \ref{lem:key}]
By Claims \ref{cl:aux1}, \ref{cl:aux2}, Cor. \ref{cor:aux3easy} and Claim \ref{cl:aux4}, we have that either $(u,v)$ is $(f,i)$ protected or else, $v \in V_{i}$ and $(u,v)$ is heavier than all edges of the paths in $\mathcal{Q}_{i-1}(v)$. 

By repeating the exact same analysis from the point of view of $u \in V_{i-1}$, we get that w.h.p. either $(u,v)$ is $(f,i)$ protected or else, $u \in V_i$ and $(u,v)$ is heavier than all edges of the paths in $\mathcal{Q}_{i-1}(u)$. Concluding that w.h.p., either $(u,v)$ is $(f,i)$ protected or else, $(u,v) \in R_i$ as required.
\end{proof}

\begin{figure}[h!]
\begin{center}
\includegraphics[scale=0.35]{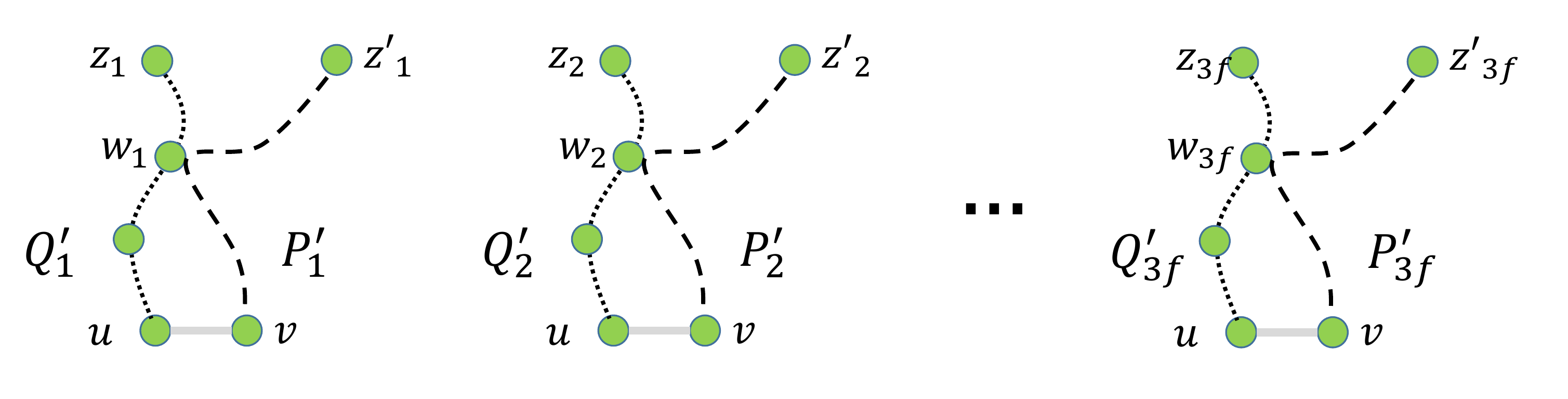}
\includegraphics[scale=0.35]{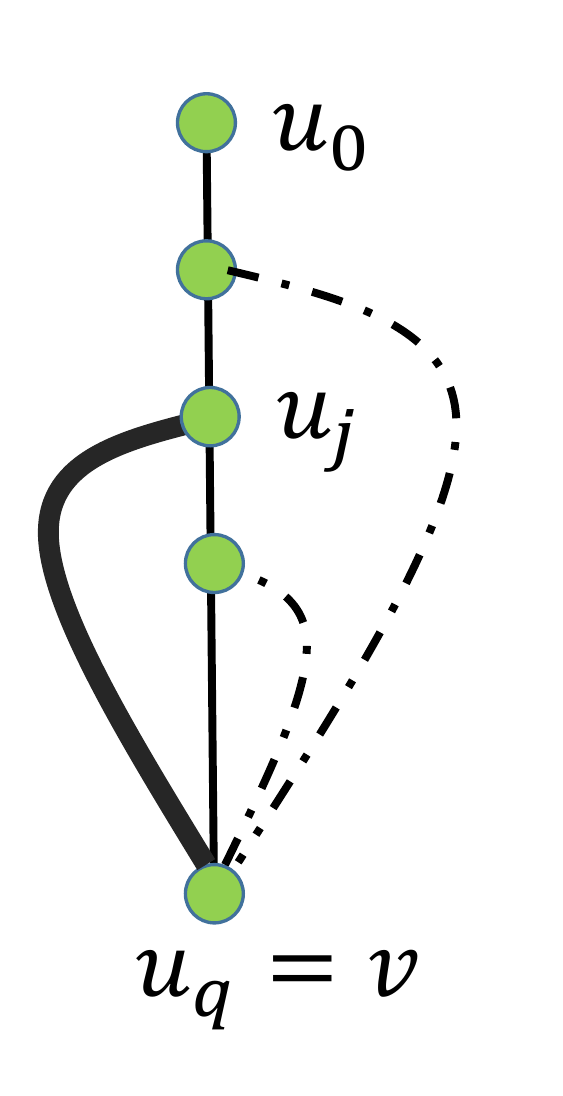}
\caption{\sf Left: An illustration for Lemma \ref{lem:path-pairs}. Right: Illustration of Procedure $\Shortcut$. Shown is a path $P=[u_0,\ldots, u_q=v]$ where $(u_j,v)$ is the \emph{lightest} among all other dashed edges, thus $\Shortcut(P)=[u_0, \ldots, u_j,u_q]$. \label{fig:multi-paths} 
}
\end{center}
\end{figure}

\begin{lemma}\label{lem:monotone}
For every $i \in \{0,\ldots, k-1\}$ and every $v \in V_i$, the edge weights along  each path in $\mathcal{Q}_i(v)$ are monotone increasing (towards $v$). 
\end{lemma}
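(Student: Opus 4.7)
The plan is to prove the lemma by induction on $i$. The base case $i=0$ is vacuous since $\mathcal{Q}_0(v) = \{\{v\}\}$ contains only a zero-edge path. Assuming monotonicity for all paths in $\mathcal{Q}_{i'}(\cdot)$ for $i' \leq i-1$, I would fix $v \in V_i$ and $P \in \mathcal{Q}_i(v)$ and split on where $P$ came from. If $P \in \mathcal{Q}_{i-1}(v)$, monotonicity is immediate from the induction hypothesis (using Obs.~\ref{obs:cluster-cont} to see that such paths do persist into $\mathcal{Q}_i(v)$). Otherwise $P = \Shortcut(\widehat{P})$ for some $\widehat{P} \in \mathcal{P}_{i-1}(v) \setminus \mathcal{Q}_{i-1}(v)$, and the work is to track the shortcut carefully.

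In the second case, by the construction in Step~1 of phase $i$ we can write $\widehat{P} = P^* \circ (u_j,v)$ where $(u_j,v) \in R_{i-1}$ is the edge considered at some iteration and $P^* \in \mathcal{S}_{i-1}(u_j) \subseteq \mathcal{Q}_{i-1}(u_j)$. By induction, the weights along $P^*$ are monotone increasing toward $u_j$. Moreover, $(u_j,v)\in R_{i-1}$ combined with Eq.~(\ref{eq:Riv}) at level $i-1$ gives $W((u_j,v)) > W(e')$ for every edge $e'$ on any path of $\mathcal{Q}_{i-1}(u_j)$, so in particular $\widehat{P}$ itself is monotone increasing toward $v$.

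It remains to handle the shortcut step: $P = \widehat{P}[u_0,u_{j'}] \circ (u_{j'},v)$ where $(u_{j'},v)$ is the lightest edge of $R_{i-1}$ from $v$ to $V(\widehat{P})$. The prefix $\widehat{P}[u_0,u_{j'}]$ is a prefix of a monotone sequence and hence monotone. If $u_{j'} = u_j$, then $P = \widehat{P}$ and we are done. Otherwise $u_{j'} \in V(P^*)$, and since $(u_{j'},v)\in R_{i-1}$ we have $u_{j'} \in V_{i-1}$. Now I would apply Obs.~\ref{obs:cluster-induc} at level $i-1$ to the path $P^* \in \mathcal{Q}_{i-1}(u_j)$ to conclude $\widehat{P}[u_0,u_{j'}] = P^*[\cdot,u_{j'}] \in \mathcal{Q}_{i-1}(u_{j'})$. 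Invoking Eq.~(\ref{eq:Riv}) at level $i-1$ once more, $(u_{j'},v) > \mathcal{Q}_{i-1}(u_{j'})$, which in particular says $W((u_{j'},v)) > W(\mathrm{LastE}(\widehat{P}[u_0,u_{j'}]))$. This closes the monotonicity at the junction, completing the inductive step.

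The only delicate point is the shortcut case: one must argue that the shortcut edge $(u_{j'},v)$ is heavier than the last edge of the prefix $\widehat{P}[u_0,u_{j'}]$. This is where the interplay between the definition of $R_{i-1}$ and Obs.~\ref{obs:cluster-induc} is essential; without the inductive structure of the cluster paths, the bound on $W((u_{j'},v))$ inherited from $\Shortcut$ (namely $W((u_{j'},v)) \leq W((u_j,v))$) is the wrong direction and would not suffice.
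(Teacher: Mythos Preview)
Your proof is correct and follows essentially the same route as the paper's: induction on $i$, with the inductive step showing first that each $\widehat{P}\in\mathcal{P}_{i-1}(v)\setminus\mathcal{Q}_{i-1}(v)$ is monotone (via the induction hypothesis on $P^*\in\mathcal{Q}_{i-1}(u_j)$ and the $R_{i-1}$ bound on $(u_j,v)$), and then that the shortcut preserves monotonicity by invoking Obs.~\ref{obs:cluster-induc} at level $i-1$ to place $\widehat{P}[\cdot,u_{j'}]\in\mathcal{Q}_{i-1}(u_{j'})$ and applying the $R_{i-1}$ bound once more. Your explicit case split on whether $P\in\mathcal{Q}_{i-1}(v)$ is in fact slightly cleaner than the paper's presentation, which tacitly treats all of $\mathcal{P}_{i-1}(v)$ uniformly; and your closing remark correctly identifies why the naive bound $W((u_{j'},v))\leq W((u_j,v))$ from the shortcut definition alone would not suffice.
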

\def\APPENDMONO{
\begin{proof}[Proof of Lemma \ref{lem:monotone}]
We show the proof by induction on $i \in \{0,\ldots, k-1\}$. For $i=0$, since $\mathcal{Q}_0(v)=\{v\}$ for every $v$, the claim holds vacuously. Assume that for every $j \leq i-1$, the paths of $\mathcal{Q}_j(v)$ are monotone for every $v \in V_j$ and consider the paths of $\mathcal{Q}_i(v)$ for an $i$-clustered vertex $v \in V_i$. 

We first show that the edge weights of any path $P \in \mathcal{P}_{i-1}(v)$ are monotone increasing (i.e., before the shortcutting step). Letting $P=[u_0,\ldots,u_{\ell-1}=u ,u_{\ell}=v]$, then $P[\cdot,u]$ is a path in $\mathcal{S}_{i-1}(u)\subseteq \mathcal{Q}_{i-1}(u)$. Since $(u,v) \in R_{i-1}$, we have that $(u,v)$ is heavier than all other edges on $P[u_0,u]$, therefore $P$ is monotone.

Let $E'=(V(P)\times \{v\}) \cap R_{i-1}$ and denote by $u_j \in V(P)$ as the vertex satisfying that the edge $(u_j,v)$ is the lightest in $E'$. The interesting case is where $u_j \neq u$ as in this case, $\Shortcut(P)=P[u_0,u_j]\circ (u_j,v)$. We show that $\Shortcut(P)$ is monotone as well.
Since $(u_j,v) \in R_{i-1}$, we have that $u_j$ is $(i-1)$-clustered. 
By Obs. \ref{obs:cluster-induc}, we have that $P[\cdot,u_j] \in \mathcal{Q}_{i-1}(u_j)$. Note, however, that might be the case that $P[\cdot,u_j] \notin \mathcal{S}_{i-1}(u_j)$ (i.e., the case where that path $P[\cdot,u_j]$ was not sampled into $\mathcal{S}_{i-1}(u_j)$). Nevertheless, as $P[\cdot,u_j] \in \mathcal{Q}_{i-1}(u_j)$, by the definition of $R_{i-1}$, we have that $W((u_j,v))> W(e')$ for every edge $e' \in P[\cdot,u_j]$. By the induction assumption for $(i-1)$, the weights on $P[\cdot,u_j]$ are monotone increasing (towards $u_j$). The lemma follows.
\end{proof}
}

\begin{lemma}\label{lem:stretch-finallem}
Let $i$ be the largest index in $\{1,\ldots, k\}$ satisfying that $e=(u,v) \in R_{i-1}\setminus R_i$. 
Then, w.h.p., $(u,v)$ is $(f,i)$ protected in $H$.
\end{lemma}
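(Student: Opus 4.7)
The plan is to apply Lemma \ref{lem:key} (the Main Stretch Lemma) and observe that the hypothesis $e \in R_{i-1}\setminus R_i$ rules out its second alternative. Specifically, since by assumption $e = (u,v) \in R_{i-1}$, Lemma \ref{lem:key} tells us that w.h.p.\ either $(u,v)$ is $(f,i)$-protected in $H_i$, or else $(u,v) \in R_i$. Since also $e \notin R_i$, only the first alternative can hold, so $(u,v)$ is $(f,i)$-protected in $H_i$. Because the spanner grows monotonically across phases (by Eq.~(\ref{eq:Hi})), $H_i \subseteq H_k = H$, and protection is inherited by the larger subgraph, completing the argument.

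The only subtlety is that Lemma \ref{lem:key} carries the side condition $u \notin V(\mathcal{P}_{i-1}(v))$, which I would close by a short direct argument for the complementary case. Suppose $u$ lies on some path $P \in \mathcal{P}_{i-1}(v)$, and let $P' = \Shortcut(P)$ occur at position $j$ in the ordering $\mathcal{P}^*_{i-1}(v) = \{P_1,\ldots,P_\ell\}$. If $j \leq i_v - 1$ (which is automatic whenever $v$ is $i$-unclustered, since then $i_v = |\mathcal{P}^*_{i-1}(v)|+1$), then by Eq.~(\ref{eq:LEiv}) the edge $(u,v)$ itself is placed into $LE_i(v) \subseteq H_i$, and the single edge $(u,v)$ trivially certifies $(f,i)$-protection.

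Otherwise $j \geq i_v$ and $v \in V_i$. Combining Obs.~\ref{obs:shortcut}, which gives $W(\LastE(P')) \leq W((u,v))$, with the ordering of $\mathcal{P}^*_{i-1}(v)$ by last-edge weight (so $W(\LastE(P_{i_v})) \leq W(\LastE(P'))$) and the weight monotonicity of Lemma \ref{lem:monotone} (so the last edge of each path is the heaviest one on it), I conclude that $(u,v)$ is heavier than every edge on any path in $\mathcal{Q}_i(v)$. Running the symmetric argument from $u$'s side either yields $(f,i)$-protection directly (via Lemma \ref{lem:key} applied to $u$, if $v \notin V(\mathcal{P}_{i-1}(u))$) or else the analogous weight inequality together with $u \in V_i$; in the latter combined situation the definition of $R_i$ in Eq.~(\ref{eq:Riv}) forces $(u,v) \in R_i$, contradicting the hypothesis. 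The main obstacle is really just bookkeeping this case split cleanly, since the deep structural work has already been absorbed into Lemma \ref{lem:key} via the key structural Lemma \ref{lem:path-pairs}.
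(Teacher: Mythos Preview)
Your proposal is correct and follows essentially the same approach as the paper's proof: invoke Lemma~\ref{lem:key} when its side condition holds, and in the complementary case $u \in V(\mathcal{P}_{i-1}(v))$ split on the position of $\Shortcut(P)$ relative to $i_v$ to either place $(u,v)$ in $LE_i(v)$ or derive the weight inequality against $\mathcal{Q}_i(v)$, then symmetrize to force $(u,v)\in R_i$ and reach a contradiction. Your threshold $j\le i_v-1$ matches Eq.~(\ref{eq:LEiv}) exactly; just note that the symmetric argument from $u$'s side also has the trivial sub-case where $(u,v)\in LE_i(u)\subseteq H_i$, which you implicitly fold into ``the analogous'' argument.
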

\begin{proof}
Without loss of generality, assume that $v$ stopped being clustered not after $u$. \\
\noindent \textbf{Case 1:} $v \in V_{i-1}\setminus V_i$. 
First assume that $u \notin V(\mathcal{P}_{i-1}(v))$. By Lemma \ref{lem:key}, we have that $(u,v)$ is $(f,i)$ protected in $H$. Now consider the case where $u \in V(\mathcal{P}_{i-1}(v))$. Since $v$ is $i$-unclustered, by Eq. (\ref{eq:LEiv}), the algorithm adds $(u,v)$ to $H_i$ and the claim holds. 
\\ \\
\noindent \textbf{Case 2:} $u,v \in V_i$. Note that by Lemma \ref{lem:key}, the claim follows immediately if either $u \notin   V(\mathcal{P}_{i-1}(v))$ or $v \notin V(\mathcal{P}_{i-1}(u))$. Therefore, we can assume from now on that 
$u \in V(\mathcal{P}_{i-1}(v))$ and $v \in V(\mathcal{P}_{i-1}(u))$. 

Let $\mathcal{P}^*_{i-1}(v)=\{P'_1,\ldots, P'_\ell\}$ and $P'_{i_v}$ be the largest indexed path added to $\mathcal{Q}_{i}(v)$. Let $P_q$ be the (unique) path in $\mathcal{P}_{i-1}(v)$ satisfying that $u \in V(P_q)$. 
Assume first that the path $\Shortcut(P_q)$ appears in the ordered set $\mathcal{P}^*_{i-1}(v)$ not after $P'_{i_v}$. In this case, by Eq. (\ref{eq:LEiv}), the algorithm added $(u,v)$ to $H_i$. Therefore, assume that $\Shortcut(P_q)$ appears in $\mathcal{P}^*_{i-1}(v)$ \emph{after} $P'_{i_v}$. We will show that in such a case the edge $(u,v)$ is heavier than all edges in the paths of $\mathcal{Q}_i(v)$. Fix $P'_j \in \mathcal{Q}_i(v)$. We have that:
$$W((u,v))\geq W(\LastE(\Shortcut(P_q))) > W(\LastE(P'_j))~,$$
where the first inequality follows by Obs. \ref{obs:shortcut}, and the last inequality following by the fact that $P'_j$ appears in $\mathcal{P}^*_{i-1}(v)$ strictly before $\Shortcut(P_q)$. By Lemma \ref{lem:monotone}, $W(\LastE(P'_j))$ is the heaviest edge on $P'_j$. Altogether, $(u,v)$ is heavier than any edge on every $P'_j \in \mathcal{Q}_i(v)$. In a symmetric manner, one can show that $(u,v)$ is heavier than any edge on every $P'_j \in \mathcal{Q}_i(u)$. 
Therefore, we get that $(u,v) \in R_i$, leading to a contradiction. 
\end{proof}

\begin{corollary}\label{cor:final-spanner}
Every edge $(u,v) \in E(G)$ is $(f,k)$ protected in $H$, thus $H$ is an $f$-FT $(2k-1)$ spanner.
\end{corollary}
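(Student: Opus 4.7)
The plan is to deduce the corollary by feeding every edge of $G$ into Lemma \ref{lem:stretch-finallem}. The only thing that needs to be verified is that for every edge $e=(u,v) \in E(G)$, there is some index $i^* \in \{1,\ldots,k\}$ with $e \in R_{i^*-1} \setminus R_{i^*}$, so that Lemma \ref{lem:stretch-finallem} is applicable.

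First, I would observe the two boundary facts about the filtration $R_0,R_1,\ldots,R_k$. On one end, $R_0 = E(G)$ by initialization, so every edge starts in $R_0$. On the other end, the algorithm explicitly sets $Z_k = \emptyset$ in Step~2 of phase $k$, which forces $\mathcal{Q}_k(v) = \emptyset$ for every $v$ and hence $V_k = \emptyset$. Since Eq.\ (\ref{eq:Riv}) defines $R_k \subseteq V_k \times V_k$, we get $R_k = \emptyset$. Therefore, for every edge $e \in E(G)$, the set $\{ i \in \{1,\ldots,k\} \mid e \in R_{i-1} \setminus R_i\}$ is nonempty; let $i^*(e)$ denote its largest element.

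Next, Lemma \ref{lem:stretch-finallem} applied to $e$ and $i^*(e)$ yields, w.h.p., that $e$ is $(f, i^*(e))$-protected in $H$, i.e.\ $\dist_{H \setminus F}(u,v) \leq (2 i^*(e) - 1) \cdot W(e)$ for every $F \subseteq V \setminus \{u,v\}$ with $|F| \leq f$. Since $i^*(e) \leq k$, we have $2 i^*(e) - 1 \leq 2k-1$, so the same inequality gives $\dist_{H \setminus F}(u,v) \leq (2k-1) \cdot W(e)$, which is exactly $(f,k)$-protection in the sense of Def.\ \ref{def:protect}. A union bound over the at most $m = \poly(n)$ edges of $G$ upgrades the per-edge high-probability guarantee to a simultaneous one for all edges, and Def.\ \ref{def:protect} then declares $H$ an $f$-VFT $(2k-1)$-spanner.

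No real obstacle is anticipated: all of the delicate structural work (the MIS/shortcut machinery and Lemma \ref{lem:path-pairs}) is already absorbed into Lemma \ref{lem:stretch-finallem}. The only minor points to double-check are that the monotonicity $2i-1 \le 2k-1$ really does convert $(f,i)$-protection into $(f,k)$-protection (immediate from Def.\ \ref{def:protect}), and that the union bound over all edges preserves the w.h.p.\ guarantee, for which choosing the hidden polylogarithmic constant in the $\widetilde{O}(\log n)$ sample sizes large enough is sufficient.
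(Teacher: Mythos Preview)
Your proposal is correct and follows essentially the same argument as the paper: the paper's proof also observes that $R_0=E$, $R_k=\emptyset$ (since $V_k=\emptyset$), deduces that every edge lies in some $R_{i-1}\setminus R_i$, and then invokes Lemma~\ref{lem:stretch-finallem}. You are just a bit more explicit about the monotonicity from $(f,i)$- to $(f,k)$-protection and about the union bound, which the paper leaves implicit.
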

\begin{proof}
Since $R_0=E$ and $R_k=\emptyset$ (as $V_{k}=\emptyset$), for every edge $e \in E(G)$ there exists $i \in \{1,\ldots, k\}$ satisfying that $e \in R_{i-1}\setminus R_i$. The claim follows by Lemma \ref{lem:stretch-finallem}.
\end{proof}

\subsection{Implementation Details and Running Time Analysis}\label{sec:spanner-imp}
We next turn to analyze the running time of Alg. $\VertexFTSpanner$ in the sequential, parallel and distributed settings. For each of these settings we show that the algorithm can be implemented in nearly optimal time. 

\paragraph{The Sequential Setting.} We focus on phase $i$ and show that it can be implemented in $\widetilde{O}(m)$ time, w.h.p. We start with Step (1). The computation of a sampled set $\mathcal{S}_{i-1}(v)$ can be implemented in $\widetilde{O}(|\mathcal{Q}_{i-1}(v)|)=\widetilde{O}(f)$. Therefore, Step (1.1) is implemented in $\widetilde{O}(f\cdot n)$ time. Next, consider the computation of the set $\mathcal{P}_{i-1}(v)$. For each neighbor $u \in N(v)\cap V_{i-1}$, the algorithm iterates over the sampled paths $\mathcal{S}_{i-1}(u)$ in an attempt to find a path that is vertex-disjoint from all the current paths in $\mathcal{P}_{i-1}(v)$. By storing the vertices of 
$\mathcal{P}_{i-1}(v)$ in a (dynamic) hash table, this can be done in $\widetilde{O}(|P|)$ time for each path $P \in \mathcal{S}_{i-1}(u)$. Since $|P|\leq i$ and $|\mathcal{S}_{i-1}(u)|=O(\log n)$, the $j^{th}$ iteration takes $\widetilde{O}(1)$ time. As we have at most $\widetilde{O}(deg(v))$ iterations and summing over all vertices, this step is implemented in $\widetilde{O}(m)$ time, w.h.p. 

For a path $P \in \mathcal{P}_{i-1}(v)$, applying procedure $\Shortcut(P)$ takes $O(|P|\log n)$ time, therefore the computation of the set $\mathcal{P}^*_{i-1}(v)$ of Step (1.2.d) can be done in $O(|V(\mathcal{P}_{i-1}(v))|\log n)$. Clearly, $|V(\mathcal{P}^*_{i-1}(v))|\leq k \cdot \deg(v,G)$ (as the algorithm adds to $\mathcal{P}_{i-1}(v)$ at most one path in $\mathcal{S}_{i-1}(u)$ for each $u \in N(v)$). Therefore, summing over all the vertices, the computation of the sets $\bigcup_{v \in V_{i-1}}\mathcal{P}^*_{i-1}(v)$ takes $\widetilde{O}(m)$ time, as well. 

We next turn to Step (2) where the clusters $\mathcal{C}_i$ are computed. Computing the center set $Z_i$ is done in $\widetilde{O}(|Z_{i-1}|)$ time. For each $v \in V_{i-1}$, the computation of the sets $\mathcal{Q}_{i}(v)$ and $LE_i(v)$ can be computed in time $\widetilde{O}(|V(\mathcal{P}_{i-1}(v))|)$. Therefore, taking $\widetilde{O}(m)$ time, for all the $(i-1)$-clustered vertices. 

Finally, we consider Step (3) and show that the edge set $R_i$ to be computed in $\widetilde{O}(m)$ time, as follows. For every vertex $v$, we assume that the algorithm stores its incident edges $E_{i-1}(v) \subseteq R_{i-1}$ in increasing edge weights. We also store explicitly its paths $\mathcal{Q}_i(v)$. Let $e^*$ be the heaviest edge in $LastE(P'_j)$ for $P'_j \in \mathcal{Q}_i(v)$. Then, the set $E_{i}(v)=\{ (u,v) \in E_{i-1}(v) ~\mid~ W((u,v))> W(e^*), u \in V_i\}$ can be defined in $\widetilde{O}(\deg(v,G))$. The final set $R_i$ is given by $R_i=\bigcup_{v \in V_i}E_i(v) \setminus H_i$, leading to a total running time of $\widetilde{O}(m)$ as desired. Theorem \ref{thm:linear-time-opt-randomized} follows by combining with Lemma \ref{lem:size} and Cor. \ref{cor:final-spanner}. 

\paragraph{Distributed Implementation.} We consider the standard congest model of distributed computing \cite{peleg2000distributed}. In this model, the algorithm works in synchronous rounds, and in each round, every neighboring pair can exchange $O(\log n)$ bits of information. We show that each phase $i$ can be implemented in $\widetilde{O}(1)$ congest rounds, which together with Lemma \ref{lem:size} and Cor. \ref{cor:final-spanner} establishes Theorem \ref{thm:linear-time-opt-randomized-distributed}. 

Start with Step (1). We assume that at the beginning of the phase $i$, each vertex $v$ knows if it is in $V_{i-1}$. In the latter case, it also knows its path collection $\mathcal{Q}_{i-1}(v)$, and its incident edges in $R_{i-1}$. 
Each $v \in V_{i-1}$ locally computes its sampled set $\mathcal{S}_{i-1}(v) \subseteq \mathcal{Q}_{i-1}(v)$, and sends the path information  $\mathcal{S}_{i-1}(v)$ to its neighbors. Since $|V(\mathcal{S}_{i-1}(u))|=O(i \log n)$, this can be done in $O(i\log n)$ rounds. 

From this point on, the vertex $v$ locally implements the MIS computation of Step (1.2) as it obtains all the necessary information, i.e., the path collection $\bigcup_{u \in N(v)\cap V_{i-1}}\mathcal{S}_{i-1}(u)$. As $v$ knows the weight of its incident edges, it can also apply the shortcut procedure and obtain the final set $\mathcal{P}^*_{i-1}(v)$. Altogether, Step (1) is implemented in $\widetilde{O}(1)$ rounds. To implement Step (2), each center $s \in Z_{i-1}$ locally samples itself into $Z_i$ with probability of $p$. Every sampled center $s$ notifies its cluster members. Recall that the collection of trees $\{T_{i-1}(s), s \in S\}$ are of depth at most $i$. In addition, these tree are vertex independent (and therefore also edge-disjoint). 
\begin{observation}\label{obs:efficient-dist-trees}
Each vertex $s$ can send an $O(\log n)$-bit messages to its cluster $C_{i-1}(s)$ within $O(i)$ rounds, in parallel for every $s \in Z_{i-1}$.
\end{observation}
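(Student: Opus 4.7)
My plan is to leverage invariants (II) and (III) of the clustering: each tree $T_{i-1}(s)$ has depth at most $i-1$, and the trees $\{T_{i-1}(s)\}_{s \in Z_{i-1}}$ are vertex-independent in the sense of Def.\ \ref{def:indep}. Using these two properties, a single pipelined top-down broadcast run simultaneously in every tree will finish within $O(i)$ CONGEST rounds.

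The main structural ingredient I would establish first is a strengthening of vertex-independence: whenever a vertex $v$ lies in two distinct clusters $C_{i-1}(s)$ and $C_{i-1}(s')$, the sets of descendants of $v$ in $T_{i-1}(s)$ and in $T_{i-1}(s')$ are disjoint. Indeed, any common descendant $w$ would yield two distinct root-to-$w$ tree paths both passing through $v \neq w$, contradicting vertex-independence at $w$. Two corollaries will then drive the round complexity analysis: (a) for every vertex $v$, the children of $v$ across the different trees containing $v$ are pairwise distinct neighbors in $G$; and (b) for every edge $(u,v) \in E(G)$, there is at most one tree in which $u$ is the parent of $v$, and at most one tree in which $v$ is the parent of $u$.

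With these properties in hand, the protocol is a straightforward pipelined BFS run simultaneously in all trees. In round $r \in \{1, \ldots, i-1\}$, every vertex $v$ that already holds the message of some center $s$ and lies at depth $r-1$ in $T_{i-1}(s)$ forwards that message (tagged with the identifier of $s$) to its $T_{i-1}(s)$-children. Corollary (a) ensures that the messages $v$ sends in round $r$ target pairwise distinct neighbors, at most one per tree containing $v$ at that depth. Corollary (b) ensures that every edge $(u,v)$ carries at most one message in each direction per round, so the CONGEST bandwidth of $O(\log n)$ bits per edge per round is respected. Since every tree has depth at most $i-1$, all cluster members receive the broadcast within $O(i)$ rounds, as required.

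I do not expect any genuine obstacle here; the only subtlety I anticipate is phrasing the upgrade from \emph{vertex-disjointness of root-to-$v$ paths} to \emph{disjointness of subtrees rooted at $v$}, since it is this stronger form of independence that yields the $O(1)$ per-round edge congestion and hence the $O(i)$-round conclusion.
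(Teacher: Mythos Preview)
Your proposal is correct and is essentially the paper's own argument: the paper also derives your corollary (b) directly from vertex-independence (if $u$ were the parent of $v$ in two trees, the two root-to-$v$ paths would share $u\neq v$), concludes that each edge lies in at most two trees, and then performs the parallel top-down broadcast in $O(i)$ rounds. Your intermediate ``strengthening'' about disjoint descendant subtrees is correct but not needed --- corollary (b) already follows immediately from vertex-independence applied at the child vertex.
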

\begin{proof}
We show the each edge $e=(u,v)$ can appear on at most two trees in $\{T_{i-1}(s), s \in S\}$. To see this, observe the by the vertex-independence property, we have that $u$ can be the parent of $v$ in at most one tree. In the same manner, $v$ can be the parent of $u$ in at most one tree. Altogether, $(u,v)$ appears on at most two trees. This allows the sources of $Z_{i-1}$ to broadcast a message on all the trees $\{T_{i-1}(s), s \in S\}$ in parallel within $O(i)$ rounds (each edge needs to send at most two messages, which can be simply done by simulating each broadcast round using two rounds).
\end{proof}
This allows all $(i-1)$-clustered vertices to learn which of their centers are sampled into $Z_i$. Every vertex $u \in V_{i-1}$ can then send to each neighbor $v \in N(v)$ the list of sampled centers $\{h(P) ~\mid~ P \in \mathcal{S}_{i-1}(u)\}$. As $|\mathcal{S}_{i-1}(u)|=O(\log n)$, this can be done within a single communication round. Each vertex $v$ can then locally computes its nearest $K_f$ sampled paths $\mathcal{Q}_{i}(v) \subseteq \mathcal{P}^*_{i-1}(v)$. In the case where $\mathcal{P}^*_{i-1}(v)$ contains less than $K_f$ sampled paths, $v$ is declared as $i$-unclustered. Overall, in the output format of the clustering, each $i$-clustered vertex $v$ knows $\mathcal{Q}_{i}(v)$ as desired. Finally, the edge sets $LE_i(v)$ and $\{(u,v) \in R_i, u \in N(v)\}$ can be locally defined by $v$ with no farther communication. This completes the implementation details of phase $i$, which can indeed be implemented in $\widetilde{O}(1)$ rounds as desired.

\vspace{-8pt}\section{Parallel Implementations (Proof of Theorem \ref{thm:linear-time-opt-randomized-PRAM})}\label{sec:parallel-imp}
The PRAM algorithm consists of a sequence of rounds, where each round consists of a number of computations (e.g., memory access or RAM operations) that are independent of each other, and can be performed in parallel. The total number of rounds is denoted as the \emph{depth} of the computation, and the total number of computations, over all rounds, is denoted by the \emph{work} of the computation. 

The model of parallel computation considered in this section is the CRCW PRAM. This model supports concurrent read and write operations. If multiple processors write to the same entry, an arbitrary one takes effect. 
Throughout, we consider unweighted $n$-graph with $m$ edges, and the algorithms will be using $\widetilde{O}(m)$ space and processors. It might be instructive to allocate a processor $p_{v,u}$ for each edge $(v,u)$ (viewed as a directed edge) that is responsible for the $(v,u)$-``part" in the computation needed for vertex $v$. Throughout, we assume that the graph is unweighted (this simplifies the construction and analysis of Sec. \ref{sec:spanners-general-meta}). 
Throughout, we consider unweighted $n$-graph with $m$ edges, and the algorithms will be using $\widetilde{O}(m)$ space and processors. It might be instructive to allocate a processor $p_{v,u}$ for each edge $(v,u)$ (viewed as a directed edge) that is responsible for the $(v,u)$-``part" in the computation needed for vertex $v$. Throughout, we assume that the graph is unweighted (this simplifies the construction and analysis of Sec. \ref{sec:spanners-general-meta}). 

%
%
%
The key challenge is in implementing Step (1) of Alg. $\VertexFTSpanner$, i.e., the computation of a maximal independent set of vertex-disjoint paths, connecting $v$ to its adjacent clusters in $\mathcal{C}_{i-1}$. Note that this computation is part of the \emph{local} computation in the distributed implementation, therefore this challenge arises only in the parallel setting. 

Letting $N'(v)=\{u_1,\ldots, u_\ell\}$ be the $(i-1)$-clustered neighbors of $v$, then the sequential implementation iterates over the paths of $\mathcal{P}=\bigcup_{j=1}^{\ell}\mathcal{S}_{i-1}(u_j)$ and greedily adds \emph{independent} paths to the current set $\mathcal{P}_{i-1}(v)$. 
Computing the set $\mathcal{P}_{i-1}(v)$ boils down into a lexicographic-first MIS computation according to some ordering of the paths.  In this view, two paths $P$ and $P'$ are \emph{independent} if $V(P)\cap V(P')=\emptyset$, otherwise, $P,P'$ are considered to be \emph{neighbors}. We first provide a modification for the meta-algorithm that will be a more convenient starting point for the parallel implementation.

\paragraph{Modified Step 1.} Focusing on $v \in V_{i-1}$, initially set $\mathcal{P}_{i-1}(v)=\mathcal{Q}_{i-1}(v)$. 
Let $\mathcal{P}=\bigcup_{j=1}^{\ell}\mathcal{S}_{i-1}(u_j)$ and consider the paths in $\mathcal{P}$ in some arbitrary ordering. Iterate over the paths $P$ of $\mathcal{P}$ in a fixed arbitrary ordering, and add a path $P \circ (t(P),v)$ to output set of independent paths $\mathcal{P}_{i-1}(v)$, only if $P$ does not intersect with the current set of paths already taken into $\mathcal{P}_{i-1}(v)$.  That is, whereas in the description of Alg. $\VertexFTSpanner$, the algorithm iterates over $u_j \in N'(v)$, and considers the paths in $\mathcal{S}_{i-1}(u_j)$ one after the other, here the paths $\mathcal{P}$ are considered in any arbitrary ordering. 

The size analysis is unaffected by this modified step, and we only need to reprove Lemma \ref{lem:key}.
Fix an edge $(u,v)$ where $u=u_j$ is the $j^{th}$ neighbor of $v$ in $V_{i-1}$. 
For every $\ell \in \{1,\ldots, |\mathcal{S}_{i-1}(u)|\}$, let $\mathcal{P}^\ell_{i-1}(v,u)$ be the current set of independent paths just before inspecting the $\ell^{th}$ path in $\mathcal{S}_{i-1}(v)$ denoted by $P^{\ell}(u)$. We consider a weaker version of Claim \ref{cl:aux2} by showing:
\begin{claim}\label{cl:aux2-parallel}
If at least half of the paths in $\mathcal{Q}_{i-1}(u)$ \emph{do not intersect} with $\mathcal{P}^\ell_{i-1}(v,u)$, then with \emph{constant probability} $P^{\ell}(u)$ is added to $\mathcal{P}_{i-1}(v))$. 
\end{claim}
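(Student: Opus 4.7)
} The plan is to mimic the proof of the original Claim \ref{cl:aux2}, but restricted to a single sample rather than the whole batch of $\Theta(\log n)$ samples in $\mathcal{S}_{i-1}(u)$. The key observation I will exploit is that each path in $\mathcal{S}_{i-1}(u)$ is drawn independently and uniformly at random from $\mathcal{Q}_{i-1}(u)$, so in particular the specific sample $P^{\ell}(u)$ is a fresh uniform draw, independent of every other random choice made by the algorithm.

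The first step is to argue that $P^{\ell}(u)$ is (probabilistically) independent of the set $\mathcal{P}^{\ell}_{i-1}(v,u)$. This set is determined by which earlier paths in the fixed global ordering survived the greedy test; each such test is a deterministic function of the paths inspected so far, and all of those paths are drawn from sampling batches $\{\mathcal{S}_{i-1}(u')\}$ using randomness independent of $P^{\ell}(u)$ (including the other $\ell-1$ samples from $\mathcal{S}_{i-1}(u)$ itself). Hence conditioning on any realisation of $\mathcal{P}^{\ell}_{i-1}(v,u)$ leaves the marginal distribution of $P^{\ell}(u)$ unchanged.

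The second step uses the hypothesis directly. By assumption, at least $|\mathcal{Q}_{i-1}(u)|/2$ of the paths in $\mathcal{Q}_{i-1}(u)$ are vertex-disjoint from (the realised) $\mathcal{P}^{\ell}_{i-1}(v,u)$. Since $P^{\ell}(u)$ is uniform on $\mathcal{Q}_{i-1}(u)$ and independent of this set, with probability at least $1/2$ the sampled path $P^{\ell}(u)$ lies in the disjoint half, i.e.\ $V(P^{\ell}(u)) \cap V(\mathcal{P}^{\ell}_{i-1}(v,u)) = \emptyset$. Note also that $u \in V(P^{\ell}(u))$ for every $P \in \mathcal{Q}_{i-1}(u)$, so the hypothesis already forces $u \notin V(\mathcal{P}^{\ell}_{i-1}(v,u))$ and there is no additional obstruction at the last edge $(u,v)$.

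Finally, when the vertex-disjointness test succeeds, the modified Step~1 adds $P^{\ell}(u) \circ (u,v)$ to $\mathcal{P}_{i-1}(v)$ by definition of the procedure. Combining the three steps gives the claimed constant (in fact, $\geq 1/2$) probability bound. The main conceptual subtlety to get right in the write-up is the independence argument in step one, since $\mathcal{P}^{\ell}_{i-1}(v,u)$ depends on many correlated events across different neighbours $u'$ of $v$; the clean way to present it is to observe that the processing order is a fixed (oblivious) permutation and that $P^{\ell}(u)$ does not appear among the samples generating $\mathcal{P}^{\ell}_{i-1}(v,u)$, so independence is immediate from the product structure of the a priori sampling.
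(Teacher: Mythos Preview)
Your proposal is correct and follows exactly the same approach as the paper's own proof: the paper simply observes that since $P^{\ell}(u)$ is a uniform sample from $\mathcal{Q}_{i-1}(u)$, with probability at least $1/2$ it lands in the non-intersecting half and is therefore added. Your write-up is more careful than the paper's in spelling out the independence of $P^{\ell}(u)$ from $\mathcal{P}^{\ell}_{i-1}(v,u)$ (via the product structure of the sampling and the obliviousness of the processing order), a point the paper leaves implicit; this extra care is warranted and does not change the argument.
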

\begin{proof}
Since the $\ell$'s path is sampled uniformly from $\mathcal{Q}_{i-1}(u)$ into $\mathcal{S}_{i-1}(u)$, there is a probability of $1/2$ that the sampled path does not intersect $\mathcal{P}^\ell_{i-1}(v,u)$. In the latter case, it is indeed added to the set $\mathcal{P}_{i-1}(v)$.
\end{proof}
We then define Case $(\star,\ell)$ in a similar manner as in the meta-algorithm but restricted to the set $\mathcal{P}^\ell_{i-1}(v,u)$. 
\begin{quote}

\textbf{Case $(\star,\ell)$:} \emph{At least a $3/4$-fraction of the paths in $\mathcal{Q}_{i-1}(u)$ do not contain $v$, and at least half of the paths in $\mathcal{Q}_{i-1}(u)$ intersect $\mathcal{P}^\ell_{i-1}(v,u)$. }
\end{quote}
The proof of Lemma \ref{lem:path-pairs} is exactly the same, concluding that:
\begin{lemma}\label{lem:path-pairs-parallel}
If Case $(\star,\ell)$ holds then there exists a collection of $3f$ pairs of paths $(Q'_a,P'_a) \in \mathcal{Q}_{i-1}(u,v) \times \mathcal{P}^\ell_{i-1}(v,u)$ such that $V(Q'_a) \cap V(P'_a)\neq \emptyset$, for every $a \in \{1,\ldots, 3f\}$.  Therefore, $(u,v)$ is $(f,i)$ protected. 
\end{lemma}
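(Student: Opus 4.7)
The plan is to replay the proof of Lemma \ref{lem:path-pairs} almost verbatim, substituting the prefix $\mathcal{P}^{\ell}_{i-1}(v,u)$ for $\mathcal{P}_{i-1}(v,u)$, and then upgrade the resulting $3f$ intersecting pairs to $(f,i)$-protection via the walk-construction of Corollary \ref{cor:aux3easy}. The only properties of $\mathcal{P}_{i-1}(v,u)$ used in the original argument are that its paths are pairwise vertex-disjoint except at the common endpoint $v$, and that none of them contains $u$. Both are inherited by $\mathcal{P}^{\ell}_{i-1}(v,u)$: the first because the modified Step 1 only admits a path into $\mathcal{P}_{i-1}(v)$ when it is disjoint from the current set, and the second because Case $(\star,\ell)$ restricts attention to paths of $\mathcal{Q}_{i-1}(u,v)$, none of which contain $v$.

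Concretely, I would initialize $\mathcal{A}_1 = \mathcal{Q}_{i-1}(u,v)$ and $\mathcal{B}_1 = \mathcal{P}^{\ell}_{i-1}(v,u)$, and at iteration $a \in \{1,\ldots,3f\}$ pick any $Q'_a \in \mathcal{A}_a$, pick a path $P'_a \in \mathcal{B}_a$ that meets it, then update $\mathcal{A}_{a+1} = \{Q \in \mathcal{A}_a : V(Q) \cap V(P'_a) = \emptyset\}$ and $\mathcal{B}_{a+1} = \mathcal{B}_a \setminus \{P'_a\}$. Existence of $P'_a$ is again proved by induction on $a$: by construction each path in $\mathcal{A}_a$ avoids all previously chosen $P'_1,\ldots,P'_{a-1}$, while Case $(\star,\ell)$ forces it to intersect $\mathcal{P}^{\ell}_{i-1}(v,u)$, so it must hit $\mathcal{B}_a$. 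The counting bound is identical to the original: every $P'_a$ has at most $k$ non-$v$ vertices, each killing at most one path of the vertex-independent family $\mathcal{Q}_{i-1}(u,v)$, so $|\mathcal{A}_{a}| \geq K_f/4 - k(a-1) \geq 3f$ throughout, using $K_f = 20kf$ and $k \geq 2$. This yields the required $3f$ pairs with the stated vertex-disjointness properties.

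To deduce $(f,i)$-protection, I would stitch each pair into a $u$-$v$ walk $X_a = Q'_a[u,q_a]\circ P'_a[q_a,v]$ of hop-length at most $2i-1$, exactly as in Corollary \ref{cor:aux3easy}. Each vertex outside $\{u,v\}$ lies on at most one $Q'_a$ (by vertex-independence of $\mathcal{Q}_{i-1}(u)$) and at most one $P'_a$ (by vertex-disjointness of $\mathcal{P}^{\ell}_{i-1}(v,u)$), so Observation \ref{obs:Suff} applies with $\alpha = 2$, and Claim \ref{cl:aux3} (which depends only on the pair structure, not on $\ell$) certifies that $(u,v)$ is heavier than every edge on these walks. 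I expect the main obstacle to be verifying that each $P'_a$ actually lands in $H_i$ in the parallel/modified regime, since here the MIS ordering is an arbitrary (random) permutation rather than an edge-weight ordering, so the clean $LE_i(v)$-based argument must be re-examined. For unweighted graphs the shortcut step is unneeded, hence $P'_a$ survives unchanged in $\mathcal{P}^{*}_{i-1}(v)$, and the analogue of Claim \ref{cl:aux4} — that if any $P'_a$ failed to enter $H_i$, then $(u,v)$ would be pushed into $R_i$, contradicting the hypothesis that $(u,v)$ is removed at phase $i$ — closes the gap and yields the protection conclusion.
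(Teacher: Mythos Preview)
Your approach to the pair-existence part is exactly the paper's: the paper's entire proof is the single sentence ``the proof of Lemma~\ref{lem:path-pairs} is exactly the same,'' and you reproduce that greedy pair-extraction argument faithfully, with the correct substitution $\mathcal{P}_{i-1}(v,u)\rightsquigarrow\mathcal{P}^{\ell}_{i-1}(v,u)$ and the same counting $K_f/4-k\cdot 3f\geq 2kf\geq 3f$.

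You then go further than the paper by trying to justify the ``Therefore, $(u,v)$ is $(f,i)$ protected'' clause via the walk construction and a Claim~\ref{cl:aux4} analogue. This is the right instinct---the paper does not argue that clause separately, and indeed it only makes sense when read inside the proof of Lemma~\ref{lem:key}, where the true conclusion is ``protected \emph{or} $(u,v)\in R_i$.'' Your handling of this is essentially correct, with two small fixes. First, your stated reason for why no path of $\mathcal{P}^{\ell}_{i-1}(v,u)$ contains $u$ is garbled: you invoke that paths of $\mathcal{Q}_{i-1}(u,v)$ avoid $v$, which is the wrong family and the wrong vertex. The correct reason, as in the original proof of Lemma~\ref{lem:path-pairs}, is the ambient hypothesis $u\notin V(\mathcal{P}_{i-1}(v))$ of Lemma~\ref{lem:key}, which carries over to the prefix $\mathcal{P}^{\ell}_{i-1}(v,u)$. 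Second, your closing line ``contradicting the hypothesis that $(u,v)$ is removed at phase $i$'' imports a premise that is not part of Lemma~\ref{lem:path-pairs-parallel} as stated; it is a premise of the enclosing argument (Lemma~\ref{lem:stretch-finallem}/Lemma~\ref{lem:key}). Make that dependence explicit, and your write-up is cleaner than the paper's own one-line proof.
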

We are now ready to prove Lemma \ref{lem:key} for this modified step. 

\begin{proof}[Proof of Lemma \ref{lem:key} for the Modified Step 1.]
By Lemma \ref{lem:path-pairs-parallel} and Claim \ref{cl:aux1}, it is sufficient to restrict attention to the case where Case $(\star,\ell)$ does not hold for \emph{any} value of $\ell \in \{1,\ldots, |\mathcal{S}_{i-1}(v)|\}$, and in addition, $v$ does not appear on at least $1/4$ of the paths in $\mathcal{Q}_{i-1}(u)$. 

Therefore, we conclude that at least half of the paths in $\mathcal{Q}_{i-1}(u)$ \emph{do not intersect} with $\mathcal{P}^\ell_{i-1}(v,u)$ for every $\ell \in \{1,\ldots, |\mathcal{S}_{i-1}(u)|\}$. By Claim \ref{cl:aux2-parallel}, we then have that each $P^{\ell}(u) \circ (u,v)$ is added to $\mathcal{P}_{i-1}(v)$ with constant probability. Since each path $P^{\ell}(u)$ corresponds to an independent uniform sample in $\mathcal{Q}_{i-1}(u)$, w.h.p., one of these $P^{\ell}(u)$ is added to $\mathcal{P}_{i-1}(v)$, and therefore $u \in V(\mathcal{P}_{i-1}(v))$. 
\end{proof}

\paragraph{Parallel lexicographic-first MIS.}   Our parallel implementation of the modified Step 1 is based upon the influential work, Blelloch, Fineman and Shun \cite{BlellochFS12}. They showed that one can implement the lexicographic-first MIS w.r.t.\ a random ordering $\pi$ in $O(\log^2 n)$ depth and linear work. Shun et al. \cite{ShunGBFG15} also showed that one can compute a random permutation in linear-work and logarithmic depth. The MIS algorithm based random ordering of \cite{BlellochFS12} implements $O(\log n)$ parallel computations of independent sets w.r.t $\pi$ as follows:
\vspace{-0.1in}

\begin{algorithm}
\caption{Parallel greedy algorithm for maximal independent set (taken from \cite{BlellochFS12})} \label{alg:fully-parallel}
\begin{algorithmic}[1]
\Procedure{Parallel-greedy-MIS}{$G = (V,E)$, $\pi$ }
\If{ $|V| = 0$ } \Return $\O$
\Else
\State let $B$ be the set of vertices in $V$ with no earlier neighbors (based on $\pi$)
\State $V' = V \setminus (B \cup  N(B))$
\State \Return $B$ $\cup$ \Call{Parallel-greedy-MIS}{ $G[V']$, $\pi$ }
\EndIf
\EndProcedure
\end{algorithmic}
\end{algorithm}

\subsection{Parallel Implementation of the Modified Step 1}
The  sets $\mathcal{S}_{i-1}(v)$ for $v \in V_{i-1}$ are obtained by taking $O(\log n)$ independent samples uniformly at random from $\mathcal{Q}_{i-1}(v)$. This can be done in $\widetilde{O}(m)$ work and $\widetilde{O}(1)$ depth.

We focus on a vertex $v \in V_{i-1}$ and explain how to compute a set of independent $v$-paths $\mathcal{P}_{i-1}(v)$ using total work $\widetilde{O}(deg(v))$ and depth $\widetilde{O}(1)$. 
This is done by implementing the greedy MIS algorithm using a random ordering $\pi$ on the paths in $\mathcal{P}=\bigcup_{u \in N'(v)}\mathcal{S}_{i-1}(u)$. Recall that initially, we set $\mathcal{P}_{i-1}(v)=\mathcal{Q}_{i-1}(v)$. We first employ a cleanup step that filters out the paths in $\bigcup_{u \in N(v)\cap V_{i-1}}\mathcal{S}_{i-1}(u)$ that intersect with $\mathcal{Q}_{i-1}(v)$. Using hash table for the vertices in $V(\mathcal{Q}_{i-1}(v))$, each processor $p_{v,u_j}$, can filter out the paths in $\mathcal{S}(u_j)$ that intersect with $V(\mathcal{Q}_{i-1}(v))$. This can be done in $\widetilde{O}(1)$ depth in $\widetilde{O}(m)$ works as $|V(\mathcal{S}(u_j))|=\widetilde{O}(1)$.

At this point, we assume that $\mathcal{P}$ is the remaining set (that does not intersect $\mathcal{Q}_{i-1}(v)$), and now we get a clean MIS instance to solve over these paths. Our goal is to implement Alg. \ref{alg:fully-parallel} on a conflict graph in which each path $P \in \mathcal{P}$ is a node and an edge $(P,P')$ exists if $V(P)\cap V(P')\neq \emptyset$. Since $|\mathcal{P}|$ might be as large as $deg(v)$, and since each path in $\mathcal{P}$ might intersect with many other paths, the conflict graph might have $O(deg^2(v))$ edges, leading to a total work of $O(deg^2(v)\cdot n)$. We show that this algorithm can be in fact be implemented in $O(deg(v))$ work even if the max degree of the conflict graph induced by $\mathcal{P}$ is high. 
Implementing Alg. \ref{alg:fully-parallel} of \cite{BlellochFS12} boils down into efficiently computing:
\begin{enumerate}
\item the subset of paths $\mathcal{IS} \subseteq \mathcal{P}$ with no earlier intersecting paths (based on $\pi$), 
\item the remaining set of paths $\mathcal{U}=\{ P \in \mathcal{P} ~\mid~ V(P) \cap V(\mathcal{IS})=\emptyset\}$. 
\end{enumerate}

\paragraph{Computing the independent paths $\mathcal{IS}$.} For each path $P \in \mathcal{P}$, let $\pi(P)$ be the index of $P$ in the ordering $\pi$. We keep a hash table of tuples $(w, \pi(P))$ for each $P \in \mathcal{P}$ and $w \in V(P)$, and use the semi-sorting algorithm on this hash (see \cite{BastH91}, Lemma 5.4 of \cite{BaswanaS:07}). This allows us to compute for each $w \in V(\mathcal{P})$, the index $i_w$ that corresponds to the earliest path in $\mathcal{P}$ that contains $w$, i.e., $i_w=\min\{ \pi(P) ~\mid~ w \in V(P)\}$. We assign a processor $p_j$ for the $j^{th}$ path $P$ such that $\pi(P)=j$. This processor determines if $P \in \mathcal{IS}$, by iterating over all vertices $w \in V(P)$, and adding $P$ to the list only if $j=i_w$ for every $w \in V(P)$. 
This can be done in $\widetilde{O}(deg(v))$ work and poly-logarithmic depth. 

\paragraph{Computing the remaining paths $\mathcal{U}$.} Store a hash table of $V(\mathcal{IS})$. Each processor $p_j$ adds $P_j$ to $\mathcal{U}$ only if none of the vertices in $P_j$ appears in this hash table. 
As this takes $\widetilde{O}(|P_j|)=\widetilde{O}(1)$ time per processor, and since we need $\widetilde{O}(deg(v))$ processors, overall this computation takes $\widetilde{O}(deg(v))$ work and $\widetilde{O}(1)$ depth. This completes the description of computing the independent set of paths $\mathcal{P}_{i-1}(v)$. Since we restrict attention to the unweighted case, there is now no need for the applying the shortcut procedure, and we can safely assign $\mathcal{P}^*_{i-1}(v)=\mathcal{P}_{i-1}(v)$.

Since Alg. \ref{alg:fully-parallel} has $\widetilde{O}(1)$ depth and each recursive call is implemented in $\widetilde{O}(m)$ work, this completes the desired complexity bounds for Step 1. 

\subsection{Parallel Implementation of Step 2}
The sampling of the centers $Z_i \subseteq Z_{i-1}$ can be done in linear time and constant depth. 
Again we focus on a single vertex $v$ and explain how to compute its cluster-path set $\mathcal{Q}_{i}(v) \subseteq \mathcal{P}_{i-1}(v)$ using $\widetilde{O}(deg(v))$ processors and depth $\widetilde{O}(1)$. 
In the case where $v$ is $(i-1)$-unclustered this set is empty, and otherwise it consists of $K_f$ vertex disjoint paths (except for the endpoint $v$). The input is given by $\mathcal{P}^*_{i-1}(v)$ where each $O(\log n)$-length path $P \in \mathcal{P}^*_{i-1}(v)$ is indexed by the neighbor $u_j \in N(v)$ such that $P \in \mathcal{S}_{i-1}(u_j)$.  We assign a value $val(P)=1$ if $h(P)$ is sampled into $Z_i$, and $val(P)=0$ otherwise. We apply a standard sorting on this path list based on $val(P)$, and count the number of paths $\mathcal{P}^*_{i-1}(v)$ with $val(P)=1$. If there are less than $K_f$ such paths, then $v$ is marked as $i$-unclustered. Otherwise, the first $K_f$ paths with $val(P)=1$ are added to $\mathcal{Q}_i(v)$. This completes the description of the second step.  

\paragraph{Parallel Implementation of Step 3.} Since $G$ is unweighted, $LE_i(v)$ is empty for $i$-clustered vertices. 
For an $i$-unclustered vertex $v$, the edge set $LE_i(v)$ consists of all $(\{v\} \times V')\cap E$ edges where $V'=\bigcup_{P \in \mathcal{P}_{i-1}(v)}V(P)$. Since the edges of $G$ are stored by hash table and as $|V'|=\widetilde{O}(deg(v))$, the edge set $LE_i$ can be defined using $|\mathcal{P}_{i-1}(v)|\leq deg(v)$ processors. Each processor would be responsible for one path $P \in \mathcal{P}_{i-1}(v)$, and will mark the edges in $\{v\}\times V(P)$ using $O(|P|)$ hash accesses. The edges of $R_i$ are simply edges in $(V_i \times V_i) \cap E$ that not in $H_i$. Thus the computation of this set can be done in $O(m)$ work and constant depth. 
Since the implementation of the parallel MIS algorithm with respect to the random ordering $\pi$ is \emph{equivalent} to computing the lexicographic-first MIS w.r.t $\pi$, the correctness follows immediately. This completes the proof of Theorem \ref{thm:linear-time-opt-randomized-PRAM}(1).

\section{Efficient Constructions of Vertex Connectivity Certificates}\label{sec:certificate}

\begin{definition}[Connectivity Certificates]\label{def:conn}
Given a graph $G=(V,E)$, a certificate of $\lambda$-vertex connectivity is a spanning subgraph $H \subseteq G$ such that $H$ is $\lambda$-vertex connected if and only if $G$ is $\lambda$-vertex connected. A certificate is 
\emph{sparse} if it contains $O(\lambda)$ edges. 
\end{definition}
As observed in \cite{Par19}, any $f$-EFT $(2k-1)$ spanner $H$ is also a $(f+1)$-edge connectivity certificate. The same proof holds for vertex faults, for completeness we show:
\begin{observation}
Any vertex $f$-FT $(2k-1)$ spanner $H \subseteq G$ is also an $(f+1)$-vertex connectivity certificate.
\end{observation}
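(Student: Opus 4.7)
The plan is to verify both directions of the biconditional in Definition \ref{def:conn}. The backward direction is immediate: since $H \subseteq G$, any $u$-$v$ path in $H \setminus F$ is also a $u$-$v$ path in $G \setminus F$, so if $H$ is $(f+1)$-vertex connected then so is $G$. The substantive direction is the forward one: assuming $G$ is $(f+1)$-vertex connected, show that $H$ is as well.

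The main idea is to reduce connectivity in $H \setminus F$ to connectivity in $G \setminus F$, one edge at a time, using the FT-spanner stretch guarantee. Fix any subset $F \subseteq V$ with $|F| \leq f$. It suffices to show that every two vertices $u,v \in V \setminus F$ are connected in $H \setminus F$. Since $G$ is $(f+1)$-vertex connected, $G \setminus F$ is connected, so there exists a $u$-$v$ path $P = (u = x_0, x_1, \ldots, x_r = v)$ in $G \setminus F$. For each edge $(x_{j-1}, x_j) \in E(P)$, the spanner property gives
\[
\dist_{H \setminus F}(x_{j-1}, x_j) \leq (2k-1) \cdot W((x_{j-1}, x_j)) < \infty,
\]
so $x_{j-1}$ and $x_j$ lie in the same connected component of $H \setminus F$. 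Transitivity along the path $P$ then puts $u$ and $v$ in the same component of $H \setminus F$.

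Since this holds for every $F \subseteq V$ with $|F| \leq f$ and every pair $u,v \in V \setminus F$, we conclude that $H \setminus F$ is connected, i.e., $H$ is $(f+1)$-vertex connected, completing the forward direction.

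There is no real obstacle here; the proof is essentially a one-line invocation of Definition \ref{def:protect} on each edge of a witnessing path in $G \setminus F$. The only thing to be slightly careful about is that $F$ may contain vertices that are neither endpoints of any specific spanner edge under consideration, so one must apply the stretch bound with $F$ restricted to $V \setminus \{x_{j-1}, x_j\}$; but since $F \subseteq V \setminus \{u,v\}$ and the intermediate vertices $x_j$ are also in $V \setminus F$ by assumption (they lie on a path avoiding $F$), this restriction is automatic.
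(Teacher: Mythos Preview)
Your proof is correct and follows essentially the same approach as the paper: both argue that for every $F$ with $|F|\leq f$, connectivity of $u,v$ in $G\setminus F$ implies connectivity in $H\setminus F$ via the FT-spanner guarantee, and then deduce the certificate property. The paper phrases this as an invocation of Menger's theorem and applies the spanner inequality directly to the pair $(u,v)$ (since the definition of $f$-FT spanner already holds for arbitrary pairs, not just edges), whereas you spell out the edge-by-edge transitivity argument along a witnessing path; both are valid, and your version is simply more explicit.
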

\begin{proof}
We show that for every $F \subseteq f$, $u$-$v$ are connected in $H \setminus F$ iff they are connected in $G \setminus F$. This immediately follows by the definition of $f$-FT $(2k-1)$ spanners. By Menger's theorem, a pair of vertices $u,v$ are $\lambda$-vertex connected in $G$ iff $u$ and $v$ are connected in $G \setminus F$ for every subset $F \subseteq V \{u,v\}$, $|F|\leq \lambda-1$. Therefore, we get that $u$-$v$ are $(f+1)$-connected in $H$ iff they are $(f+1)$-connected connected in $G$. 
\end{proof}

\begin{corollary}
An $f$-VFT $O(\log n)$-spanner with nearly optimal sparsity provides an $(f+1)$-vertex connectivity certificates with nearly optimal sparsity. 
\end{corollary}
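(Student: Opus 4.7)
My plan is to simply instantiate the spanner constructions of the paper with the stretch parameter $k = \lceil \log n \rceil$ and then invoke the preceding observation as a black box. No new technical ingredients are needed: the corollary is essentially a parameter substitution.

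Concretely, I would apply Theorem~\ref{thm:linear-time-opt-randomized} (and its distributed/parallel counterparts, Theorem~\ref{thm:linear-time-opt-randomized-distributed} and Theorem~\ref{thm:linear-time-opt-randomized-PRAM}) with $k = \lceil \log n \rceil$. The resulting subgraph $H\subseteq G$ is an $f$-VFT $(2k-1)$-spanner of stretch $O(\log n)$ containing $\widetilde{O}(f^{1-1/k}\cdot n^{1+1/k})$ edges. The key numerical observation is that $n^{1/k} = 2^{(\log n)/k} = O(1)$ when $k = \Theta(\log n)$, so the size bound collapses to $\widetilde{O}(f \cdot n)$.

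Next, the preceding observation directly asserts that every $f$-VFT spanner is an $(f+1)$-vertex connectivity certificate (its proof goes through Menger's theorem and is independent of the stretch $k$). Setting $\lambda = f+1$, the subgraph $H$ is therefore a $\lambda$-vertex connectivity certificate of $G$ with $\widetilde{O}(\lambda n)$ edges, which matches the definition of a \emph{sparse} certificate up to polylogarithmic factors. Combining with the runtime guarantees of the three spanner theorems yields sparse $\lambda$-vertex certificates in $\widetilde{O}(m)$ sequential time, $\widetilde{O}(1)$ \textsf{CONGEST} rounds, and $\widetilde{O}(m)$ work with $\widetilde{O}(1)$ depth in CRCW PRAM.

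There is no genuine obstacle here: the stretch is entirely immaterial to the certificate property (since connectivity in $H\setminus F$ only asks whether $\dist_{H\setminus F}(u,v) < \infty$, not its actual value), and the only quantitative step is to kill the $n^{1/k}$ factor by choosing $k = \Theta(\log n)$. The mild price we pay is the $O(\log n)$ polylogarithmic factor hidden in the $\widetilde{O}(\cdot)$, which is why we obtain \emph{nearly} sparse rather than exactly sparse certificates.
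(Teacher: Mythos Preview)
Your proposal is correct and matches the paper's approach exactly: the corollary in the paper is stated without a separate proof, and the surrounding text does precisely what you do—invoke the preceding observation (any $f$-VFT spanner is an $(f{+}1)$-vertex connectivity certificate) and then set $k=\Theta(\log n)$ so that $n^{1/k}=O(1)$ and the nearly-optimal spanner size $\widetilde{O}(f^{1-1/k}n^{1+1/k})$ collapses to $\widetilde{O}(fn)$. Your additional remarks on the algorithmic consequences are exactly what the paper records in the subsequent Corollary~\ref{cor:cert}.
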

\noindent This in particular implies that computing nearly sparse $\lambda$-vertex connectivity certificates is a \emph{local} (rather than a global) task. Setting $k=O(\log n)$ and $\lambda=f$ in Theorem \ref{thm:linear-time-opt-randomized}, \ref{thm:linear-time-opt-randomized-distributed} and \ref{thm:linear-time-opt-randomized-PRAM} provides an $f$-vertex connectivity certificates with nearly optimal sparsity in nearly optimal time, in all three settings of computations: sequential, parallel and distributed. We have: 
\begin{corollary}\label{cor:cert}
There is a randomized algorithm that for any $n$-vertex graph $G=(V,E)$ and integer parameter $\lambda$ computes a $\lambda$-vertex connectivity certificates with $\widetilde{O}(\lambda n)$ edges in nearly optimal time complexity in the sequential, distributed and parallel settings. 
\end{corollary}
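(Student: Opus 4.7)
The plan is to derive Corollary~\ref{cor:cert} as a direct consequence of the preceding observation combined with the three main spanner theorems. Since any $f$-VFT $(2k-1)$-spanner of $G$ is an $(f+1)$-vertex connectivity certificate regardless of $k$, we are free to optimize the choice of $k$ solely to minimize the sparsity of the output subgraph; the certificate property will then come for free.

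First, I would set $f = \lambda - 1$ and $k = \lceil \log n \rceil$, so that $n^{1/k} = O(1)$ and $f^{1-1/k} = O(\lambda)$. Plugging these parameters into the sparsity bound of Theorem~\ref{thm:linear-time-opt-randomized}, namely $\widetilde{O}(f^{1-1/k} n^{1+1/k})$, yields a subgraph $H \subseteq G$ with $\widetilde{O}(\lambda n)$ edges, which matches the ``nearly sparse'' target of Definition~\ref{def:conn} up to polylogarithmic factors.

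Next, I would invoke each of the three construction theorems in turn with these parameters. Theorem~\ref{thm:linear-time-opt-randomized} immediately yields a sequential $\widetilde{O}(m)$-time algorithm. Theorem~\ref{thm:linear-time-opt-randomized-distributed} yields a $\widetilde{O}(1)$-round CONGEST algorithm, which is the first local distributed algorithm for sparse vertex connectivity certificates. Theorem~\ref{thm:linear-time-opt-randomized-PRAM}(1) yields a CRCW PRAM algorithm with $\widetilde{O}(m)$ work and $\widetilde{O}(1)$ depth, establishing part (2) of Theorem~\ref{thm:linear-time-opt-randomized-PRAM} along the way. In each case, applying the preceding observation to the output subgraph certifies it as a $\lambda$-vertex connectivity certificate of $G$.

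The only minor wrinkle is that Theorem~\ref{thm:linear-time-opt-randomized-PRAM} is stated only for unweighted graphs, but since vertex connectivity depends only on the underlying unweighted topology of $G$, we may apply the parallel construction to the unweighted version of the input without any loss in the certificate guarantee. No part of this reduction requires new technical work: the heavy lifting is entirely in the spanner constructions of Sections~\ref{sec:spanners-general-meta} and~\ref{sec:parallel-imp}, and the main ``obstacle''—if any—is simply verifying that the parameter choice $k = \Theta(\log n)$ is permitted by all three theorems, which it is since they hold for arbitrary $k \geq 1$.
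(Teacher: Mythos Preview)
Your proposal is correct and matches the paper's own argument essentially verbatim: the paper also sets $k=O(\log n)$ and $f=\lambda$ (you use the slightly tighter $f=\lambda-1$), then invokes Theorems~\ref{thm:linear-time-opt-randomized}, \ref{thm:linear-time-opt-randomized-distributed}, and~\ref{thm:linear-time-opt-randomized-PRAM} together with the observation that an $f$-VFT spanner is an $(f+1)$-vertex connectivity certificate. Your remark about the unweighted restriction in the PRAM setting is a nice extra clarification that the paper leaves implicit.
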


\section{Deterministic Constructions}\label{sec:derand}
In this section, we turn to consider the deterministic constructions of FT-spanners by means of derandomizing Alg. 
$\VertexFTSpanner$. Technic-wise, this is quite similar to the standard derandomization of the Baswana-Sen algorithm (both in the sequential and the distributed settings). This derandomization increases the sequential and the distributed running times by factor of $O(f)$. Providing deterministic constructions for FT-spanners that nearly match their randomized counterparts is an interesting open problem. 

\subsection{The Sequential Setting}
There are only two parts of Alg. $\VertexFTSpanner$ that relay on randomness. The first is in the definition of the sampled set $\mathcal{S}_{i-1}(v) \subseteq \mathcal{Q}_{i-1}(v)$ (Step (1.2)). The second is in the selection of the centers $Z_i \subset Z_{i-1}$ (Step (2.1). To eliminate the randomness of the first part, namely, Step (1.2), we simply omit it and set $\mathcal{S}_{i-1}(v)=\mathcal{Q}_{i-1}(v)$. The only purpose of computing the sampled set $\mathcal{S}_{i-1}(v)$ was for the sake of obtaining a nearly linear running time. 
Using the sets $\mathcal{Q}_{i-1}(u)$ instead, increases the running time to $\widetilde{O}(f m)$ as each set $\mathcal{Q}_{i-1}(u)$ consists of $K_f=20k f$ many paths, and it is processed by each neighbor $v$ of $u$. We next explain how to derandomize the second part of selecting the cluster centers $Z_i$. 

\paragraph{Computing Cluster Centers Deterministically.}
Recall that $\mathcal{P}^*_{i-1}(v)$ is a collection of vertex-disjoint paths from $Z_{i-1}$ to $v$, and that $Z'_{i-1}(v)=\{h(P)~\mid~ P \in \mathcal{P}^*_{i-1}(v)\}$ where $Z'_{i-1}(v)\subseteq Z_{i-1}$. 
In the randomized algorithm, the center set $Z_i \subset Z_{i-1}$ is computed by sampling each vertex in $Z'_{i-1}$ independently with probability of $p=(f/n)^{1/k}$. As can be seen in the analysis, it suffices for this (random) selection to satisfy two properties for every $i \in \{1,\ldots, k-1\}$:
\begin{enumerate}
\item $|Z_i|=|Z_{i-1}|\cdot p$, and 
\item for each vertex $v$ with $|Z'_{i-1}(v)|\geq c\cdot K_f \cdot \log n/p$, it holds that $|Z'_{i-1}(v) \cap Z_i|\geq K_f$. 
\end{enumerate}
Similarly to the standard Baswana-Sen, as observed in \cite{RodittyTZ05,Censor-HillelPS17,GhaffariK18}, the above two properties can be formulated as an instance to the hitting set problem, for which efficient deterministic solutions exist, such as the following:
\begin{theorem}[Deterministic Hitting Sets, Lemma 6 of \cite{AlonCC19}]\label{thm:det-hitting}
Let $1 \leq \Delta \leq n$ and $1 \leq \ell < poly(n)$ be two integers. Let $R_1,\ldots, R_\ell \subseteq R$ be subsets of vertices satisfying $|R_i|=\Theta(\Delta\cdot \log \ell)$ for every $i \in \{1, \ldots, \ell\}$. Then there is a deterministic algorithms that in $\widetilde{O}(\ell \Delta)$ time computes a set $R^* \subseteq R$ such that for every $i \in \{1,\ldots, \ell\}$ it holds that $R^* \cup R_i \neq \emptyset$ and $|R^*|\leq |R|/\Delta$. 
\end{theorem}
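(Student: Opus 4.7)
}

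The plan is to first verify existence by a random-sampling argument and then derandomize via the method of conditional expectations, carefully implemented so that the total running time is $\widetilde O(\ell \Delta)$. (I interpret the conclusion ``$R^*\cup R_i\neq\emptyset$'' as the intended ``$R^*\cap R_i\neq\emptyset$'', i.e.\ $R^*$ hits every $R_i$.)

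First, fix a sampling probability $p=c\log\ell/\Delta$ for a sufficiently large constant $c>0$. Build $R^*$ at random by including each element of $R$ independently with probability $p$. Then $\mathbb{E}[|R^*|]=p|R|$, and by a Chernoff bound $|R^*|\le 2p|R|\le |R|/\Delta$ fails only with probability $o(1)$ once we choose $c$ appropriately (recalling that we may pick $p$ at the smaller side of the desired size bound). For each $i$, since $|R_i|=\Theta(\Delta\log\ell)$, the probability that $R^*\cap R_i=\emptyset$ is at most $(1-p)^{|R_i|}\le e^{-p|R_i|}\le \ell^{-\Omega(1)}$, and a union bound over the $\ell$ sets shows that a hitting set of size $\le |R|/\Delta$ exists with positive probability.

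Second, to derandomize, I would order the elements of $R$ arbitrarily as $r_1,\dots,r_{|R|}$ and process them one by one, fixing the decision $x_t\in\{0,1\}$ for $r_t$ (included iff $x_t=1$). The natural pessimistic estimator after step $t$ is
\[
\Phi_t \;=\; \underbrace{\sum_{i\,:\,R_i\text{ not yet hit}} (1-p)^{|R_i\cap\{r_{t+1},\dots,r_{|R|}\}|}}_{\text{upper bounds $\Pr[\exists i:R^*\cap R_i=\emptyset]$}} \;+\; \Psi_t,
\]
where $\Psi_t$ is a standard Chernoff-style pessimistic bound (e.g.\ $\Pr[|R^*|>|R|/\Delta]$ via moment generating functions) conditioned on the first $t$ decisions. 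By the random-sampling analysis $\Phi_0<1$, and for each $t$ we pick $x_t\in\{0,1\}$ to minimize the conditional value, which keeps $\Phi_t\le\Phi_{t-1}<1$. At the end, $\Phi_{|R|}<1$ forces both the size bound and the hitting property to be satisfied.

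Finally, the main obstacle is to carry out this derandomization within $\widetilde O(\ell\Delta)$ time rather than the naive $\widetilde O(|R|\cdot \ell)$. The key observation is that the total number of (element, set) incidences is $\sum_i|R_i|=\Theta(\ell\Delta\log\ell)$, so with a single preprocessing pass we can store for each $r_t$ the list of indices $i$ with $r_t\in R_i$. When the decision for $r_t$ is made, only those sets' contributions to $\Phi$ change, and each update is $O(\log n)$ (multiplying by $(1-p)$ or by $0$ and refreshing a small data structure that maintains the sum $\sum_i(1-p)^{c_i}$, e.g.\ a Fenwick tree or bucketed counters indexed by the remaining count $c_i$). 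Summing over all decisions gives total work proportional to the number of incidences, i.e.\ $\widetilde O(\ell\Delta)$, and the Chernoff-part of $\Phi$ can be updated in $O(1)$ per step. This yields the claimed deterministic hitting set in $\widetilde O(\ell\Delta)$ time.
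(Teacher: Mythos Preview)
The paper does not prove this theorem; it is quoted verbatim as a black box from \cite{AlonCC19} (Lemma~6 there) and is only invoked in Section~\ref{sec:derand}. So there is no paper-side argument to compare against.

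Your sketch follows the standard route (random sampling plus conditional expectations with a pessimistic estimator), and the incidence-counting observation that the total update cost is $\widetilde O\bigl(\sum_i |R_i|\bigr)=\widetilde O(\ell\Delta)$ is the right idea. Two small points need tightening. First, your sampling probability $p=c\log\ell/\Delta$ is too large: with it $\mathbb{E}[|R^*|]=\Theta(\log\ell\cdot|R|/\Delta)$, so the asserted inequality ``$2p|R|\le |R|/\Delta$'' is false. The $\log\ell$ slack is already built into the hypothesis $|R_i|=\Theta(\Delta\log\ell)$, so $p=\Theta(1/\Delta)$ already suffices: then $(1-p)^{|R_i|}\le e^{-\Theta(\log\ell)}=\ell^{-\Theta(1)}$ while $\mathbb{E}[|R^*|]=\Theta(|R|/\Delta)$. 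Second, you iterate over all of $R$, but nothing in the statement bounds $|R|$ by $\widetilde O(\ell\Delta)$; to stay within the claimed time you should restrict the sweep to $R':=\bigcup_i R_i$ (elements outside $R'$ are never worth including), which has at most $\sum_i|R_i|=\widetilde O(\ell\Delta)$ elements. With these two corrections the argument goes through. Your reading of ``$R^*\cup R_i\neq\emptyset$'' as a typo for ``$R^*\cap R_i\neq\emptyset$'' is correct.
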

In our setting, we require each the intersection of each set $R_j$ with the hitting set $R^*$ to be sufficiently large (rather than non-empty). However, it is easy to formulate this requirement by slightly modifying the hitting set instance. Specifically, as an immediate corollary of Theorem \ref{thm:det-hitting}, we have:
\begin{corollary}\label{cor:fhittingsets}
Let $1 \leq \Delta,\beta \leq n$ and $1 \leq \ell < poly(n)$ be two integers. Let $R_1,\ldots, R_\ell \subseteq R$ be subsets of vertices satisfying $|R_i|=\Theta(\beta \cdot \Delta\cdot \log \ell)$ for every $i \in \{1, \ldots, \ell\}$. Then there is a deterministic algorithms that in $\widetilde{O}(\beta \cdot \ell \cdot \Delta)$ time computes a set $R^* \subseteq R$ such that for every $i \in \{1,\ldots, \ell\}$ it holds that $|R^* \cup R_i| \geq \beta$ and $|R^*|\leq |R|/\Delta$.
\end{corollary}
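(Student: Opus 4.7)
}
The natural plan is to reduce the ``$\beta$-fold hitting set'' problem to the standard hitting set problem of Theorem \ref{thm:det-hitting} by a disjoint splitting argument. Concretely, for each input set $R_i$ (of size $\Theta(\beta \cdot \Delta \cdot \log \ell)$), I would partition $R_i$ into $\beta$ pairwise disjoint subsets $R_{i,1},\ldots,R_{i,\beta}$, each of size $\Theta(\Delta \cdot \log \ell)$. The partitioning itself is trivial: scan $R_i$ once and place its elements into $\beta$ buckets of equal size.

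Next, I would apply Theorem \ref{thm:det-hitting} to the collection of $\ell' = \beta \cdot \ell$ sets $\{R_{i,j}\}_{i \in [\ell], j \in [\beta]} \subseteq R$, with the same parameter $\Delta$. Each set in this collection has size $\Theta(\Delta \cdot \log \ell) = \Theta(\Delta \cdot \log \ell')$ (since $\beta, \ell \leq \poly(n)$ so $\log \ell' = O(\log \ell)$, absorbed into the $\Theta$ constants). The theorem then yields, in time $\widetilde{O}(\ell' \cdot \Delta) = \widetilde{O}(\beta \cdot \ell \cdot \Delta)$, a deterministic set $R^* \subseteq R$ with $|R^*| \leq |R|/\Delta$ and $R^* \cap R_{i,j} \neq \emptyset$ for every $i \in [\ell]$, $j \in [\beta]$.

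Since the subsets $R_{i,1},\ldots,R_{i,\beta}$ are pairwise disjoint and each contributes at least one element to $R^* \cap R_i$, we immediately conclude $|R^* \cap R_i| \geq \beta$ for every $i \in [\ell]$, which is exactly the required guarantee. Both the size bound $|R^*| \leq |R|/\Delta$ and the runtime $\widetilde{O}(\beta \cdot \ell \cdot \Delta)$ come directly from the invocation of Theorem \ref{thm:det-hitting}.

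I do not expect any serious obstacle here: the only mildly delicate point is ensuring that the logarithmic factor $\log \ell'$ needed by Theorem \ref{thm:det-hitting} when applied to $\beta \ell$ sets still matches the hypothesis $|R_i| = \Theta(\beta \Delta \log \ell)$ on the original sets. This is handled by the polynomial bounds on $\beta$ and $\ell$, which absorb $\log \beta$ into $O(\log \ell)$, so the constants in the $\Theta(\cdot)$ need only be chosen large enough in the corollary's hypothesis.
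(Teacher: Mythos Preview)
Your proposal is correct and matches the paper's proof essentially line for line: partition each $R_i$ into $\beta$ disjoint pieces of size $\Theta(\Delta \log \ell)$, apply Theorem~\ref{thm:det-hitting} to the resulting $\beta\ell$ sets, and conclude $|R^* \cap R_i| \ge \beta$ from disjointness. Your remark on absorbing $\log(\beta\ell)$ into $\Theta(\log \ell)$ is a detail the paper leaves implicit.
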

\begin{proof}
Partition each $R_i$ set into $\beta$ disjoint sets $R_{i,1},\ldots, R_{i,\beta}$ of roughly equal-size of $\Theta(\Delta \log \ell)$. Apply the algorithm of Theorem \ref{thm:det-hitting} on the collection of $\ell\cdot \beta$ sets $\mathcal{R}=\{R_{i,j} ~\mid~ i \in \{1,\ldots,\ell\}, j \in \{1,\ldots, \beta\}\}$. This results in a set $R^*$ of cardinality $|R|/\Delta$ such that $R^* \cap R_{i,j}\neq \emptyset$ for every $R_{i,j} \in \mathcal{R}$. Consequently, for every $i \in \{1\ldots, \ell\}$ it holds that $|R^* \cap R_{i}|\geq \beta$ as desired. 
Since the number of sets is bounded by $\ell\cdot \beta$, the running time is $\widetilde{O}(\ell \beta \Delta)$.
\end{proof}
Consequently, we obtain a deterministic algorithm for computing the center set $Z_i$ that satisfies the two properties.
\begin{lemma}\label{lem:det-centers-phasei}
A subset $Z_i \subseteq Z_{i-1}$ that satisfies properties (1) and (2) can be computed deterministically time $\widetilde{O}(f^{1-1/k} n^{1+1/k})$. 
\end{lemma}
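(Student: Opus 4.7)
The plan is to reduce the center selection task to a single instance of the hitting-set problem, and then invoke Corollary \ref{cor:fhittingsets}. Set the hitting-set parameters to $\Delta = 1/p = (n/f)^{1/k}$ and $\beta = K_f = 20kf$; the universe is $R = Z_{i-1}$ and the collection of subsets will be indexed by the vertices that can become $i$-clustered, namely those $v \in V_{i-1}$ with $|Z'_{i-1}(v)| \geq c \cdot K_f \cdot \log n / p = c \cdot \beta \Delta \log n$. With these choices, property (1) will be guaranteed by the bound $|R^*| \leq |R|/\Delta = |Z_{i-1}| \cdot p$ from the corollary, while property (2) will correspond exactly to the requirement $|R^* \cap R_v| \geq \beta$.

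The first step is to build, for each eligible $v$, a set $R_v \subseteq Z'_{i-1}(v)$ of size exactly $\Theta(\beta \Delta \log n)$; for non-eligible $v$ (those with $|Z'_{i-1}(v)| < c\cdot K_f\cdot \log n/p$) no guarantee is needed so they are simply dropped from the instance. This truncation is just reading off $\Theta(\beta \Delta \log n)$ elements from each relevant set and costs $\widetilde{O}(n \cdot \beta \Delta) = \widetilde{O}(f^{1-1/k} n^{1+1/k})$ in aggregate, which already fits within the target budget. The total number of sets is $\ell \leq |V_{i-1}| \leq n$, which is $\poly(n)$ as required.

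The second step is to invoke Corollary \ref{cor:fhittingsets} on the collection $\{R_v\}$ to obtain $Z_i := R^* \subseteq Z_{i-1}$. The output satisfies $|Z_i| \leq |Z_{i-1}|/\Delta = |Z_{i-1}|\cdot p$, giving property (1), and $|Z_i \cap R_v| \geq \beta = K_f$ for each eligible $v$, which implies $|Z_i \cap Z'_{i-1}(v)| \geq K_f$ and gives property (2). The running time delivered by the corollary is $\widetilde{O}(\ell \cdot \beta \cdot \Delta) = \widetilde{O}(n \cdot k f \cdot (n/f)^{1/k}) = \widetilde{O}(k f^{1-1/k} n^{1+1/k})$, which matches the claimed bound (up to the hidden $\poly\log$ factors).

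The only subtle point, and the main thing to verify carefully, is that the truncation in the first step is consistent across phases: the sets $R_v$ are arbitrary size-$\Theta(\beta \Delta \log n)$ subsets of $Z'_{i-1}(v)$, and the guarantee $|Z_i \cap R_v| \geq K_f$ is therefore weaker than $|Z_i \cap Z'_{i-1}(v)| \geq K_f$ a priori, but since $R_v \subseteq Z'_{i-1}(v)$ the former \emph{implies} the latter, so property (2) as stated is indeed obtained. No other randomized ingredient of Step (2.1) is used, so this derandomization plugs in cleanly; summing over the $k$ phases preserves the $\widetilde{O}(f^{1-1/k} n^{1+1/k})$ bound per phase, which is dominated by the $\widetilde{O}(fm)$ time spent in the rest of the (derandomized) algorithm.
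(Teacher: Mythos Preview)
Your proposal is correct and follows essentially the same approach as the paper: set $R=Z_{i-1}$, take $R_v\subseteq Z'_{i-1}(v)$ for each eligible $v$, and invoke Corollary~\ref{cor:fhittingsets} with $\beta=K_f$, $\Delta=1/p$, $\ell\le n$ to obtain $Z_i=R^*$. Your explicit truncation of the $R_v$'s to size $\Theta(\beta\Delta\log n)$ and the observation that $R_v\subseteq Z'_{i-1}(v)$ suffices for property~(2) are details the paper glosses over, but the method is identical.
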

\begin{proof}
Let $R=Z_{i-1}$, and for every $v \in V_i$ let $R_v=Z'_{i-1}(v)$. The set $Z_i$ is computed by applying the algorithm of Cor. \ref{cor:fhittingsets} on the sets $\{R_v, v \in V_i\}$ with the parameters $\beta=K_f$, $\Delta=1/p$ and $\ell=|V_{i-1}|\leq n$, let $Z_{i}=R^*$ where $R^*$ is the output hitting set. 
We have that $|Z_{i}|=|Z_{i-1}|/\Delta=|Z_{i-1}|\cdot p$ and in addition, $|Z'_{i-1}(v) \cap Z_i|\geq K_f$ as desired. By plugging the parameters, we get that the running time is $\widetilde{O}(kf\cdot n \cdot (n/f)^{1/k})$.
\end{proof}
\noindent We are now ready to complete the proof of Theorem \ref{thm:suplinear-time-det}.
\begin{proof}
The correctness of the algorithm is immediate as well as the size analysis, for completeness we sketch the slight modifications for deterministic version of phase $i$. 

Each vertex $v \in V_{i-1}$ is defined as $i$-clustered if $|\mathcal{P}^*_{i-1}(v)|\geq \Theta(\beta \cdot \Delta \log n)$ where $\beta=K_f$ and $\Delta=1/p$ for $p=(f/n)^{1/k}$ (the sampling probability in the randomized procedure). Letting $\ell=\Theta(\beta \cdot \Delta \log n)$, for each $i$-clustered vertex $v$, we consider the set $\mathcal{P}^*_{i-1}(v)=\{P_1,\ldots, P_q\}$ ordered in increasing order based on the weight of the last edges of the paths. We then restrict attention to the first $\ell$ paths and define its set $R_v=\{h(P_i) ~\mid~ i \in \{1,\ldots, \ell\}$. Applying Alg. \ref{cor:fhittingsets} provides the set $Z_i$ and the cluster-paths of $v$ are given by the $K_f$ nearest paths in $\mathcal{P}^*_{i-1}(v)$ rooted at $s \in Z_i$. This allows us to the define the sets $LE_{i-1}(v)$ as in Eq. (\ref{eq:LEiv}). Since the largest index $i_v$ of a path in $\mathcal{P}^*_{i-1}(v)$ taken into $\mathcal{Q}_i(v)$ is at most $\ell$, we have that $|LE_i(v)|=\widetilde{O}(f^{1-1/k}n^{1/k})$. The same bound clearly holds for $i$-unclustered vertices as well. 
\\ \\
\noindent \textbf{Running time.} As we omitted the random sampling of Step (1.2), Step (1) is now implemented in time $\widetilde{O}(f m)$. To see this observe that for every $v$, we iterate over the collection $k$-hop paths $\bigcup_{u \in N(v)}\mathcal{Q}_{i-1}(u)$. Step (2.1) is now implemented by applying Lemma \ref{lem:det-centers-phasei} using $\widetilde{O}(f^{1-1/k} n^{1+1/k})=\widetilde{O}(m)$.  Remaining steps are unchanged and therefore the total running time is dominated by Step 1 which is now implemented in $\widetilde{O}(f m)$ time. The theorem follows.
\end{proof}

\subsection{The Distributed Setting}
Ghaffari and Kuhn \cite{GhaffariK18} presented the first local deterministic algorithms for graph spanners in the congest model. Combining this result with the breakthrough network decomposition result by Rohzon and Ghaffari \cite{RozhonG20} provides a deterministic algorithm for $(2k-1)$ spanners in $\widetilde{O}(1)$ time. We show that using 
\cite{RozhonG20} and \cite{GhaffariK18}, the sequential algorithm for Thm. \ref{thm:suplinear-time-det} can be implemented in $\widetilde{O}(1)$ time. As explained above, we omit the step of sampling the sets $\mathcal{S}_i(v)$ and simply use the collection of $K_f$ paths $\mathcal{Q}_i(v)$. This is the reason for increasing the running time by a factor of $f$. We next show that the center set $Z_i$ can be computed using dtributed algorithms for the hitting-set problem. 

\begin{definition}[The Distributed Hitting-Set Problem, \cite{GhaffariK18}]\label{def:dist-hitting-set}
Consider a graph $G=(V,E)$ with two special sets of vertices $L, R \subseteq V$ with the following properties: each vertex $\ell \in L$ knows a set of vertices $R(\ell) \subseteq R$ where $|R(\ell)|=\Theta(\Delta \log n)$ such that $\dist_{G}(\ell,r)\leq T$ for every $r \in R(\ell)$. Here, $\Delta$ and $T$ are two given parameters in the problem. Moreover, there is a $T$-round congest algorithm that can deliver one message from each vertex $r \in R$ to all nodes $\ell \in L$ for which $r \in R(\ell)$. (The same message is delivered to all vertices in $L$.) The objective in the hitting set problem is to select a subset $R^* \subseteq R$ such that $R^* \cap R(\ell)\neq \emptyset$ and $|R^*|\leq |R|/\Delta$.
\end{definition}

\begin{theorem}[\cite{GhaffariK18,RozhonG20}]\label{thm:hitting-alg}
There is a deterministic algorithm that in $\widetilde{O}(T)$ rounds solves the hitting set problem.
\end{theorem}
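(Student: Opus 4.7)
The plan is to invoke the two cited results essentially as a black box: the Ghaffari--Kuhn framework~\cite{GhaffariK18} reduces the distributed hitting-set problem to a polylogarithmic number of subproblems that can each be solved deterministically using a network decomposition, and~\cite{RozhonG20} supplies such a decomposition in $\widetilde{O}(1)$ rounds. The overhead $T$ comes only from the fact that each ``elementary'' communication step of the hitting-set algorithm must broadcast information along pairs $(\ell, r)$ with $r \in R(\ell)$, and by assumption this takes $T$ rounds per communication phase.

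First I would set up the LP relaxation of the hitting-set problem: assign a fractional weight $x_r \in [0,1]$ to every $r \in R$, with the constraint $\sum_{r \in R(\ell)} x_r \geq 1$ for every $\ell \in L$, and objective $\sum_r x_r \leq |R|/\Delta$. A uniform assignment $x_r = 1/\Delta$ is feasible because $|R(\ell)| = \Theta(\Delta \log n)$, so a standard randomized rounding that selects each $r$ with probability $x_r$ (boosted by an $O(\log n)$ factor) yields a valid hitting set of the desired size with high probability. The entire task therefore reduces to \emph{derandomizing} this rounding procedure in the congest model, which is precisely what \cite{GhaffariK18} address.

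Next I would invoke the derandomization machinery of \cite{GhaffariK18}: they show that one can replace the independent coin flips by a pairwise- (or $O(\log n)$-wise) independent hash family, and then use the method of conditional expectations with a pessimistic estimator to fix the hash seed bit by bit. Each bit-fixing step requires evaluating, at every vertex $\ell \in L$, a local estimator that depends only on the values of $\{x_r : r \in R(\ell)\}$ and the current partial seed. Computing the global sum of these estimators and broadcasting the chosen bit back are exactly the operations that the network decomposition enables: within each color class of the decomposition, disjoint clusters of polylogarithmic diameter can perform the estimator aggregation in parallel, and cycling through all $\widetilde{O}(1)$ color classes updates one seed bit globally. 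Since the seed has $O(\log^2 n)$ bits and each bit costs $\widetilde{O}(T)$ rounds (the $T$ factor to implement a round of communication between $\ell$ and its set $R(\ell)$), the total round complexity is $\widetilde{O}(T)$.

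The main technical obstacle---constructing a $(\mathrm{poly}\log n, \mathrm{poly}\log n)$ network decomposition deterministically in polylogarithmic rounds---is resolved by the breakthrough of Rozhon and Ghaffari~\cite{RozhonG20}, so I can cite it directly. The only care needed is to verify that the ``virtual'' graph on which the decomposition is built (whose vertex set is $L \cup R$ and whose edges encode the relation $r \in R(\ell)$) can be simulated in $\widetilde{O}(T)$ rounds per virtual round, which is guaranteed by the $T$-round broadcast primitive in Definition~\ref{def:dist-hitting-set}. Combining these pieces yields the claimed $\widetilde{O}(T)$-round deterministic algorithm.
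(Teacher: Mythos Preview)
The paper does not prove this theorem at all: it is stated purely as a black-box citation of \cite{GhaffariK18} combined with \cite{RozhonG20}, and the text immediately moves on to using it in Lemma~\ref{lem:centers-det-distributed}. So there is no ``paper's own proof'' to compare against.

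Your sketch is a fair summary of what the cited works do. Ghaffari--Kuhn indeed derandomize the natural randomized rounding via bounded-independence hash families and conditional expectations with a pessimistic estimator, and the missing ingredient of a deterministic polylog-round network decomposition is supplied by Rozho\v{n}--Ghaffari. One small inaccuracy: the network decomposition is not built on a virtual bipartite graph with vertex set $L\cup R$; it is computed on the underlying communication graph $G$ itself (or a bounded power of it), and the $T$-round broadcast primitive from Definition~\ref{def:dist-hitting-set} is used only to relay partial estimator values and seed bits between each $\ell\in L$ and the members of $R(\ell)$. This does not affect your round-complexity accounting, but it is worth stating correctly if you intend this as a genuine proof sketch rather than just a pointer to the literature.
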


\begin{lemma}\label{lem:centers-det-distributed}
A center set $Z_i \subseteq Z_{i-1}$ satisfying the two properties can be computed deterministically in $\widetilde{O}(f)$ congest rounds.
\end{lemma}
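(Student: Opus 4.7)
The plan is to reduce the selection of $Z_i$ to the distributed hitting set problem of Def.~\ref{def:dist-hitting-set} and invoke Theorem~\ref{thm:hitting-alg}. Following the partition trick of Cor.~\ref{cor:fhittingsets}, each candidate $i$-clustered vertex $v$ (those with $|Z'_{i-1}(v)| \geq \Theta(K_f \log n / p)$) locally partitions its set $Z'_{i-1}(v) \subseteq Z_{i-1}$ into $K_f = \Theta(kf)$ disjoint subsets $R_{v,1}, \ldots, R_{v,K_f}$ of size $\Theta(\Delta \log n)$, where $\Delta = 1/p = (n/f)^{1/k}$. This partition is computed purely locally at $v$ since $\mathcal{Q}_{i-1}(v)$, and hence $Z'_{i-1}(v)$, is already known from the previous phase.

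Next I would verify the CONGEST delivery premise of Def.~\ref{def:dist-hitting-set} with parameter $T = O(k)$. By invariants~(II)--(III), the clusters $\mathcal{C}_{i-1}$ induce vertex-independent trees $\{T_{i-1}(s) : s \in Z_{i-1}\}$ of depth at most $k-1$. Exactly as in Obs.~\ref{obs:efficient-dist-trees}, every graph edge belongs to at most two such trees, so every center $s$ can broadcast a single $O(\log n)$-bit message to its cluster $C_{i-1}(s)$ in $O(k)$ rounds, simultaneously for all $s \in Z_{i-1}$. Applying Theorem~\ref{thm:hitting-alg} on the resulting hitting set instance (with $L$ identified with the $n \cdot K_f$ virtual constraints $\{(v,j)\}$ and sets $\{R_{v,j}\}$, all supported within distance $k$ via the cluster trees) returns a set $Z_i := R^*$ with $|Z_i| \leq |Z_{i-1}|/\Delta = |Z_{i-1}| \cdot p$, giving property~(1), and $Z_i \cap R_{v,j} \neq \emptyset$ for every $(v,j)$, which after union over $j$ yields $|Z_i \cap Z'_{i-1}(v)| \geq K_f$, giving property~(2).

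The main obstacle will be in routing the hitting-set algorithm's communications through the cluster trees in the CONGEST model: since each physical vertex $v$ plays the role of $K_f$ virtual constraint nodes, a single logical round in which each constraint $(v,j)$ exchanges feedback with its center may require aggregating or forwarding up to $K_f$ messages across edges shared by several cluster trees. Invariant~(III) guarantees that the $K_f$ tree paths leaving $v$ are internally vertex-disjoint (hence edge-disjoint), and Obs.~\ref{obs:efficient-dist-trees} guarantees that each graph edge is shared by at most two trees overall, so the per-round multiplexing costs only an $\widetilde{O}(K_f) = \widetilde{O}(f)$ factor. Combined with the $\widetilde{O}(T) = \widetilde{O}(k)$ logical rounds of Thm.~\ref{thm:hitting-alg}, this yields a total of $\widetilde{O}(K_f \cdot k) = \widetilde{O}(f)$ CONGEST rounds, matching the claimed bound.
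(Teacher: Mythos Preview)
Your proposal is correct and follows essentially the same approach as the paper: partition each $Z'_{i-1}(v)$ into $K_f$ pieces (the trick of Cor.~\ref{cor:fhittingsets}), reduce to the Ghaffari--Kuhn distributed hitting-set problem, and absorb the $K_f=\Theta(kf)$ blowup from the virtual constraints into an $\widetilde{O}(f)$ round bound. The paper's own proof is in fact terser---it just writes ``each vertex simulates $K_f$ vertices; naively this can be done in $\widetilde{O}(f)$ rounds''---so your explicit verification of the delivery premise via Obs.~\ref{obs:efficient-dist-trees} and your discussion of the per-round multiplexing are additional detail, not a different route.
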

\begin{proof}
Set $\Delta'=\log n k \cdot f^{1-1/k}n^{1/k}$ and $\Delta=(n/f)^{1/k}$. 
Let $R=\{v \in V ~\mid~ |R_{i-1}(v)|=\Theta(\Delta')\}$ and set $L=Z_i$. 
We now slightly modify these sets to fit the distributed hitting-setting of \cite{GhaffariK18} as in Cor. \ref{cor:fhittingsets}. For every $v \in R$, we partition $R_{i-1}(v)$ into disjoint $q=20kf\log n$ sets $R^1_{i-1}(v), \ldots, R^q_{i-1}(v) \subset R_{i-1}(v)$ of roughly equal size $\Theta(\Delta'/q)=\Theta(\Delta)$. 

A vertex $v$ is defined as $i$-clustered if $|Z'_{i-1}(v)|\geq c\cdot\log n K_f \cdot 1/p$ where $p=(f/n)^{1/k}$. For every vertex $v$ define $R_v=Z'_{i-1}(v)$. Note that each $R_v \subseteq Z_{i-1}$. Next, we partition each $R_v$ into $K_f$ parts $R_{v,1}, \ldots, R_{v,K_f}$, each of size $\Theta(\log n/p)$. We then apply the algorithm of Theorem \ref{thm:hitting-alg} where each vertex simulates $K_f$ vertices. Naively this can be done in $\widetilde{O}(f)$ rounds, resulting in a hitting set $Z_i \subseteq Z_{i-1}$ such that: (i) $|Z_i|\leq |Z_{i-1}|\cdot p$ and (ii) 
$|R_v \cap Z_i|\geq K_f$ as desired.  
\end{proof}
\noindent We are now ready to complete the proof of the deterministic construction.
\begin{proof}[Det. Alg for Theorem \ref{thm:linear-time-opt-randomized-distributed}]
We focus on phase $i$ and show that it can be implemented in time $\widetilde{O}(f)$ rounds.
By Lemma \ref{lem:centers-det-distributed} it is sufficient to consider the implementation of Phase 1.
To compute the set $\mathcal{P}^*_i(v)$, it is sufficient for each $v \in V_{i-1}$ to receive the collection of $\mathcal{Q}_{i-1}(u)$ paths from each of its neighbors $u \in N(v)$. Since $|\mathcal{Q}_{i-1}(u)|=O(k f)$ and each path $P \in \mathcal{Q}_{i-1}(u)$ has at most $i-1$ edges, overall $|V(\mathcal{Q}_{i-1}(u))|=O(k^2 f)$. We note that one can shave some $k$ factors by a more delicate analysis. This allows each vertex to locally compute  $\mathcal{P}^*_i(v)$. Overall, the running time is dominated by Phase 1, and therefore takes $\widetilde{O}(f)$ rounds.
\end{proof}



\paragraph{Acknowledgment.} I am very much grateful to Greg Bodwin for preliminary insightful discussions. I am also grateful to Michael Dinitz for sparking my interest in this problem, back then at 2011, and even more so recently in his inspiring talk at ADGA 2021.

\bibliographystyle{alpha}
\bibliography{refs}

\appendix

\section{Missing Proofs}\label{sec:missproof}

\APPENDOBSCLUSTCONT
\APPENDOBSCLUSTINDUC
\APPENDCLUSTPROB
\APPENDCLUSTINV
\APPENDMONO

\end{document}